\newcommand*{\mailto}[1]{\href{mailto:#1}{\nolinkurl{#1}}}
\newtheorem{theorem}{Theorem}[section]
\newtheorem{lemma}[theorem]{Lemma}
\newtheorem{remark}[theorem]{Remark}
\newcommand{\R}{\mathbb{R}}
\newcommand{\Z}{\mathbb{Z}}
\newcommand{\C}{\mathbb{C}}
\newcommand{\M}{\mathbb{M}}
\newcommand{\nn}{\nonumber}
\newcommand{\be}{\begin{equation}}
\newcommand{\ee}{\end{equation}}
\newcommand{\bea}{\begin{eqnarray}}
\newcommand{\eea}{\end{eqnarray}}
\newcommand{\ul}{\underline}
\newcommand{\ol}{\overline}
\newcommand{\pa}{\partial}
\newcommand{\id}{\mathbb{I}}
\newcommand{\I}{\mathrm{i}}
\newcommand{\E}{\mathrm{e}}
\newcommand{\clos}{\mathop{\mathrm{clos}}}
\newcommand{\re}{\mathop{\mathrm{Re}}}
\newcommand{\im}{\mathop{\mathrm{Im}}}
\DeclareMathOperator{\res}{Res}
 \newcommand{\noprint}[1]{}
\newcommand{\si}{\sigma}
\newcommand{\la}{\lambda}
\numberwithin{equation}{section}
\newcommand{\sigI}{\begin{pmatrix} 0 & 1 \\ 1 & 0 \end{pmatrix}}
\begin{document}

\title[Long-Time Asymptotics for the Toda Shock Problem]{Long-Time Asymptotics for the Toda Shock Problem: Non-Overlapping Spectra}

\author[I. Egorova]{Iryna Egorova}
\address{B. Verkin  Institute for Low Temperature Physics\\ 47,Nauky ave\\ 61103 Kharkiv\\ Ukraine}
\email{\href{mailto:iraegorova@gmail.com}{iraegorova@gmail.com}}

\author[J. Michor]{Johanna Michor}
\address{Faculty of Mathematics\\ University of Vienna\\
Oskar-Morgenstern-Platz 1\\ 1090 Wien\\ Austria}
\email{\href{mailto:Johanna.Michor@univie.ac.at}{Johanna.Michor@univie.ac.at}}
\urladdr{\href{http://www.mat.univie.ac.at/~jmichor/}{http://www.mat.univie.ac.at/\string~jmichor/}}

\author[G. Teschl]{Gerald Teschl}
\address{Faculty of Mathematics\\ University of Vienna\\
Oskar-Morgenstern-Platz 1\\ 1090 Wien}
\email{\href{mailto:Gerald.Teschl@univie.ac.at}{Gerald.Teschl@univie.ac.at}}
\urladdr{\href{http://www.mat.univie.ac.at/~gerald/}{http://www.mat.univie.ac.at/\string~gerald/}}

\dedicatory{To Vladimir Aleksandrovich Marchenko with deep admiration on the occasion of his 95th birthday}

\keywords{Toda lattice, Riemann--Hilbert problem, shock wave}
\subjclass[2010]{Primary 37K40, 37K10; Secondary 37K60, 35Q15}
\thanks{Research supported by the Austrian Science Fund (FWF) under Grants No.\ Y330, V120,
and by the grant "Network of Mathematical Research 2013--2015"}
\thanks{Zh. Mat. Fiz. Anal. Geom. {\bf 14}, 406--451 (2018)}

\begin{abstract}
We derive the long-time asymptotics for the Toda shock problem using the nonlinear steepest descent
analysis for oscillatory Riemann--Hilbert factorization problems. We show that the half-plane of space/time variables
splits into five main regions: The two regions far outside where the solution is close to the free backgrounds.
The middle region, where the solution can be asymptotically described by a two band solution, and two regions
separating them, where the solution is asymptotically given by a slowly modulated two band
solution. In particular, the form of this solution in the separating regions verifies a conjecture from
Venakides, Deift, and Oba from 1991.			
\end{abstract}

\maketitle

\section{Introduction}
The investigation of shock waves in the Toda lattice goes back at least to the numerical works of Holian and Straub \cite{hs} and Holian, Flaschka, and McLaughlin \cite{hfm}.
A theoretical investigation was later on done by Venakides, Deift, and Oba \cite{vdo} employing the Lax--Levermore method. As their main result they showed (in the case of some special
symmetric initial conditions) that in a sector $|\frac{n}{t}|<\xi^\prime_{cr}$ the solution can be asymptotically described by a period two solution, while in a sector $|\frac{n}{t}|>\xi_{cr}$
the particles are close to the unperturbed lattice. For the remaining region $\xi^\prime_{cr}<|\frac{n}{t}|<\xi_{cr}$ the solution was conjectured to be asymptotically close to a
modulated single-phase quasi-periodic solution but this case was not solved there. Despite some follow-up publications by Bloch and Kodama \cite{bk,bk2} and
Kamvissis \cite{km0} this problem remained open. The aim of the present paper is to fill this gap. Our method of choice will be the formulation of the inverse scattering problem
as a Riemann--Hilbert problem and an application of the nonlinear steepest descent analysis developed by Deift and Zhou \cite{dz} based on earlier ideas from Manakov \cite{ma}
and Its \cite{its}. For more on its history and an overview of this method applied to the Toda lattice in the classical case of constant background we refer to \cite{KTb} (cf. also \cite{km,KTc}) and the
references therein. Soon after the introduction of this method Deift, Kamvissis, Kriecherbauer, and Zhou \cite{dkkz} applied it to another steplike situation, the Toda rarefaction problem.
However, only the case $t\to\infty$ with $n$ fixed was considered there. In fact, asymptotics in the $(n,t)$ plane require an extension of the original nonlinear steepest descent analysis
based on a suitably chosen $g$-function as first introduced in Deift, Venakides, and Zhou \cite{dvz}. Recently this was done for the modified Korteweg--de Vries equation by
Kotlyarov and Minakov \cite{KM,KM2,mi} and for the Korteweg--de Vries equation by two of us jointly with Gladka and Kotlyarov \cite{egkt}. However, all these works have in
common that the spectra of the underlying Lax operators overlap and hence the associated Riemann surface is simply connected. While Riemann--Hilbert problems on
nontrivial Riemann surfaces have a long tradition, see e.g.\ the monograph by Rodin \cite{ro}, the nonlinear steepest descent analysis in such situations was developed only
recently by Kamvissis and one of us \cite{kt,kt2} (see also \cite{KTc,MT}). It is our main novel feature in the present paper to formulate the problem on a Riemann surface
formed by combining both spectra and working on this surface. More precisely, in the most interesting region $\xi_{cr,1}<|\frac{n}{t}|<\xi_{cr}$ we will work on a dynamically
adapted surface.

To describe our results in more detail we recall that the Toda shock problem consists of studying the long-time asymptotics of solutions of the doubly infinite Toda lattice
\begin{align} \label{tl}
	\begin{split}
\dot b(n,t) &= 2(a(n,t)^2 -a(n-1,t)^2),\\
\dot a(n,t) &= a(n,t) (b(n+1,t) -b(n,t)),
\end{split} \quad (n,t) \in \Z \times \R,
\end{align}
with so called steplike shock initial profile
\begin{align} \label{ini1}
\begin{split}	
& a(n,0)\to a_1, \quad b(n,0) \to b_1, \quad \mbox{as $n \to -\infty$}, \\
& a(n,0)\to a,\  \quad b(n,0) \to b, \quad \ \mbox{as $n \to +\infty$},
\end{split}
\end{align}
where the background Jacobi operators with constant coefficients
\begin{align} \label{H1} 
\begin{split}
(H y)(n)& = a y(n-1) +by(n) + a y(n+1), \\ 
(H_1 y)(n)& =a_1 y(n-1) +b_1y(n) + a_1 y(n+1),
\end{split} \quad n\in\Z,
\end{align}
have spectra with the following mutual location: $\inf\si(H_1)<\inf\si(H)$. These spectra can either overlap or not, and it produces essentially different types  of asymptotical behavior of the solution.

For the steplike case in the general situation $\si(H_1)\neq \si(H)$
there are two principal cases distinguished by the conditions $\inf \si(H_1)<\inf \si(H)$ (the Toda shock problem) and $\inf \si(H_1)>\inf \si(H)$ (the Toda rarefaction problem).
As mentioned before, the Toda shock problem was studied partly in \cite{vdo} for non-overlapping background spectra of equal length, and the Toda rarefaction problem in \cite{dkkz} using the Riemann--Hilbert problem approach
for finite $n$ only, as $t\to \infty$, under the restriction that the spectra are again equal in length, non-overlapping, and that the discrete spectrum is symmetric with respect to $0$.
An overview on the asymptotic solution in the general situation can be found in \cite{m16}.

In this study we analyze the asymptotical behavior of the solution of the Toda shock problem in the space-time half-plane
$(n,t)\in\Z\times \R_+$ in the case of arbitrary non-overlapping background spectra $\sup\si(H_1)<\inf\si(H)$. 
For $t<0$, the lattice behaves as a solution of the so called Toda rarefaction problem and will be considered in a forthcoming paper. 
We consider the value $\xi:=\frac{n}{t}$ as a {\it slow variable} and propose the precise form of the solution in a vicinity of the rays $\xi=const$ as usual.
We only compute the leading terms of the long-time asymptotics of the solutions, but in all principal regions of the space-time half-plane, excluding small transition regions.
To simplify our exposition, we assume that no eigenvalues are present in the domain $\R \setminus [\inf \sigma(H_1), \sup \sigma(H)]$. They can easily be added using the techniques developed in \cite{KTc}. We suppose that there is one eigenvalue in the gap $(\sup \sigma(H_1), \inf \sigma(H))$ to compare our result with the results of \cite{vdo}. 
We will also not provide detailed error estimates or study the case of overlapping spectra but defer these to forthcoming papers.

\begin{figure}[ht]
\centering
\includegraphics[width=12cm]{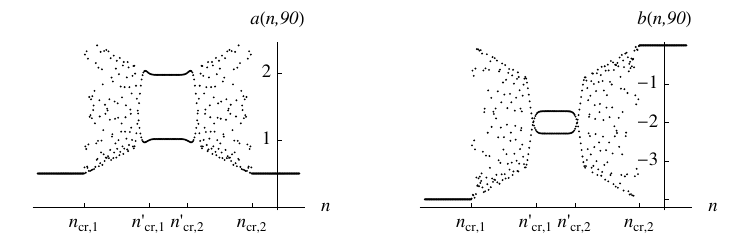}
\caption{The case for a pure step with $\sigma(H_1)=[-5,-3]$ and $\sigma(H)=[-1,1]$, without 
discrete spectrum.}\label{fig1}
\end{figure}

In Fig.~\ref{fig1} the numerically computed solution corresponding to the initial 
condition $a_1=1/2$, $b_1=-4$, $a=1/2$, and $b=0$ is shown. The left picture depicts the 
function $n \mapsto a(n,t)$ at a frozen time $t=90$. In areas where the function 
seems to be continuous this is due to the fact that we have plotted a large number 
of particles (around $800$) and also due to the $2$-periodicity in space. So one can
think of the two lines in the middle region as the even- and odd-numbered particles of the lattice.
 
Let us give a short qualitative description of our result. 
There are five principal regions on the half plane $(n,t)$ divided by rays $n/t = \tilde \xi$, with $\tilde \xi = \xi_{cr,1}$,$ \xi_{cr,1}^\prime, \xi_{cr, 0}$, $\xi_{cr}^\prime, \xi_{cr}$ where $\xi_{cr,1}<\xi_{cr,1}^\prime< \xi_{cr, 0}  <\xi_{cr}^\prime<\xi_{cr}$. 
  In the domain $\xi>\xi_{cr}$, the solution is asymptotically close to the constant right background solution $\{a, b\}$, and in the domain $\xi<\xi_{cr,1}$ it is close to the left background $\{a_1, b_1\}.$ 
In the domain $\xi_{cr}^\prime< \xi <\xi_{cr}$, there 
appears a monotonous smooth function $\gamma(\xi) \in \R$ such that
 $\gamma(\xi_{cr}^\prime)=\sup \sigma(H_1)$, $\gamma(\xi_{cr})=\inf \sigma(H_1)$.
When the parameter $\xi$ starts to decay from the point $\xi_{cr}$, the point $\gamma(\xi)$ ``opens'' a band 
$[\inf \sigma(H_1),\gamma(\xi)]$ (the Whitham zone). This interval  and $\sigma(H)$ can be treated as the bands of a (slowly modulated) two band solution of the Toda lattice, which turns out to give the leading asymptotical term of our solution with respect to large $t$. This two band solution is defined uniquely by its initial divisor. We compute this divisor precisely via the values of the right transmission coefficient on the interval $[\inf \sigma(H_1),\gamma(\xi)]$ (see formulas \eqref{divizor}, \eqref{Deltaj}, \eqref{import6}, \eqref{aee}, and \eqref{bee} below). Thus, in a vicinity of any  ray $\frac{n}{t}=\xi$ the solution of \eqref{tl}--\eqref{ini1} is asymptotically finite-gap (Theorem~\ref{theor1}).
This asymptotical term also can be treated as a function of $n$, $t$, and $\frac{n}{t}$ in the whole domain $t(\xi_{cr}^\prime+\varepsilon)<n<t(\xi_{cr}-\varepsilon)$. 
A numerical comparison between the solution and the corresponding asymptotic formula in this region is shown in Fig.~\ref{fig2}.
\begin{figure}[ht]
\centering
\includegraphics[width=12cm]{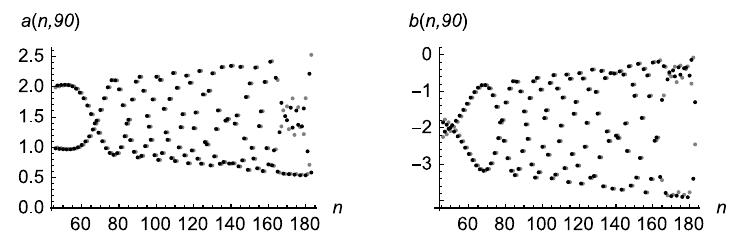}
\caption{Comparison between the solution (black) and the asymptotic formula (gray) in the region $\xi_{cr}^\prime<\xi<\xi_{cr}$.}\label{fig2}
\end{figure}
Next, in the domains $\xi_{cr, 0} < \xi < \xi_{cr}^\prime$ and $\xi_{cr,1}^\prime < \xi < \xi_{cr, 0}$, 
the asymptotic of the solution of \eqref{tl}--\eqref{ini1} is described by two finite-gap solutions. They are 
connected with one and the same intervals $\sigma(H_1)$ and $\sigma(H)$ and the initial divisors (or shifts of the phase) do not depend on the slow variable $\xi$, but differ due to the presence of the soliton. The situation in the 
domain $\xi_{cr,1}<\xi < \xi_{cr,1}^\prime$ is similar to the Whitham zone described above. There 
appears a monotonous smooth function $\gamma_1(\xi) \in \R$ such that $\gamma_1(\xi_{cr,1})=\sup \sigma(H)$, $\gamma_1(\xi_{cr,1}^\prime)=\inf \sigma(H)$.
The finite-gap asymptotic here is again local along the ray, and is defined by the intervals $\sigma(H_1)$ and 
$[\gamma_1(\xi), \sup \sigma(H)]$.

We do not study the transitional regions in vicinities of the points $\inf \sigma(H_1)$ and $\sup \sigma(H)$, but one can expect the appearance of asymptotical solitons here (see \cite{bdmek}). We emphasize that the RH problem with jumps on several disjoint intervals was first treated rigorously in \cite{deift}.  Among the results of this 
seminal paper was a formula for the leading term of the asymptotics for coefficients of the respective Jacobi matrix, given in terms of a quotient of theta functions. 
However, from that formula it was hard to see that it is a finite-gap Jacobi operator, 
which was later shown in \cite{ep}. In contradistinction to the situation considered in \cite{deift, ep}, in the present paper 
the jump contour of the limiting RH problem depends on the variable $\xi$, which dictates a special choice 
of the $g$-functions to replace the phase functions. We choose these $g$-functions as linear combinations of Abel integrals
of the second and third kind such that we can easily control the lines where $\re g=0$
(cf.\ Fig.~ \ref{fig:Reg} below). To study the specific properties of the $g$-functions in detail it is convenient to use standard properties of the associated Riemann surface, which pushed us to consider the RH problem on 
the Riemann surface. We  emphasize that this approach leads to quite simple and natural asymptotic 
formulas for the solution of 
\eqref{tl}--\eqref{ini1}, which are exact finite-gap solutions of the Toda lattice 
considered in a small vicinity of the ray $\xi=const$.

\section{Statement of the Riemann--Hilbert problems}

To set the stage we describe the class of initial data which we study. Without loss of generality, by shifting and scaling of the spectral parameter of the Jacobi spectral equation
\be \label{jse}
(\mathcal H(t)\psi)(n) :=a(n-1,t)\psi(n-1) + b(n,t)\psi(n) +a(n,t)\psi(n+1)=\la\psi(n) 
\ee
we can reduce the asymptotics of the initial data  to
\begin{align} \label{ini}
\begin{split}	
& a(n,0)\to \frac{1}{2}, \quad b(n,0) \to 0, \quad \mbox{as $n \to +\infty$}, \\
& a(n,0)\to d, \quad b(n,0) \to - c, \quad \mbox{as $n \to -\infty$},
\end{split}
\end{align}
where $c,d \in \R_+$ are  constants satisfying the conditions
\be\label{cond7}
c>1,\quad 0<2d < c-1.
\ee
The spectra of the free (background) Jacobi operators \eqref{H1} are given now by $\si(H_1)=[-c-2d,-c+2d]$ and $\si(H)=[-1,1]$. 
We suppose that the initial data decay to their backgrounds exponentially fast
\be \label{decay}
 \sum_{n = 0}^{\infty} \E^{V   n} \Big(|a(-n,0) - d| + |b(-n,0)+c|+|a(n,0) - \frac{1}{2}| + |b(n,0)|\Big) < \infty,
\ee
where for a small positive $\varepsilon$
\be\label{cont12}\cosh V = \max\left\{c+2d ,\,(1+c)(2d)^{-1}\right\} +2\varepsilon.
\ee
Let $a(n,t)$, $b(n,t)$ be the unique solution of the Cauchy problem \eqref{tl} with initial condition of the type \eqref{ini}--\eqref{cont12}. It is known (\cite[Lemma 3.2]{emt3}, \cite{Teschl2}) that the decay condition \eqref{decay} is preserved by the time evolution
of the Toda lattice, and therefore for any fixed $t$ the solution $a(n,t)$, $b(n,t)$ is exponentially close to the background constant asymptotics as $n\to\pm\infty$.

The spectrum of the Jacobi operator $\mathcal H(t)$ consists of an (absolutely) continuous part
$[-c-2d,-c+2d] \cup [-1, 1]$ of two nonintersecting bands of spectra of multiplicity one,
plus possibly a finite number of eigenvalues. 
For simplicity we assume in addition to \eqref{decay} and \eqref{cond7} that
\be \label{cond15}
 \mbox{the discrete spectrum of $\mathcal H(0)$ consists of one point $\lambda_0 \in (-c+2d,-1)$}
\ee
such that we can compare our results with \cite{vdo}, where a single soliton is present. It is easy to extend our result
to an arbitrary finite number of eigenvalues using standard techniques \cite{KTa}.

\subsection{Elements of scattering theory}
In this paper we will use either left or right scattering data of the operator $\mathcal H(t)$, depending on which region of the
space-time half plane we investigate, and apply the Riemann--Hilbert (RH) problem approach in 
vector form (cf.\ \cite{dkkz}). To this end we recall some facts from
scattering theory of Jacobi operators with steplike backgrounds from \cite{emtstp2}. 
Instead of the complex plane with a cut along the continuous spectrum consider the spectral data 
of $\mathcal H(t)$ on the upper sheet of the Riemann surface $\M$ connected with the function
\be \label{R1/2}
R^{1/2}(\la)= -\sqrt{(\la^2-1)\big((\la+c)^2-4d^2)\big)},
\ee
where $\sqrt{\la}=|\sqrt{\la}|\E^{\frac{\I \arg(\la)}{2}}$, $-\pi<\arg(\la)<\pi$, is the standard root with branch cut along $(-\infty,0]$.
A point on $\M$ is denoted by $p=(\la, \pm)$, $\la \in \C$, with $(\infty, \pm): = \infty_\pm$.
The projection onto $\C \cup \{\infty\}$ is denoted by $\pi(p)=\la$.
The sheet exchange map is given by
$
p^*=
(\la,\mp)  \text{ for } p=(\la,\pm).$
The sets
\begin{align*}
\Pi_U = \{(\la,+) \mid \la \in \C \setminus (\sigma(H_1) \cup \sigma(H))\} \subset \M, \quad
\Pi_L = \{p^* \mid p \in \Pi_U\},
\end{align*}
are called upper, lower sheet, respectively. Denote
\be\aligned \label{siul}
\Sigma_{u}&=\{p=(\la + \I 0,+)\},\quad \Sigma_{\ell}=\{p=(\la - \I 0,+)\}, \quad\la \in \sigma(H),\\
\Sigma_{1,u}&=\{p=(\la + \I 0,+)\},\quad \Sigma_{1,\ell}=\{p=(\la - \I 0,+)\}, \quad\la \in \sigma(H_1),\endaligned
\ee
and $\Sigma=\Sigma_{u}\cup \Sigma_{\ell}$,   $\Sigma_1=\Sigma_{1,u}\cup \Sigma_{1,\ell}$. We consider $\Sigma$ and $\Sigma_1$
as clockwise oriented contours, when looking on the upper sheet. For any function $f(p)$ holomorphic in a neighborhood of $ \Gamma:=\Sigma_1 \cup \Sigma$ on $\Pi_U$
and continuous up to the boundary, we consider its value on the contour as 
\be \label{fpprime}
f(p)= \lim_{p^\prime \in \Pi_U \to p} f(p^\prime), \quad p \in \Gamma.
\ee
The points $p=(\la + \I 0,+)$ and $\overline p=(\la - \I 0,+)$ are called symmetric points of $\Gamma$.

On $\Pi_U$, introduce two new spectral variables $z(p)$ and $z_1(p)$, with $|z(p)|<1$ and $|z_1(p)|<1$, by
\be \label{z_1}
z(p) = \la - \sqrt{\la^2 - 1}, \quad z_1(p)= \frac{1}{2d}\big(\la + c - \sqrt{(\la+c)^2-4d^2}\big).
\ee
These variables are different Joukovski transformations of the spectral parameter
\[
\la=\frac{1}{2}\left(z + z^{-1}\right)= - c + d\left(z_1 + z_1^{-1}\right).
\]
The functions $y(\la, n)=z(p)^n$ and $y_1(\la,n)=z_1(p)^{-n}$
are the ``free exponents'' connected to the background operators $H$ and $H_1$, respectively.

On $\clos\Pi_U:=\Pi_U \cup \Gamma$, there exist Jost solutions $\psi(p,n,t)$ and $\psi_1(p,n,t)$ of the equation
\be\label{imp90}\mathcal H(t)\phi(p,n,t)=p\, \phi(p,n,t),\quad p\in \clos\Pi_U,
\ee
which asymptotically look like the free solutions of the background equations,
\[
\lim_{n\to \infty} z^{-n}(p)\psi(p,n,t) =1, \quad
\lim_{n\to -\infty} z_1^{n}(p)\psi_1(p,n,t) =1, \quad p \in \clos\Pi_U.
\]
These solutions  satisfy 
\be\label{SP1}\aligned \psi(\ol p, n, t)&=\ol{\psi(p,n,t)},\ \ p\in\Sigma;\quad \ \psi(p,n,t)\in\R,\ \ p\in\R\setminus \sigma(H), \\
\psi_1(\ol p, n, t)&=\ol{\psi_1(p,n,t)},\  p\in\Sigma_1;\quad \psi_1(p,n,t)\in\R,\ \ p\in\R\setminus \sigma(H_1).\endaligned\ee They can be represented via the transformation operators
\[\psi_1(p,n,t)=\sum_{m=n}^{-\infty}K_1(n,m,t)z_1(p)^{-m},\quad \psi(p,n,t)=\sum_{m=n}^{+\infty}K(n,m,t)z(p)^{m},\]
where the real-valued functions $K(n,m)$ and $K_1(n,m)$ satisfy due to \eqref{decay}--\eqref{cont12} 
\begin{align} \label{decay3} 
& K_1(n,m,t)\leq C_1(n,t)\E^{\frac{V(n+m)}{2}},\ m<n;\ \  K(n,m,t)\leq C_2(n,t)\E^{ - \frac{V(n+m)}{2}},\  m>n, \\
\label{decay5}
& K_1(n,n,t)=1 +O(\E^{Vn}), \   n\to -\infty; \
K(n,n,t)=1 +O(\E^{-Vn}), \  n\to +\infty.
\end{align}
Introduce two  values $\rho_j>1$, $j=1,2$, such that
\be \label{constabs}\rho_1 +\rho_1^{-1}=(d)^{-1}(1+c)+2\varepsilon,\quad \rho_2+\rho_2^{-1}=2(c + 2d +\varepsilon).\ee
Let $\mathfrak D$ be a domain in $\Pi_U$ defined by
\be\label{defel}
\mathfrak D=\{p\in\Pi_U: \ 1<|z_1^{-1}(p)|<\rho_1,\  1<|z^{-1}(p)|<\rho_2\}.
\ee
The constants $\rho_j$ are chosen in such a way that each pre-image of the circles $|z(p)|=\rho_2^{-1}$ 
and $|z_1(p)|<\rho_1^{-1}$ on $\Pi_U$, which is an ellipse, contains both $\Sigma$ and $\Sigma_1$. 
On the other hand, $\rho_j<V$, respectively,
$|z^{-1}(p)|<V$, $|z_1^{-1}(p)|<V$ uniformly in $\mathfrak D$. Thus one can introduce a solution of \eqref{imp90}, which is an analytical continuation  of $\ol{\psi(p,n,t)}$ (resp.\ 
$\ol{\psi_1(p,n,t)}$) to $\mathfrak D$, usually defined on $\Sigma$ (resp.\ $\Sigma_1$)
\[
\breve\psi_1(p,n,t)=\sum_{m=n}^{-\infty}K_1(n,m,t)z_1(p)^{m},\quad \breve\psi(p,n,t)=\sum_{m=n}^{+\infty}K(n,m,t)z(p)^{-m}.
\]
From \eqref{decay3} and \eqref{decay5} it follows that  
\be\label{cont66}\langle\psi_1, \breve\psi_1\rangle(p,t)=\sqrt{(p-c)^2 - 4d^2},
\quad \langle\psi, \breve\psi\rangle(p,t)=-\sqrt{p^2 - 1}, \quad p\in\mathfrak D,
\ee
where
$\langle f,g\rangle (p,t)=a(n-1,t)(f(p,n-1,t)g(p,n,t) - f(p,n,t)g(p,n-1,t))$ is the Wronskian of two solutions of 
\eqref{jse}. Denote by $W(p,t)=\langle \psi_1,\psi\rangle(p,t)$
the Wronskian of the Jost solutions. By \eqref{cond15}, $W(p,t)$ has on $\Pi_U$ the only simple zero at  
$p_0=(\la_0,+)$ and does not vanish on $\Sigma$ except at possibly the edges of the continuous spectrum 
$\pa\si:=\{-c-2d,-c+2d\}\cup \{-1,1\}=\pa\si(H_1)\cup\pa\si(H).$ If 
$W(E,t)=0$ for $E\in\pa\sigma$, we call the point $E$ a {\it resonant point}. If $E$ is a resonant point then $W(p,t)=C(t)\sqrt{p-E} (1+o(1))$ as $p\to E$, with $C(t)\neq 0$ for all $t\in\R_+$.

The Jost solutions  satisfy the scattering relations
\begin{align}
  \label{rsr}
  T(p,t) \psi_1(p,n,t)
  & =\overline{\psi(p,n,t)} + R(p,t)\psi(p,n,t), \quad p \in \Sigma, \\  \label{lsr}
  T_1(p,t) \psi(p,n,t)
   &=\overline{\psi_1(p,n,t)} + R_1(p,t)\psi_1(p,n,t), \quad p \in\Sigma_1,
\end{align}
where $T(p,t)$, $R(p,t)$ (resp. $T_1(p,t)$, $R_1(p,t)$) are the right
(resp.\ left) transmission and reflection coefficients. 
They satisfy 
\be\label{SP8}\aligned
T(\ol p,t)&=\ol{T(p,t)},\quad  R(\ol p,t)=\ol{R(p,t)},\quad p\in\Sigma,\\ 
T_1(\ol p,t)&=\ol{T_1(p,t)},\quad  R_1(\ol p,t)=\ol{R_1(p,t)},\quad p\in\Sigma_1,\endaligned
\ee
and the identities
\be\label{iddd}
\frac{T(p,t)}{\overline{T(p,t)}} = R(p,t), \quad p \in\Sigma, \quad \frac{T_1(p,t)}{\overline{T_1(p,t)}} = R_1(p,t), \quad  p \in\Sigma_1.
\ee
If the coefficients of the Jacobi operator $\mathcal H(t)$ tend to their constant asymptotics with finite first moment (which is a more general situation than \eqref{decay}), then the transmission coefficients can be continued as meromorphic functions  on $\Pi_U$ with a simple pole at $p_0=(\la_0, +)$, and satisfy 
\be \label{TW}
T(p,t)=\frac{\sqrt{p^2-1}}{W(p,t)}, \quad T_1(p,t)=\frac{\sqrt{(p+c)^2 - 4d^2}}{W(p,t)}, \quad p\in\clos \Pi_U.
\ee
Moreover, $T(p,t)$ (resp.\ $T_1(p,t)$) is continuous in a vicinity of $\Gamma=\Sigma\cup\Sigma_1$ up to the boundary, excluding possibly the points $\pa \si (H_1)$ (resp.\ $\pa\si(H)$), where a discontinuity can 
appear due to the resonance. 
If $E\in\pa \si (H_1)$ (resp.\ $E\in\pa\si(H)$) is the resonant point then $T(p,t)=O((p-E)^{-1/2})$ (resp.\ $T_1(p,t)=O((p-E)^{-1/2})$), i.e., this transmission coefficient has  a simple pole at such a point.

Now we observe that under condition \eqref{decay} the reflection coefficients can be continued in the domain 
$\mathfrak D$. It is natural to continue them via \eqref{rsr} and \eqref{lsr},
\begin{align} \label{sr2}
\begin{split}
  \breve{\psi}(p,n,t)
  & = T(p,t) \psi_1(p,n,t) -R(p,t)\psi(p,n,t), \\  
 \breve{\psi_1}(p,n,t)
  & = T_1(p,t) \psi(p,n,t) -R_1(p,t)\psi_1(p,n,t), 
\end{split}
\qquad p \in\mathfrak D,
\end{align}
which is the same as to introduce them as usual via Wronskians (see \eqref{TW}, \eqref{cont66}),
$$R_1(p,t)=\frac{\langle \breve\psi_1,\,\psi\rangle(p,t)}{W(p,t)},\quad R(p,t)=-\frac{\langle \breve\psi,\,\psi_1\rangle(p,t)}{W(p,t)}.$$
In particular, \eqref{sr2} implies that both reflection coefficients also have simple poles at $p_0$.
Moreover, a pole for $R(p,t)$ (resp.\ $R_1(p,t)$) at the edge points of $\pa\si (H_1)$ (resp.\ $\pa\si(H)$) also appears in the resonance case. Thus, the following is valid:

\begin{lemma}\label{lemsc1}
Let $\mathfrak D$ be defined by \eqref{constabs}, \eqref{defel}. Then the functions
$$
T(p,t)\psi_1(p,n,t) - R(p,t)\psi(p,n,t)\ \mbox{ and } \ T_1(p,t)\psi(p,n,t) - R_1(p,t)\psi_1(p,n,t
)$$
are holomorphic in $\mathfrak D$ and continuous up to the boundary $\Gamma=\Sigma\cup\Sigma_1$.
\end{lemma}

Note that the time evolution of the scattering data preserves its form after analytical continuation. Set 
\[\phi(p) =\frac{1}{2}(z(p) - z^{-1}(p)),\quad \phi_1(p)= d( z_1^{-1}(p)-z_1(p) ),
\]
and denote
$$
\beta^{-1}(t)=\sum_{n\in\Z} 
(\psi(p_0,n,t))^2,\quad \beta_1^{-1}(t)=\sum_{n\in\Z} 
(\psi_1(p_0,n,t))^2,\quad \beta_1=\beta_1(0),\ \ \beta=\beta(0),
$$
and $T(p)=T(p,0)$, $T_1(p)=T_1(p,0)$, $R(p)=R(p,0)$, $R_1(p)=R_1(p,0)$, then we have
\be\label{R2}\begin{array}{lll}
T(p,t) =T(p)\E^{t(\phi(p)+\phi_1(p))}, & T_1(p,t)=T_1(p)\E^{t(\phi(p)+\phi_1(p))},  & p\in\Pi_U,\\[2mm]	 
R(p,t)=R(p)\E^{2t\phi(p)}, &
R_1(p,t)=R_1(p)\E^{2t\phi_1(p)}, & p \in \mathfrak D,\\[2mm]
\beta(t)=\beta\E^{2t\phi(p_0)}, &
\beta_1(t)=\beta_1 \E^{2t\phi_1(p_0)}.& 
\end{array}\ee

\subsection{Statement of the Riemann-Hilbert problem} 
 Let $m(p)=(m_1(p), m_2(p))$ be a vector-valued  function  on the Riemann surface $\M$, which has a jump on the contour $\Gamma$, oriented clockwise. 
 We will denote
 \[m_+(p)=\lim_{\zeta\in\Pi_U\to p\in\Gamma}m(\zeta),\quad m_-(p)=\lim_{\zeta\in\Pi_L\to p\in\Gamma}m(\zeta)\]
 at the same point $p\in\Gamma$. 
In general, for an oriented contour $\hat \Sigma$ on $\M$, and for a function $f(p)$ on this surface, the value $f_+(p)$ (resp.\ $f_-(p)$) will denote the nontangential limit of the vector function $f(\zeta)$ as $\zeta\to p \in \hat \Sigma$ from the positive (resp.\ negative) side of $\hat \Sigma$, where the positive side is the one which lies to
the left as one traverses the contour in the
direction of its orientation.

 We say that the vector-function $m$ satisfies 
\begin{itemize}\item {\it the symmetry condition} if
\be\label{symto}
m(p^*) = m(p)\sigma_1;\quad \sigma_1:= \sigI;
\ee
\item {\it the normalization condition} if there exists \[ \lim_{p \to \infty_\pm} m(p) = (m_1(\infty_\pm), m_2(\infty_\pm))\] and
\be\label{eq:normcond}
\ m_1(\infty_\pm)\cdot m_2(\infty_\pm) = 1, \quad m_1(\infty_\pm) > 0.
\ee
\end{itemize}
On $\Pi_U$ define two vector-valued functions $m(p)=m(p,n,t)$ and $m^1(p)=m^1(p,n,t)$:
\begin{align} \label{defm}
  \begin{split}
m(p) &=
\begin{pmatrix} T(p,t) \psi_1(p,n,t) z^n(p),  & \psi(p,n,t)  z^{-n}(p) \end{pmatrix},   \\
	m^1(p) &=
	\begin{pmatrix} T_1(p,t) \psi(p,n,t) z_1^{-n}(p),  & \psi_1(p,n,t)  z_1^{n}(p) \end{pmatrix}.
  \end{split}
\end{align}
 They are considered as functions of the variable $p$, and $n$ and $t$ are parameters.
\begin{lemma}{\rm (\cite{emtstp2})}\label{asypm}
The functions $m(p)$ and $m^1(p)$  have the following asymptotic behavior
as $p \to \infty_+$:
\begin{align}\label{asm}
\begin{split}
m(p) &= \Big(A(n,t)\Big(1 - \frac{B(n -1,t)}{\la}\Big), \frac{1}{A(n,t)}\Big(1 + \frac{B(n,t)}{\la}\Big)\Big)
+  O\Big(\frac{1}{\la^{2}}\Big),\\
m^1(p) &= \Big(A_1(n,t)\Big(1 - \frac{B_1(n +1,t)}{\la}\Big), \frac{1}{A_1(n,t)}\Big(1 + \frac{B_1(n,t)}{\la}\Big)\Big)
+  O\Big(\frac{1}{\la^{2}}\Big),
\end{split}
\end{align}
where 
  \begin{align}
    \begin{split} \label{AB}
  A(n,t) &= \prod_{j=n}^\infty 2a(j,t), \qquad
  B(n,t)= - \sum_{j=n+1}^\infty b(j,t), \\[-1mm]
  A_1(n,t) &= \prod_{j=- \infty}^{n-1} \frac{a(j,t)}{d}, \qquad
  B_1(n,t) = - \sum_{j=-\infty}^{n-1} (c + b(j,t)).
   \end{split}
  \end{align}
\end{lemma}
Extend the functions $m$ and $m_1$  to $\Pi_L$ by the symmetry condition,
$m(p^*) = m(p) \si_1,$ $m_1(p^*)=m_1(p)\si_1$. 
Evidently, this extension produces jumps along $\Gamma$.
To apply the nonlinear steepest descent method we have to describe the jumps along $\Gamma$ by
matrices depending on a large parameter $t$ and on a parameter $\xi=\frac{n}{t}$, which does not change much ({\it the slow variable}).
To this end, introduce the phase functions $\Phi(p)=\Phi(p,\xi)$  and $\Phi_1(p)=\Phi_1(p,\xi)$ on $\Pi_U$,
\be  \label{Phi3}
\Phi_1(p) = d\big(z_1^{-1}(p)-z_1(p)\big) - \xi\log z_1(p),\
\Phi(p) = \frac{1}{2}\big(z(p)-z^{-1}(p)\big) + \xi\log z(p),
\ee
and continue them as odd functions to $\Pi_L$
\be\label{oddphi}
\Phi(p^*)=-\Phi(p), \quad \Phi_1(p^*)=-\Phi_1(p).
\ee
This corresponds to the continuation $z(p^*)=z^{-1}(p)$ and $z_1(p^*)=z_1^{-1}(p)$, which is natural 
for the Joukovski transformation.
With this continuation, $z$ and $z_1$ are not holomorphic on $\M$; $z(p)$ has a jump on 
$\Sigma_1$ and $z_1(p)$ has a jump on $\Sigma$.
In particular,
\be \label{jumpz_1}
z_1(p)=z_1(\ol p)=z_1^{-1}(p^*)\in \R,  \quad p \in \Sigma,
\ee
and $z(p)$ is real-valued with the same type of jump on $\Sigma_1$.
Denote
\be\label{defchi}
 \chi(p) = -\lim_{p^\prime \in \Pi_U \to p \in \Sigma} T_1 (p^\prime,0)\, \ol {T(p^\prime,0)},
 \quad p \in \Gamma.
\ee
We observe from \eqref{TW} that
\be\label{propchi}
\chi(p)=\I |\chi(p)|, \quad p\in \Gamma_{u},
 \quad\chi(p)=-\I |\chi(p)|, \quad p\in \Gamma_{\ell},
 \ee
(cf. \eqref{siul}), and therefore $\chi(p) = - \chi({\ol p})$ for $p\in\Gamma$.

\begin{theorem}\label{thm:vecrhp}
Suppose that the initial data of the Cauchy problem \eqref{tl} satisfy \eqref{ini}--\eqref{cont12}.
Let $\{ R(p), p \in \Sigma;\  R_1(p),p\in\Sigma_1;\  \chi(p), \,p \in \Gamma;\  p_0=(\la_0,+), \,\beta_1,\,\beta_2 \}$ be
the  scattering data of $\mathcal H(0)$. Then the vector-valued functions defined 
in \eqref{defm}, \eqref{symto} solve the following  Riemann--Hilbert problems:

\begin{enumerate}[{\em I.}]\item The function $m(p)=(m_1(p), m_2(p))$ (resp.\ $m^1(p)=(m_1^1(p), m_2^1(p))$) is a meromorphic function on $\M \setminus \Gamma$ with a simple
pole at $p_0$ for $m_1(p)$ (resp.\ $m_1^1(p)$) and a simple pole at $p_0^*$ for $m_2(p)$ (resp.\ $m_2^1(p)$). 
It is continuous up to $\Gamma$ except at the points $(-c-2d, \pm)$ and $(-c+2d,\pm)$ 
(resp.\ $(1,\pm)$ and $(-1,\pm)$), 
where  $m_1(p)$ (resp.\ $m_1^1(p)$) admits a square root singularity  $O((p-E)^{-1/2})$ from the upper sheet, and 
$m_2(p)$ (resp.\ $m_2^1(p)$) from the lower sheet.
\item They satisfy the jump conditions $m_+(p)=m_-(p) v(p)$,  $m^1_+(p)=m^1_-(p) v_1(p)$, where
\be \label{eq:jumpcond}
v(p)= \begin{cases}
\begin{pmatrix}
0 & - \ol{R(p)} \E^{- 2 t \Phi(p)} \\[1mm]
R(p) \E^{2 t \Phi(p)} & 1
\end{pmatrix}, & p \in \Sigma,\\[4mm]
\begin{pmatrix}
\chi(p) \E^{t(\Phi_{+}(p)-\Phi_{-}(p))} & 1 \\[1mm]
1 & 0
\end{pmatrix}, & p \in \Sigma_1,
\end{cases}
\ee
\be \label{eq:jumpcond2}
v_1(p)=\begin{cases}
\begin{pmatrix}
0 & - \ol{R_1(p)} \E^{- 2 t \Phi_1(p)} \\[1mm]
R_1(p) \E^{2 t \Phi_1(p)} & 1
\end{pmatrix}, & p \in \Sigma_1,\\[6mm]
\begin{pmatrix}
\ol\chi(p) \E^{t(\Phi_{1,+}(p)-\Phi_{1,-}(p))} & 1 \\[1mm]
1& 0
\end{pmatrix}, & p \in \Sigma.
\end{cases}
\ee
\item They satisfy the pole conditions
\begin{align} \label{polecond}
 \res_{p_0} m(p) &= (Q m_2(p_0) , \ 0)
,\ \ \ 
\res_{p_0^*} m(p) = (0, Q m_1(p_0^*))\\ \label{polecond1}
 \res_{p_0} m^1(p) &= (Q_1 m_2^1(p_0) , \ 0)
,\ \ \ 
\res_{p_0^*} m^1(p) = (0, Q_1 m_1^1(p_0^*)),\end{align}
where
\[Q=Q(t)= \sqrt{p_0^2-1} \beta_2 \E^{2t\Phi(p_0)},\ \ 
Q_1=Q_1(t)= \sqrt{(p_0+c)^2 - 4d^2}
 \beta_1 \E^{2t\Phi_1(p_0)}
.\]
\item
They satisfy the symmetry and normalization conditions.
\end{enumerate}

\end{theorem}
\begin{proof} Use \eqref{SP1}, \eqref{rsr}, \eqref{lsr}, \eqref{SP8}, \eqref{iddd}, \eqref{R2}, \eqref{defchi}, 
\eqref{oddphi}, \eqref{jumpz_1}, \eqref{symto}, and \eqref{eq:normcond}.
Condition $I$ takes into account possible resonances, which produce poles of the transmission 
coefficients on the Riemann surface at the respective branch points. 
\end{proof}

\begin{lemma}\label{lemuniq} Each Riemann-Hilbert problem {\em I--IV} has a unique solution. 
\end{lemma}

\begin{proof}
Since the RH problems for $m$ and $m^1$ can be easily transformed into each other by a simple conjugation
	it suffices to study the uniqueness of $m$. Let $f=(f_1, f_2)$ and $g=(g_1, g_2)$ be two solutions satisfying I, 
	\eqref{eq:jumpcond}, \eqref{polecond}, and IV. For convenience we consider 
	them as in Section~\ref{sec:lr} as functions 
	on the Riemann surface $\hat \M$ of $\sqrt{\lambda^2 - 1}$. 
	The contour $\Sigma_1$ transforms to two contours: the interval $I_1=[-c-2d, -c + 2d]$ on the upper 
	sheet of $\hat \M$ oriented in positive direction, and $I_1^*$ on the lower sheet with negative orientation,
	with jump matrix 
	\[\hat v(p)=\begin{cases}
\begin{pmatrix}
1 & 0 \\
\chi(p)\E^{2t \Phi(p)}  & 1
\end{pmatrix}, & p \in I_1,\\[4mm]
\sigma_1( \hat v(p^*))^{-1}\sigma_1, & p \in I_1^*,\\
v(p),& p\in\Sigma.\end{cases}\] Let
	$$
	S(p)=\begin{pmatrix} f_1(p)& f_2(p)\\
	g_1(p) & g_2(p)\end{pmatrix}, \quad p\in\hat \M,
	$$ 
	then the scalar function $s(p)=\det S(p)$ has no jump since $\det v(p)=1$.  
	Moreover, $s(p)$ has no pole at the eigenvalue $p_0$ and is holomorphic on $\hat \M$ except at 
	four points $(-c-2d,\pm)$, $(-c+2d, \pm)$ (these are no longer branch points on $\hat\M$) 
	in the case of resonances, where
	$s(p) = O((p+c \pm 2d)^{-1/2})$.  
	Since $s(p)$ is bounded at $\infty_\pm$, then $s(p) \equiv const$ by Liouville's theorem. 
	The symmetry condition \eqref{symto} implies $s(p) + s(p^*)=0$, hence $2 s(1)=0$ and $s(p)\equiv 0$. 

	Therefore $f(p)=c(p)g(p)$, where $c(p)$ is a scalar function without jumps on $\hat\M$, and $\lim_{p\to\infty_\pm}c(p)=1$ by the normalization condition. 
	Hence to show uniqueness, it suffices to show that the associated vanishing problem, where the normalization condition \eqref{eq:normcond} is replaced by
	the condition that the first component of $m(\infty_+)$ vanishes, has only the trivial solution.
	To this end we introduce the meromorphic differential
	\[
	d\Omega(p) = \frac{\I\,d\la}{\pm\sqrt{\la^2-1}}, \quad p=(\la,\pm)\in\hat M,
	\]
	with simple poles at $\infty_\pm$. A brief inspection shows that
	$d\Omega$ is positive on $\Sigma$ and $\I^{-1} d\Omega$ is positive in $I_1$.
	
	Let $\hat m$ be a solution of this vanishing problem and
	let $\mathcal{C}$ be the closed contour from Fig.~\ref{fig:S2} oriented counterclockwise. Denote by $\hat m^\dagger$ the adjoint (transpose and complex conjugate)
	of a vector/matrix.
	Since there is no residue at $\infty_+$ we obtain
	$0= \int_{\mathcal{C}} \hat m(p)\hat m^\dagger(\overline{p^*})d\Omega(p)$,
	that is,
	\begin{align*}
	&2\pi\I \res_{p_0} \left(\hat m(p)\hat m^\dagger(\overline{p^*})\right)\frac{\I}{\sqrt{p_0^2 - 1}}=\\
	&= \int_{I_1} \hat m_+(p)\hat m^\dagger_+(\overline{p^*})d\Omega(p) - \int_{I_1} \hat m_-(p) \hat m^\dagger_-(\overline{p^*}) d\Omega(p) + \int_{\Sigma}\hat  m_+(p) \hat m^\dagger_-(p) d\Omega(p)\\
	&=  \int_{I_1} \big( \hat m_+(p) \sigma_1 \hat m^\dagger_-(p) - \hat m_-(p)\sigma_1  \hat m^\dagger_+(p) \big) d\Omega(p) + \int_{\Sigma} \hat m_+(p)  \hat m^\dagger_-(p) d\Omega(p).
	\end{align*}
	Using the pole condition \eqref{polecond}, the jump conditions
	\begin{align}\label{imp88}
	\hat m_{1,+}& - \hat m_{1,-}=\chi \E^{2t\re\Phi} \hat m_{2,-},\quad\hat m_{2,+}=\hat m_{2,-}, \quad \text{on } I_1,\\ \label{imp89}
	\hat m_{2,+}&=\hat m_{2,-}- \overline{R}\E^{-2t\Phi} \hat m_{1,-}, \quad
	\hat m_{1,+}=R\E^{2t\Phi}m_{2,-}, \quad \text{on } \Sigma,
	\end{align}
	together with $\chi(p)=\I|\chi(p)|,$  $p\in I_1$, and $\re \Phi(p)=0$, $p\in\Sigma$, imply
	\begin{align*}
	0&= 2\int_{I_1} |\chi(p)|\E^{2t\re\Phi(p)} |\hat m_{2,-}(p)|^2\I d\Omega(p) + \int_{\Sigma}|\hat m_{2,-}(p)|^2d\Omega(p)\\
	&\quad  + 4\pi \beta_2|\hat m_2(p_0)|^2\E^{2t\Phi(p_0)} + 2 \I  \im \int_{\Sigma} R(p) \E^{2t\Phi(p)} \hat m_{2,-}(p) \overline{\hat m_{1,-}(p)}d\Omega(p).
	\end{align*}
	Since the first three summands are positive and the last one is purely imaginary, this shows $\hat m_{2,-}(p)=0$ for $p\in I_1 \cup \Sigma$.
	By \eqref{imp88} $\hat m_{2,+}(p)=\hat m_{2,-}(p)=0$ for $p\in I_1$ and so $\hat m_1$ also has no jump along $I_1$.
	In particular, $\hat m$ is holomorphic in a neighborhood of $I_1$ and consequently vanishes on the upper sheet. By symmetry it also
	vanishes on the lower sheet which finally shows $\hat m(p)\equiv 0$ and establishes uniqueness.
\end{proof}

Our aim is to reduce these RH problems to model problems which can be solved explicitly.
To this end we record the following well-known result for easy reference.

\begin{lemma}[Conjugation] \label{lem:conjug}
Let $\tilde m$ be a solution of the RH problem $\tilde m_+(p)=\tilde m_-(p) \tilde v(p)$, $p \in\tilde\Sigma$, on a Riemann surface $\tilde \M$ which satisfies the symmetry and normalization conditions.
Let $\hat \Sigma$ be a contour on $\tilde\M$ with the same orientation as $\tilde\Sigma$ on the common part of these contours and suppose that $\hat\Sigma$ and $\tilde\Sigma$
contain with each point $p$ also $p^*$. Let $D$ be a matrix of the form
\[
D(p) = \begin{pmatrix} d(p)^{-1} & 0 \\ 0 & d(p) \end{pmatrix}=[d(p)]^{-\sigma_3},\quad \sigma_3=\begin{pmatrix} 1&0\\0&-1\end{pmatrix},
\]
where $d: \tilde\M\setminus\hat \Sigma\to\C$ is a sectionally analytic function with $d(p)\neq 0$ except for a
finite number of points on $\tilde\Sigma$. Set
\be\label{mD}\hat{m}(p) = \tilde m(p) D(p),\ee
then the jump matrix of the problem $\hat{m}_+=\hat{m}_- \hat{v}$ is
\[
\hat{v} =
\begin{cases}
  \begin{pmatrix} \tilde v_{11} &\tilde v_{12} d^{2} \\ \tilde v_{21} d^{-2}  & \tilde v_{22} \end{pmatrix}, &
  \quad p \in \tilde \Sigma \setminus (\tilde \Sigma\cap\hat\Sigma), \\[4mm]
  \begin{pmatrix} \tilde v_{11} d_+^{-1} d_-   &\tilde v_{12} d_+ d_- \\
 \tilde  v_{21} d_+^{-1} d_-^{-1}  &\tilde v_{22} d_-^{-1} d_+   \end{pmatrix}, &
  \quad p \in\tilde \Sigma\cap\hat\Sigma,
  \\[4mm]
  \begin{pmatrix} d_+^{-1} d_-   & 0 \\
  0  & d_-^{-1} d_+ \end{pmatrix}, &
  \quad p \in\hat\Sigma\setminus(\tilde \Sigma\cap\hat\Sigma).
\end{cases}
\]
If $d$ satisfies $d(p^*) = d(p)^{-1}$ for $p \in\tilde\M\setminus\hat \Sigma$,
then the transformation \eqref{mD} respects the symmetry condition \eqref{symto}.
\end{lemma}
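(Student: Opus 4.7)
The plan is to compute $\tilde m_\pm = (mD)_\pm$ directly on each piece of $\hat\Sigma \cup \tilde\Sigma$ and read off the new jump from the identity $\tilde m_+ = \tilde m_- \tilde v$. Because $D$ is diagonal, conjugation by $D$ acts trivially on the diagonal entries of $v$ and scales the off-diagonal entries by $d^{\pm 2}$, so no real algebra is involved beyond substitution.

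I would split $\hat\Sigma \cup \tilde\Sigma$ into three disjoint pieces. On $\hat\Sigma \setminus (\hat\Sigma \cap \tilde\Sigma)$ the function $d$ is locally analytic, hence $D_+ = D_- =: D$, and $\tilde m_+ = m_- v D = \tilde m_-(D^{-1} v D)$; writing out the diagonal action of $D$ yields the first case of the claimed formula. On $\tilde\Sigma \setminus (\hat\Sigma \cap \tilde\Sigma)$ the original $m$ is locally analytic, so $\tilde m_\pm = m D_\pm$ and $\tilde v = D_-^{-1} D_+$, which is the third case. On the overlap $\hat\Sigma \cap \tilde\Sigma$ both jump simultaneously, and from $\tilde m_+ = m_+ D_+ = m_- v D_+$ together with $\tilde m_- = m_- D_-$ one reads off $\tilde v = D_-^{-1} v D_+$; substituting $D_\pm = \mathrm{diag}(d_\pm^{-1}, d_\pm)$ produces the middle case entry-by-entry.

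For the symmetry-preserving assertion, a direct computation gives $\sigI D(p) \sigI = \mathrm{diag}(d(p), d(p)^{-1})$, and this equals $D(p^*) = \mathrm{diag}(d(p^*)^{-1}, d(p^*))$ precisely when $d(p^*) = d(p)^{-1}$. Under this hypothesis, using $\sigI^2 = \id$,
\[
\tilde m(p^*) = m(p^*) D(p^*) = m(p)\, \sigI \cdot \sigI D(p) \sigI = m(p) D(p) \sigI = \tilde m(p) \sigI,
\]
so \eqref{eq:symcond} is inherited by $\tilde m$.

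The whole argument is essentially bookkeeping rather than analysis. The one point requiring care is the overlap $\hat\Sigma \cap \tilde\Sigma$, where one must keep the orientation on $\tilde\Sigma$ matched with that on $\hat\Sigma$ (explicitly assumed in the statement) so that the $\pm$-labels attached to $v$ and to $D$ refer to the same sides of the contour; once this convention is fixed, the three formulas follow mechanically from the definition $\tilde m = mD$.
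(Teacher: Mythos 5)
Your proof is correct: the three cases follow from $\tilde v = D_-^{-1} v D_+$ with $D_+=D_-$ off $\tilde\Sigma$ and $v=\id$ off $\hat\Sigma$, and your verification of the entries and of $D(p^*)=\sigI D(p)\sigI$ under $d(p^*)=d(p)^{-1}$ is exactly the standard bookkeeping. The paper states this lemma as a well-known fact and gives no proof, so your direct computation is precisely the argument it implicitly relies on; nothing further is needed.
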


In addition to this Lemma we will apply the technique of so called $g$-functions in a form proposed in \cite{KM}. In contradistinction to \cite{KM} we work on the Riemann surface, and
these $g$-functions are in fact Abel integrals on modified Riemann surfaces which are ``slightly truncated'' with respect to $\M$ and depend on the parameter $\xi$. These Abel integrals approximate the phase functions  at infinity up to an additive  constant, and
transform the jump matrices in a way that allows us to factorize them and to get asymptotically constant matrices on  contours. The respective RH problem with constant jump is called the model problem and will be solved explicitly for our case. In the next section we rigorously study the analytical properties of the $g$-function
which approximates the phase $\Phi$ in one of the domains, and then list analogous properties of the other $g$-functions in Section \ref{sec:el}.

\section{\texorpdfstring{$g$}{g}-function: existence and properties}\label{secg}

\subsection{Boundaries of regions} We start with properties of the phase functions which would be desirable to be ``inherited'' by the $g$-functions. In accordance with \eqref{Phi3} and \eqref{oddphi} we represent $\Phi(p)$ and $\Phi_1(p)$  via the integrals
\be \label{Phi}
\Phi(p) = - \int_{1}^{p}\frac{\la + \xi}{\sqrt{\la^2-1}}d\la,  \quad
\Phi_1(p) = \int_{-c+2d}^{p}\frac{\la + c+ \xi}{\sqrt{(\la + c)^2-4d^2}}d\la.
\ee
Evidently \eqref{oddphi} is valid. The function $\Phi(p)$ has a jump along the contour
$\Sigma_1$ and no jump on $\Sigma$, respectively, $\Phi_1$ has a jump along
$\Sigma$ and no jump on $\Sigma_1$. For $p\in\Pi_U$,
\be \label{jumpp}\aligned 
\Phi_\pm(p) &= \pm \I \pi \xi + \re \Phi(p), \quad \mbox{for } \pi(p) \in (-\infty, -1], \\
\Phi_{1,\pm}(p) &= \mp \I \pi \xi + \re \Phi_1(p), \quad \mbox{for } \pi(p) \in
(-\infty, -c-2d],\endaligned
\ee
with the natural symmetry on the lower sheet. The jumps of the phase functions along these intervals
are equal to $\frac{2\pi\I n}{t}$ up to a sign, which implies that $\E^{t(\Phi_{1,+}(p)- \Phi_{1,-}(p))}=1$ 
 along  contours on $\Pi_U$ and $\Pi_L$ with projection on $(-\infty, -c-2d]$, and $\E^{t(\Phi_{+}(p)- \Phi_{-}(p))}=1$ along two contours with projection on $(-\infty, -1]$.
The phase functions have the following asymptotic behavior as $p \to \infty_+$
\begin{align}\label{eqas7}
	\begin{split}
\Phi(p,\xi) &= - p - \xi\log p  - \xi \log2 + \frac{1}{2p} + O(p^{-2}),  \\
\Phi_1(p,\xi) &= p + \xi\log p + c  - \xi \log d + \frac{1}{p}(\xi c -2d^2) +  O(p^{-2}).
\end{split}
\end{align}
We observe that the graph $\re \Phi(p,\xi)=0$ (resp.\ $ \re \Phi_1(p,\xi)=0$) on $\clos (\Pi_U)$ consists of two curves. One of them is the contour $\Sigma$ (resp.\ $\Sigma_1$), and the other one crosses  the real axis  at the point $\eta$ (resp.\ $\eta_1$). If there is no confusion, we consider the real numbers $\eta$, $\eta_1$, and $\xi$ 
as points on $\Pi_U$ when necessary.  In particular, to evaluate the point $\eta$
observe that for $\xi>1$, the function $\Phi(p,\xi)$  maps the upper half-plane $\C^+\subset \Pi_U $ conformally to the domain
that lies below the polygon in the right picture of Fig.~\ref{fig:confmap}.
The line $\re \Phi(p,\xi)=0$ starts at $\eta<-\xi$ for which
\be
\Phi(\eta,\xi)=\Phi(-1,\xi)\label{eta}.
\ee
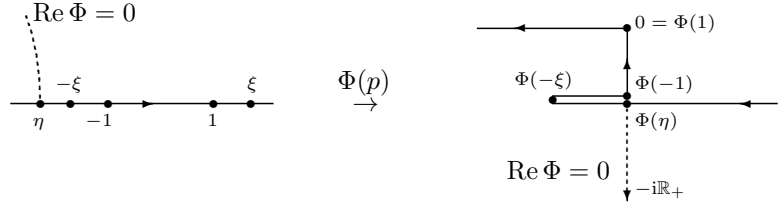
\begin{figure}[ht]
  \begin{picture}(3.5,3)

\linethickness{0.2mm}
  \put(0,1.2){\line(1,0){3.5}}
  \put(0.8,1.2){\circle*{0.09}}
  \put(0.6,1.4){\scriptsize{$-\xi$}}
   \put(2,1.3){}
  \put(1.8,1.2){\vector(1,0){0.1}}
  \put(1.3,1.2){\circle*{0.09}}
    \put(1,0.9){\scriptsize{$-1$}}
    \put(2.7,1.2){\circle*{0.09}}
       \put(2.62,0.9){\scriptsize{$1$}}
    \put(3.2,1.2){\circle*{0.09}}
      \put(3.15,1.4){\scriptsize{$\xi$}}

  \curvedashes{0.06,0.06}
  \curve(0.4,1.2, 0.35,1.85,  0.2,2.4)
   \put(0.3,2.3){$\re \Phi=0$}

  \put(0.4,1.2){\circle*{0.09}}
  \put(0.3,0.9){\scriptsize{$\eta$}}

  \end{picture}\quad
   \begin{picture}(2,3)
    \put(0.5,1.4){$\Phi(p)$}
    \put(0.7,1.1){$\to$}
    \end{picture}\quad
   \begin{picture}(4,3)

  \linethickness{0.2mm}
    \put(1,1.2){\line(1,0){3}}
    \put(3.6,1.2){\vector(-1,0){0.1}}
    \put(1,1.3){\line(1,0){1}}
     \put(1.01,1.25){\circle*{0.1}}
    \put(0.5,1.45){\scriptsize{$\Phi(-\xi)$}}
    \put(2,1.2){\circle*{0.09}}
      \put(2.1,0.9){\scriptsize{$\Phi(\eta)$}}
    \put(2,1.3){\line(0,1){0.9}}
    \put(2,1.7){\vector(0,1){0.1}}
    \put(2,1.3){\circle*{0.09}}
    \put(2.1,1.4){\scriptsize{$\Phi(-1)$}}
    \put(2,2.2){\circle*{0.09}}
    \put(2.1,2.2){\scriptsize{$0=\Phi(1)$}}
    \put(2,2.2){\line(-1,0){2}}
    \put(0.6,2.2){\vector(-1,0){0.1}}

    \curvedashes{0.06,0.06}
     \curve(2,1.2, 2,0)
         \put(2,0){\vector(0,-1){0.1}}
        \put(0.4,0.2){$\re \Phi=0$}
        \put(2.1,0){\scriptsize{$-\I\R_+$}}
    \end{picture}

\caption{Conformal map  $\Phi(p, \xi)$ for $\xi > 1$ and $p\in \Pi_U$, $\im p \geq 0$.}\label{fig:confmap}
\end{figure}

\noindent
Fig.~\ref{fig:confmap2} demonstrates that the curve $\re\Phi(p,\xi)=0$ starts at $\eta=-\xi$ when $\xi\in [-1,1]$.
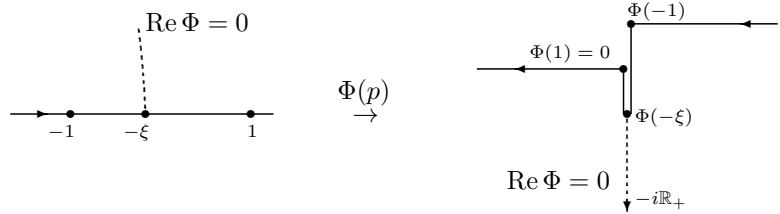
\begin{figure}[ht]
  \begin{picture}(3.5,3)

\linethickness{0.2mm}
  \put(0,1.2){\line(1,0){3.5}}
  \put(0.4,1.2){\vector(1,0){0.1}}
  \put(0.8,1.2){\circle*{0.09}}
  \put(0.5,0.9){\scriptsize{$-1$}}
   \put(2.5,1.3){ }
    \put(3.2,1.2){\circle*{0.09}}
      \put(3.15,0.9){\scriptsize{$1$}}

  \curvedashes{0.06,0.06}
  \curve(1.8,1.2, 1.76,1.85, 1.7,2.4)
   \put(1.8,2.3){$\small{\re \Phi=0}$}
  \put(1.8,1.2){\circle*{0.09}}
  \put(1.5,0.9){\scriptsize{$-\xi$}}

  \end{picture}\quad
   \begin{picture}(2,3)
    \put(0.5,1.4){$\Phi(p)$}
    \put(0.7,1.1){$\to$}
    \end{picture}\quad
    \begin{picture}(4,3)

    \linethickness{0.2mm}
       \put(2,1.2){\circle*{0.1}}
      \put(2.1,1.1){\scriptsize{$\Phi(-\xi)$}}
      \put(1.95,1.2){\line(0,1){0.6}}
      \put(2.05,1.2){\line(0,1){1.2}}
       \put(1.95,1.8){\circle*{0.09}}
       \put(2.05,2.4){\circle*{0.09}}
       \put(1.95,1.8){\line(-1,0){1.95}}
       \put(0.6,1.8){\vector(-1,0){0.1}}
       \put(0.7,1.95){\scriptsize{$\Phi(1)=0$}}

       \put(2.05,2.4){\line(1,0){1.95}}
       \put(2,2.5){\scriptsize{$\Phi(-1)$}}
        \put(3.6,2.4){\vector(-1,0){0.1}}

       \curvedashes{0.06,0.06}
         \curve(2,1.2, 2,0)
             \put(2,0){\vector(0,-1){0.1}}
            \put(0.4,0.2){$\small{\re \Phi=0}$}
            \put(2.1,0){\scriptsize{$-i\R_+$}}
      \end{picture}

\caption{Case $\xi \in (-1,1)$}\label{fig:confmap2}
\end{figure}

The signature table on $\M$ for $\Phi(p)$ in the case
$\eta\in I_1=[-c-2d, -c+2d]$ is given in Fig.~\ref{fig:signRePhi2}.
\begin{figure}
\begin{picture}(6,3.5)

\put(0,3.3){On $\Pi_U$:}
\put(1,0.55){$+$}
\put(1,2.3){$+$}
\put(4,2.3){$-$}
\put(4,0.55){$-$}

\linethickness{0.2mm}
\curve(1.6,0, 1.9,1.5, 1.6,3)
\put(1.9,1.5){\circle*{0.09}}
\put(1.97,1.65){\scriptsize $\eta$}
\put(1.8,2.5){$\re \Phi =0$}

\linethickness{0.2mm}
\put(1,1.5){\line(1, 0){1.5}}
\put(2.1,1.1){ $I_1$}

\linethickness{0.6mm}
\put(3.3,1.5){\line(1, 0){1.7}}
\put(4.7,1.1){ $\Sigma$}
\put(4,1.65){$\re \Phi =0$}

\end{picture}\quad
\begin{picture}(6,3.5)

\put(0,3.3){On $\Pi_L$:}
\put(1,0.55){$-$}
\put(1,2.3){$-$}
\put(4,2.3){$+$}
\put(4,0.55){$+$}

\linethickness{0.2mm}
\curve(1.6,0, 1.9,1.5, 1.6,3)
\put(1.9,1.5){\circle*{0.09}}
\put(1.97,1.65){\scriptsize $\eta^*$}
\put(1.8,2.5){$\re \Phi =0$}

\linethickness{0.2mm}
\put(1,1.5){\line(1, 0){1.5}}
\put(2.1,1.1){ $I_1^*$}

\linethickness{0.6mm}
\put(3.3,1.5){\line(1, 0){1.7}}
\put(4.7,1.1){ $\Sigma$}
\put(4,1.65){$\re \Phi =0$}

\end{picture}

\caption{Signature table of $\re \Phi(p)$ for $\eta \in I_1$}\label{fig:signRePhi2}
\end{figure}
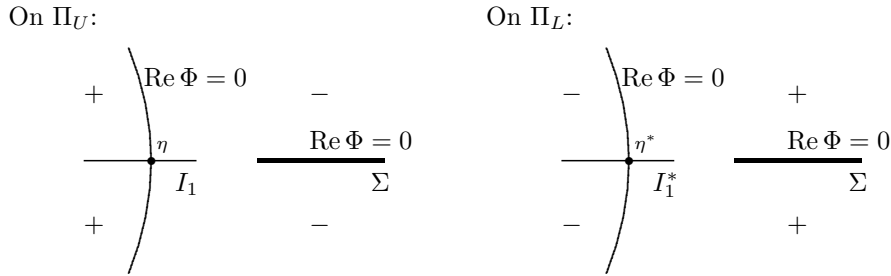
We observe that as the parameter $\xi$ decreases from $+\infty$ to $-\infty$, the point $\eta$ increases from $-\infty$ to $+\infty$ and the point $\eta_1$ decreases from $+\infty$ to $-\infty$. One can expect
from the signature table in Fig.~\ref{fig:signRePhi2} that for $\xi=\frac{n}{t}>\xi_{cr}$, where $\xi_{cr}$ corresponds to $\eta=-c-2d$, the asymptotical behavior of the solution
of  \eqref{tl}, \eqref{ini}--\eqref{cont12} will be close to the
coefficients of the right initial background operator $H$. Respectively, if $\xi<\xi_{cr,1}$, where $\xi_{cr,1}$ corresponds to $\eta_1=1$, the solution will be close to the coefficients of the left
background operator $H_1$ (see Section \ref{sec:lr}). According to \eqref{eta}, $\xi_{cr}$ is the solution of the equation $\Phi(-c-2d,\xi)=\Phi(-1,\xi)$.  From \eqref{Phi} we obtain
\be
\label{xicrit2}\xi_{cr}=\big((c+2d)^2 -1\big)^{1/2}\, \log^{-1}\big(c+2d +((c+2d)^2-1)^{1/2}\big).
\ee
Here the positive value of $\sqrt{\cdot}$ is used. In turn,
the point $\xi_{cr,1}$ is the solution of the equation $\Phi_1(1,\xi)=0$, that is
\be \label{xicrit1}
\xi_{cr,1}=\big((1+c)^2-4d^2\big)^{1/2} \Big(\log 2d - \log \big(1+c + ((1+c)^2-4d^2)^{1/2}\big)\Big)^{-1}.
\ee
We observe that $\xi_{cr,1}<-2d$ and $\xi_{cr}>1$, therefore $\xi_{cr,1} < \xi_{cr}$.
To determine the parameters which distinguish  four other regions of the $(n,t)$ half plane
where the solution has different types of finite-gap asymptotical behavior, we introduce 
the points $\nu_1, \nu_2\in(-c+2d,-1)$  such that
\be\label{mui}
\int_{-c+2d}^{-1}\frac{(\la - \nu_2)(\la +c - 2d)}{R^{1/2}(\la)}d\la = 0,\quad \int_{-c+2d}^{-1}\frac{(\la - \nu_1)(\la +1)}{R^{1/2}(\la)}d\la = 0.
\ee
Explicitely one obtains
\be
\nu_1 = \frac{(c-2d) \mathcal{I}_1 +  \mathcal{I}_2}{(c-2d) \mathcal{I}_0 +  \mathcal{I}_1}, \quad
\nu_2 = \frac{\mathcal{I}_1 +  \mathcal{I}_2}{\mathcal{I}_0 +  \mathcal{I}_1}, \quad
\mathcal{I}_\ell= \int_{-c+2d}^{-1}\frac{\la^\ell}{R^{1/2}(\la)}d\la,
\ee
where the integrals $\mathcal{I}_\ell$ can be explicitly evaluated in terms of Jacobi elliptic functions \cite{BrFr} yielding
\begin{align*}
\mathcal{I}_0 &= C K(k), \quad k=\sqrt{\frac{c^2-(2 d+1)^2}{c^2-(1-2 d)^2}}, \quad C=\frac{2}{\sqrt{c^2-(1-2 d)^2}}\\
\mathcal{I}_1 &= C \left( 4d\, \Pi\Big(\frac{1 - c + 2 d}{1 - c - 2 d},k\Big)-(c+2d) K(k) \right),\\
\mathcal{I}_2 &= \frac{C}{2} \left( 4 d K(k) + (1 - c - 2 d) (1 + c - 2 d) E(k) \right) - c \mathcal{I}_1.
\end{align*}
Set
\be\label{xicr}
\xi_{cr}^\prime=-\nu_2 - 2d,\quad \xi_{cr,1}^\prime=-\nu_1 - c +1.
\ee
Since  $\nu_1,\ \nu_2\in(-c+2d,-1)$, then $|\nu_2 - \nu_1|<-1+c-2d$. Therefore,
$
\xi_{cr,1}^\prime <\xi_{cr}^\prime.
$

To compute the critical value $\xi_{cr,0}$ which corresponds to the eigenvalue $\la_0$, introduce two functions $\mu_1(\xi) < \mu_2(\xi)$  uniquely defined by 
\eqref{condgap} for $\xi \in (\xi_{cr,1}^\prime, \xi_{cr}^\prime)$. 
Observe that for $\xi=\xi_{cr}^\prime$, we have $\mu_1=-c+2d$ and $\mu_2=\nu_2$. 
Respectively, for $\xi=\xi_{cr,1}^\prime$, $\mu_2=-1$ and $\mu_1=\nu_1$. Since 
$\la_0 \in (-c+2d,-1)$, the parameter $\xi_{cr,0}$ is defined by 
\be \label{cond14}
\int_{\la_0}^{-1}\frac{(\la - \mu_1)(\la - \mu_2)}{R^{1/2}(\la)}d\la = 0,
\ee
and therefore 
$\xi_{cr,1}^\prime< \xi_{cr,0} <\xi_{cr}^\prime$.
In fact, the following inequalities are valid
\be \label{ineqxi}
 \xi_{cr,1}<\xi_{cr,1}^\prime< \xi_{cr,0}<\xi_{cr}^\prime<\xi_{cr},
\ee
 where the parameters $\xi_{cr}$ are uniquely defined  by \eqref{xicrit2}--\eqref{cond14} and
 \eqref{condgap}.
These inequalities define three regions with different $g$-functions. In the region $\xi\in(\xi_{cr}^\prime, \,\xi_{cr})$ (resp.\ $\xi\in(\xi_{cr,1}, \,\xi_{cr,1}^\prime)$), a $g$-function will be a good
approximation for the phase function $\Phi$ (resp.\ $\Phi_1$), in the middle region a $g$-function will approximate both phase functions up to the sign. More precisely, in the right (resp.\ left) region we study the RH problem associated
with the right (resp.\ left) scattering data and in the middle we study both problems and compare solutions. The inequalities \eqref{ineqxi} can be verified directly, but we get them as a byproduct of existence of such $g$-functions.

\subsection{Definition and properties of the $g-$function for $\xi \in (\xi_{cr}^\prime,\, \xi_{cr})$} \label{gif} 
With the boundaries of the domains in place, we start by introducing the $g$-function for $ \xi_{cr}^\prime < \xi< \xi_{cr}$. 
Consider two real-valued functions, $\gamma(\xi) \in (-c-2d, -c +2d)$ and
$\mu(\xi) \in (\gamma(\xi), -1)$, such that the following two conditions are satisfied:
\be \label{eqas}
c+2d + \gamma(\xi) + 2 \mu(\xi)=-2\xi,
\ee
and
\be \label{zerocond}
\int_{\gamma(\xi)}^{-1}\frac{(\la - \mu(\xi))(\la - \gamma(\xi))}{R^{1/2}(\la, \gamma(\xi))}d\la = 0,
\ee
where
\be \label{Rxi}
R^{1/2}(\la,\gamma):= -\sqrt{(\la^2-1)(\la+c+2d)(\la-\gamma)}.
\ee
Evidently, we can always choose two points $\gamma\in (-c-2d, -c +2d)$ and $\mu(\gamma) \in (\gamma, -1)$ such
that \eqref{zerocond} holds true. Hence our aim is to show that in the given region, \eqref{eqas} can also be satisfied.

\begin{lemma} \label{abelint}
For any $\xi\in (\xi_{cr}^\prime, \xi_{cr})$ the following is valid:

\begin{enumerate}[(i)] 
 \item 
There exist points $\gamma(\xi)\in (-c-2d, -c +2d)$ and $\mu(\xi)\in (\gamma(\xi),-1)$ satisfying \eqref{eqas}--\eqref{zerocond}, they can be chosen uniquely.
\item
There exist points $\nu_1(\xi),  h(\xi)\in(\gamma(\xi), -1)$ and $\nu_2(\xi)\in\R$ such that
\be\label{cond37}
\frac{(\la - \mu(\xi))(\la-\gamma(\xi))}{R^{1/2}(\la,\gamma(\xi))}=\Omega(\la,\xi) +\xi\omega(\la,\xi),
\ee
where
\[
\Omega(\la,\xi)=\frac{(\la - \nu_1(\xi))(\la-\nu_2(\xi))}{R^{1/2}(\la,\gamma(\xi))},\quad
 \omega(\la,\xi)=\frac{\la - h(\xi)}{R^{1/2}(\la,\gamma(\xi))}
\]
with
\be\label{cond35} \text{\rm (a)}\ \ \int_{\gamma(\xi)}^{-1}\Omega(\la,\xi)d\la=0;\ \ \ \ \text{\rm (b)}\ \ \int_{\gamma(\xi)}^{-1}\omega(\la,\xi)d\la=0,\ee
and
 \be\label{cond333}
 \Omega(\la,\xi)=- 1 +O(\la^{-2}),\quad 
 \omega(\la,\xi)= - \frac{1}{\la} +O(\la^{-2}), \quad
 \mbox{as $\la\to\infty$}.
\ee
\item The following formula is valid
\be\label{deri}
\frac{\pa}{\pa\xi}\frac{(\la - \mu(\xi))(\la-\gamma(\xi))}{R^{1/2}(\la,\gamma(\xi))}=\omega(\la,\xi).
\ee
\item
The point $\gamma(\xi)$ moves continuously to the right from $\gamma(\xi_{cr})=-c-2d$ to 
$\gamma(\xi_{cr}^\prime)=-c+2d$ as $\xi$ decreases.
\end{enumerate}
\end{lemma}
This Lemma is proved in  Appendix A.

Now, let $\M(\xi)$ be the Riemann surface of the function \eqref{Rxi}, and denote by $\Pi_U(\xi)$ and $\Pi_L(\xi)$ its upper and lower sheets.
Set $I(\xi):=[\gamma(\xi),-1]$ and $I_2:=(-\infty, -c-2d]$, and consider these intervals as contours on  $\Pi_U(\xi)$ oriented in positive direction.
\begin{figure}[ht]
\begin{picture}(7,2.8)

\linethickness{0.2mm}
\put(-0.9,1.8){$\Pi_U$}
\put(1,2.1){\line(1, 0){2}}
\put(1,2){\line(1, 0){2}}
\put(1,2.05){\circle*{0.1}}
\put(0.5,1.75){\scriptsize $-c-2d$}
\put(3,2.05){\circle*{0.1}}
\put(2.8,1.75){\scriptsize $\gamma(\xi)$}

\put(3,2.05){\line(1, 0){1,5}}
\put(3.8,2.05){\vector(1,0){0.1}}
\put(3.6,2.25){$I(\xi)$}

\put(4.5,2.05){\circle*{0.1}}
\put(4.2,1.75){\scriptsize $-1$}
\put(4.5,2.1){\line(1, 0){2,3}}
\put(4.5,2){\line(1, 0){2,3}}
\put(6.8,2.05){\circle*{0.1}}
\put(6.7,1.75){\scriptsize $1$}

\put(-0.9,0.3){$\Pi_L$}
\put(1,0.6){\line(1, 0){2}}
\put(1,0.5){\line(1, 0){2}}
\put(1,0.55){\circle*{0.1}}
\put(3,0.55){\circle*{0.1}}

\put(3,0.55){\line(1, 0){1,5}}
\put(3.8,0.55){\vector(-1,0){0.1}}
\put(3.6,0.75){$I^*(\xi)$}

\put(4.5,0.55){\circle*{0.1}}
\put(4.5,0.6){\line(1, 0){2,3}}
\put(4.5,0.5){\line(1, 0){2,3}}
\put(6.8,0.55){\circle*{0.1}}

\linethickness{0.05mm}
\curve(-2,1.5, -1,2.5)
\curve(-2,1.5, 0.5,1.5)

\curve(-2,0, -1,1)
\curve(-2,0, 0.5,0)

\end{picture}
  \caption{The Riemann surface $\M(\xi)$ }\label{fig:cuts}
\end{figure}
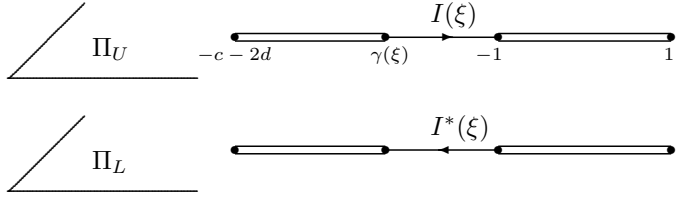

 Let $I^*(\xi)$ and $I_2^*$ be the respective contours on $\Pi_L(\xi)$ with negative direction.
Denote $ \mathbb D(\xi):=\M(\xi)\setminus \left(I(\xi)\cup I^*(\xi)\cup I_2\cup I_2^*\right) $ and 
for $p\in\Pi_U(\xi)\cap \mathbb D(\xi)$, introduce the function
\be\label{defgee}
 g(p):= g(p, \xi)=  \int_{1}^p
 \frac{(\la - \mu(\xi))(\la-\gamma(\xi))}{R^{1/2}(\la,\gamma(\xi))} d \la
\ee
and continue it as an odd function to the lower sheet,
\be \label{godd}
g(p^*)=-g(p).
\ee
Then $g$ is a singe-valued function on $\mathbb D(\xi)$. By \eqref{eqas}, it has the asymptotical behavior
\be\label{eqas5}
\Phi(p,\xi)- g(p,\xi)=K(\xi)  - \frac{2k(\xi)-1}{2p} + O(p^{-2})  \quad \mbox{ as $p\to\infty_+$},
\ee
where  $k(\xi)$ is the real-valued coefficient for the term of order $\frac{1}{p}$ in the expansion of $g(p, \xi)$ with respect to large $p\in\Pi_U(\xi)$, and
\be\label{Kxi}K(\xi)=\lim_{\la\to +\infty}\int_1^\la\left(\frac{(x-\mu(\xi))\sqrt{x-\gamma(\xi)}} {\sqrt{(x^2 - 1)(x+c+2d)}}-\frac{x+\xi}{\sqrt{x^2-1}}\right)dx
\ee
is a real constant.

Lemma~\ref{abelint}, (ii)--(iii), demonstrates an essential property of the $g$-function. 
It is an Abel integral on the Riemann surface $\M(\xi)$ which is represented as a linear 
combination of the normalized Abel integrals of the second and third kind, see Section~\ref{sec:model} for details. 
Both Abel integrals of the second and third kind depend on the parameter $\xi$, but \eqref{deri} shows that the 
derivative of the $g$-function with respect to $\xi$ can be expressed in terms of the Abel integral of the third 
kind only. Another important property of the $g$-function is depicted in Fig.~\ref{fig:Reg}.

\begin{figure}[ht]
\begin{picture}(9,3.2)

\linethickness{0.6mm}
\put(0,1.5){\line(1, 0){2.5}}
\put(0,1.5){\circle*{0.1}}
\put(-0.5,1.2){\scriptsize{$-c-2d$}}
\put(5.5,1.5){\line(1, 0){3.5}}
\put(5.5,1.5){\circle*{0.1}}
\put(5.2,1.2){\scriptsize{$-1$}}
\put(9,1.5){\circle*{0.1}}
\put(8.95,1.2){\scriptsize{$1$}}
\put(1.7,2.2){$+$}
\put(1.4,0.6){$+$}
\put(6,0.4){$-$}
\put(6,2.5){$-$}
\put(0,1.7){$\re g =0$}
\put(7.5,1.7){$\re g =0$}
\put(2.4,2.8){$\re g =0$}

\linethickness{0.2mm}
\curve(2.5,1.5, 3,2.2, 1.5,3)
\curve(2.5,1.5, 3,0.8, 1.5,0)
\put(2.5,1.5){\circle*{0.1}}
\put(2.1,1.2){\scriptsize{$\gamma(\xi)$}}

\curvedashes{0.05,0.05}
\curve(2.5,1.5, 5.5,1.5)
\put(3.8,1.5){\circle*{0.1}}
\put(3.3,1.2){\scriptsize $-c+2d$}
\put(4,1.7){$I(\xi)$}

\end{picture}
  \caption{Sign of $\re g$ on $\Pi_U(\xi)$}\label{fig:Reg}
\end{figure}
\noindent
We observe that the curve $\re g=0$ crosses the real axis namely at the branch point
$\gamma(\xi)$, which allows us to control the signature of the real part more accurately.
The signature table for $\re g$ 
has opposite signs on the lower sheet due to \eqref{godd}.

To describe the jumps of $ g(p, \xi)$ on $I(\xi)\cup I^*(\xi)\cup I_2\cup I_2^*$,
denote
\[
\int_{-c-2d}^{\gamma(\xi)} \frac{(\la - \mu(\xi))(\la-\gamma(\xi))}{R^{1/2}(\la+\I 0, \gamma(\xi))} d\la = \I B(\xi), \quad
\int_{-1}^{1} \frac{(\la - \mu(\xi))(\la-\gamma(\xi))}{R^{1/2}(\la+\I 0, \gamma(\xi))} d\la =\I B^\prime(\xi),
\]
where the integration is taken on $\Pi_U(\xi)$. Abbreviate $\Gamma(\xi):=\Sigma_1(\xi)\cup\Sigma$, where $\Sigma_1(\xi)$ is the contour on $M(\xi)$ along the interval $[-c-2d,\gamma(\xi)]$, oriented clockwise.
\begin{lemma}\label{lemjump} The function $g(p)$ satisfies the following properties:
\begin{align} \label{appen2} 
& g(p, \xi)  =- g(\ol p, \xi)\in\I\R, \quad p\in \Gamma(\xi), \\ \label{ju}
& \E^{t(g_+(p,\xi)- g_-(p,\xi))} =\E^{2\I t B(\xi)}, \quad  p\in I(\xi)\cup I^*(\xi), \\ \label{jumpge}
& \E^{t(g_+(p,\xi)- g_-(p,\xi))} =1, \quad p\in I_2\cup I_2^*.
\end{align}
\end{lemma}
\begin{proof} Property \eqref{appen2} follows from \eqref{zerocond}. Moreover, it is evident that
\begin{align}\label{jumpI3}
 g_+(p,\xi)-g_-(p,\xi)& =-2\I B^\prime(\xi),\quad \mbox{for $p\in I(\xi)\cup I^*(\xi)$}, \\ \label{jumpI4}
 g_+(p,\xi)- g_-(p,\xi)& =-2\I (B +B^\prime)(\xi),\quad \mbox{for $p\in I_2\cup I_2^*$}.
\end{align}
 Let $\mathcal{C}_\rho$ be a circle with radius $\rho$ and clockwise orientation enclosing the interval $[-c-2d,1]$ on the upper sheet.
By \eqref{zerocond},
\[
P(\xi):=\oint_{\mathcal{C}_\rho}\frac{(\la - \mu(\xi))(\la-\gamma(\xi))}{R^{1/2}(\la, \gamma(\xi))} d\la=2\I(B(\xi) + B^\prime(\xi)).
\]
On the other hand,
\[
P(\xi)=2\pi\I\res_{\infty_+}\frac{(\la - \mu(\xi))(\la-\gamma(\xi))}
 {R^{1/2}(\la, \gamma(\xi))} = -2 \pi \I \xi,
\]
due to \eqref{eqas7} and \eqref{eqas}, which implies \eqref{eqas5}.
Hence
\be\label{mainb}B(\xi)+B^\prime(\xi)=-\pi\xi,
\ee
which justifies \eqref{jumpI4}. Since $\xi=\frac{n}{t}$, we obtain $tP(\xi)=2\pi\I n$ and thus \eqref{jumpge}.
We also replace the jump \eqref{jumpI3} by the jump
\[
 g_+(p,\xi)- g_-(p,\xi)=2\I B(\xi)+2\pi\I\xi, \quad p \in I(\xi)\cup I^*(\xi),
\]  from which \eqref{ju} follows.
\end{proof}

\section{Reduction to the model problem for \texorpdfstring{$\xi \in (\xi_{cr}^\prime, \xi_{cr})$}{xi,prime,cr<xi<xi,cr}}
\label{sec:red}

In this section we perform four basic conjugation/deformation steps which allow us to transform the initial RH problem to an equivalent RH problem with a jump matrix close to a constant matrix for large $t$, except for neighborhoods of the points $-c-2d$ and $\gamma(\xi)$. Up to a natural symmetry these steps will be the same in all domains under consideration, and all of them are invertible. Moreover, each step preserves the symmetry condition 
\eqref{symto} and the normalization condition \eqref{eq:normcond}.

Let  $m(p)$ be the solution of the RH problem described in Theorem \ref{thm:vecrhp},
considered for the values $\xi_{cr}^\prime < \xi < \xi_{cr}$. 

\noindent {\it Step 1}. Let  $\M(\xi)$ be the Riemann surface introduced  in Subsection~\ref{gif} and consider 
$m(p)$ as a function on $\M(\xi)$.  Using the symmetry property \eqref{symto} we  rewrite the initial  RH problem as a problem on $\M(\xi)$ with complementary jumps along the contours $[\gamma, -c+2d]=I_3(\xi)\subset \Pi_U(\xi)$ oriented from left to right, and $I_3^*(\xi)\subset \Pi_L(\xi)$, oriented from right to left. The function
\[
 \chi(p) = -\lim_{p^\prime \in \Pi_U \to p \in \Sigma_{1,u}} T_1 (p^\prime,0)\, \ol {T(p^\prime,0)}, 
\] 
is considered as a function on $I_3(\xi)$. Respectively,
 $\chi(p)=-\chi(p^*)$ for $p\in I_3^*(\xi)$. Thus we get an equivalent holomorphic RH problem 
 on $\M(\xi)$ for $m^{(1)}(p)=m(p)$: 
to find a meromorphic function on   $\M(\xi)\setminus\left( \Sigma_1(\xi)\cup I_3(\xi)\cup I_3^*(\xi)\cup \Sigma_2\right)$, satisfying conditions \eqref{polecond}, \eqref{symto}, \eqref{eq:normcond}, and the jump condition 
\noindent $m^{(1)}_+(p)=m^{(1)}_-(p)v^{(1)}(p)$, where
\be\label{jumpdva}
v^{(1)}(p) = \begin{cases}
\begin{pmatrix}
\chi(p)\E^{t(\Phi_+(p) - \Phi_-(p)) } & 1 \\
1 & 0
\end{pmatrix},& p \in \Sigma_1(\xi),\\[4mm]
\begin{pmatrix}
1 & 0 \\
\chi(p)\E^{2t \Phi(p)}  & 1
\end{pmatrix}, & p \in I_3(\xi),\\[4mm]
\sigma_1( v^{(1)}(p^*))^{-1}\sigma_1, & p \in I_3^*(\xi),\\
v(p), & p \in \Sigma.
\end{cases}
\ee

\noindent {\it Step 2}. Factorize the jump matrix $v^{(1)}(p)$ on $\Sigma$ using Schur complements,
\[
v^{(1)}(p)=\begin{pmatrix}
1 & - \ol{R(p)}\E^{-2t \Phi(p)} \\
0&1
\end{pmatrix}
\begin{pmatrix} 1&0\\
R(p)\E^{2t\Phi(p)} & 1
\end{pmatrix}.
\]
\begin{figure}[ht]
\begin{picture}(9,2.6)

\linethickness{0.2mm}
\put(0,1.5){\line(1, 0){2}}
\put(0,1.45){\circle*{0.1}}
\put(-0.5,1.15){\scriptsize $-c-2d$}
\put(0,1.4){\line(1, 0){2}}
\put(2,1.45){\circle*{0.1}}
\put(1.6,1.15){\scriptsize $\gamma(\xi)$}
\put(0.6,1.7){$\Sigma_1(\xi)$}
\put(1.1,1.5){\vector(1,0){0.1}}
\put(0.9,1.4){\vector(-1,0){0.1}}

\put(2,1.45){\line(1, 0){3.5}}
\put(4,1.45){\circle*{0.1}}
\put(3.6,1.15){\scriptsize $-c+2d$}
\put(2.2,1.7){$I_4(\xi)$}
\put(2.5,1.45){\vector(1,0){0.1}}

\put(3,1.45){\circle*{0.1}}
\put(2.9,1.15){\scriptsize $y$}

\put(5.5,1.5){\line(1, 0){3}}
\put(5.5,1.4){\line(1, 0){3}}
\put(5.5,1.45){\circle*{0.1}}
\put(8.5,1.45){\circle*{0.1}}
\put(5.3,1.15){\scriptsize $-1$}
\put(8.45,1.15){\scriptsize $1$}
\put(6.7,1.7){$\Sigma(\xi)$}
\put(7.05,1.5){\vector(1,0){0.1}}
\put(6.85,1.4){\vector(-1,0){0.1}}

\linethickness{0.1mm}
\curve(3,1.5, 6.5,2.4, 9,1.45, 6.5,0.5, 3,1.4)
\put(6.5,2.4){\vector(1,0){0.1}}
\put(6.5,0.5){\vector(-1,0){0.1}}
\put(6.3,0.8){$\Omega(\xi)$}
\put(8,2.4){$\mathcal{C}(\xi)$}

\end{picture}
  \caption{Contour deformation for $m^{(1)}(p)$ on $\Pi_U$}\label{fig:m1}
\end{figure}
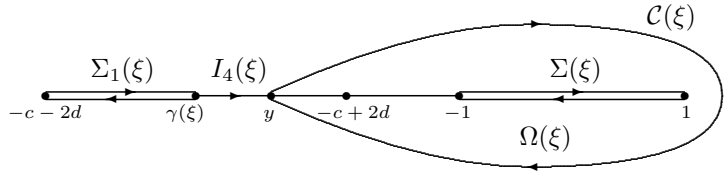

Let $y=y(\xi)$, $\gamma<y<-c+2d$, be the midpoint of $I_3(\xi)$ and let $\Omega(\xi)$
be a domain on the upper sheet as in Fig.~\ref{fig:m1}, inside the domain $\mathfrak D$ given by \eqref{defel}.
Recall that the reflection coefficient can be continued analytically inside $\mathfrak D$, and therefore to $\Omega(\xi)$, and has a pole at $p_0$.
We define $R(p)=\ol {R(p^*)}$ for $p\in\Omega^*(\xi)$. Set 
\be \label{m3}
m^{(2)}(p) = m^{(1)}(p) \begin{cases}
\begin{pmatrix} 1&0\\
-R(p) \E^{2t\Phi(p)} & 1
\end{pmatrix}, & p \in \Omega(\xi),\\[4mm]
\begin{pmatrix}
1 & - R(p^*) \E^{-2t \Phi(p)} \\
0&1
\end{pmatrix}, & p \in \Omega^*(\xi) ,\\[4mm]
\id, & \text{else}.
\end{cases}
\ee
This vector has no  jump along $\Sigma$. Instead it has jumps along the contours $\mathcal C(\xi)$ and 
$\mathcal C^*(\xi)$, which are oriented clockwise. Taking into account that $R(p,t)z^{2n}=R(p)\E^{2 t \Phi(p)}$, and \eqref{defm}, Lemma~\ref{lemsc1}, we conclude that the  components of  $m^{(2)}(p)$ have no poles at $p_0$ and $p_0^*$. Moreover, in the case of a resonance at  $\pi(p)=-c+2d$, the first component  $m^{(2)}_1(p)$ has no singularity at $(-c+2d, +)$ since by \eqref{defm} and Lemma \ref{lemsc1} we have
\[
\lim_{p\to -c+2d} m^{(2)}_1(p) = \lim_{p\to -c+2d} \big(m_1(p) - R(p)\E^{2t\Phi(p)}m_2(p)\big)=\mbox{const}<\infty.
\]
The same is true for the second component of  $m^{(2)}(p)$  at $(-c+2d,-)$.
Thus, $m^{(2)}(p)$ is the unique solution of the following RH problem: to find a holomorphic function 
on $\M(\xi)\setminus(\Sigma_1(\xi)\cup I_3(\xi)\cup I_3^*(\xi)\cup  \mathcal C(\xi)\cup \mathcal C^*(\xi))$, continuous up to the boundary (except possibly at $-c- 2d$,
where poles are admissible for one of the components), which satisfies the jump $m^{(2)}_+(p)=m^{(2)}_-(p)v^{(2)}(p)$ with
\be\label{jumpodin}
v^{(2)}(p) = \begin{cases}
\begin{pmatrix}
\chi(p)\E^{t(\Phi_+(p) - \Phi_-(p)) } & 1 \\
1 & 0
\end{pmatrix},& p \in \Sigma_1(\xi),\\[4mm]
\begin{pmatrix}
1 & 0 \\
\chi(p)\E^{2t \Phi(p)}  & 1
\end{pmatrix}, & p \in I_4(\xi),\\[4mm]
\begin{pmatrix}
1 & \chi(p)\E^{-2t \Phi(p)}  \\
0 & 1
\end{pmatrix}, & p \in I_4^*(\xi),\\[4mm]
\begin{pmatrix}
1 & 0 \\
-R(p) \E^{2 t \Phi(p)} & 1
\end{pmatrix}, & p \in \mathcal{C}(\xi),\\[4mm]
 \begin{pmatrix}
1 & -R(p^*) \E^{-2 t \Phi(p)} \\
0 & 1
\end{pmatrix}, & p \in \mathcal{C}^*(\xi),
\end{cases}
\ee
and standard normalization and symmetry conditions.

Here $I_4(\xi):=[\gamma(\xi), y(\xi)]\subset I_3(\xi)$ with the same orientation as $I_3(\xi)$. Orientation on  
$I_4^*(\xi)$ is preserved from $I_3^*(\xi)$. Note that the jump matrix on 
$I_3(\xi)\setminus I_4(\xi)$ (resp.\ $I_3^*(\xi)\setminus I_4^*(\xi)$) is equal to $\id$, because the lower (resp.\ upper) element of the jump matrix vanishes due to the identity (cf. \cite{egkt})
\be\label{pruff}
R_1(p) - R(p) + \chi(p)=0.
\ee 

\noindent {\it Step 3}. 
Denote
\be \label{defdg}
d(p) =\E^{t(\Phi(p) - g(p))},
\ee
where $g(p)=g(p,\xi)$ is defined by  \eqref{defgee}, and set
\be \label{m1}
m^{(3)}(p)=m^{(2)}(p)[d(p)]^{-\sigma_3}.
\ee
Lemma~\ref{lem:conjug} is applicable for this transformation by \eqref{godd}, \eqref{oddphi}.  Applying 
Lemma~\ref{lemjump} we obtain that the vector function $m^{(3)}(p)$ solves the following RH problem on $\M(\xi)$: $m^{(3)}_+(p)=m^{(3)}_-(p)v^{(3)}(p)$, where
\be\label{jumptre}
v^{(3)}(p) = \begin{cases}
\begin{pmatrix}
\chi(p) & \E^{-2tg(p)} \\
\E^{2 t g(p)} & 0
\end{pmatrix},& p \in \Sigma_1(\xi),\\[4mm]
\begin{pmatrix}\E^{2\I t B} & 0 \\
 0  & \E^{-2\I t B}
\end{pmatrix}, & p \in (I(\xi)\cup I^*(\xi))\setminus (I_4(\xi)\cup I_4^*(\xi)),\\[4mm]
\begin{pmatrix}
\E^{2\I t B} & 0 \\
\chi(p)\E^{2t \re g(p)}  & \E^{-2\I t B}
\end{pmatrix}, & p \in I_4(\xi),\\[4mm]
\begin{pmatrix}
\E^{2\I t B} & \chi(p)\E^{-2t \re g(p)}  \\
0 & \E^{-2 \I t B}
\end{pmatrix}, & p \in I_4^*(\xi), \\[4mm]
\begin{pmatrix}
1 & 0 \\
-R(p) \E^{2 t g(p)} & 1
\end{pmatrix}, & p \in \mathcal{C}(\xi),\\[4mm]
 \begin{pmatrix}
1 & -R(p^*) \E^{-2 t g(p)} \\
0 & 1
\end{pmatrix}, & p \in \mathcal{C}^*(\xi).
\end{cases}
\ee
Note that $m(\infty_\pm)=m^{(2)}(\infty_\pm)$, but the conjugation of Step 3 changes this value 
due to \eqref{eqas5},
\be \label{normm1}
m^{(3)}(\infty_+)=m(\infty_+)
 \begin{pmatrix}
\E^{-tK(\xi)} & 0 \\
0 & \E^{tK(\xi)}
\end{pmatrix}, \quad
m^{(3)}(\infty_-)=m^{(3)}(\infty_+)
 \sigma_1.
\ee
Note that $m^{(3)}$ still obeys \eqref{eq:normcond} and that $K(\xi)$ is defined by \eqref{Kxi}.
Moreover, the symmetry condition is preserved as well.

{\it Step 4}. Our next conjugation step deals with the factorization of the jump matrix on $\Sigma_1(\xi)$.
Consider the following scalar conjugation problem:

{\it  Find a bounded holomorphic function $F(p)$ on $\M(\xi)\setminus(\Sigma_1(\xi)\cup I(\xi)\cup I^*(\xi))$ with $F(p^*)=F^{-1}(p)$, real-valued at $\infty_\pm$, and
satisfying the jump conditions}
\be \label{ff}
F_+(p)=F_-(p) \begin{cases} |\chi(p)|, & p \in \Sigma_1(\xi), \\[1mm]
\E^{\I \Delta(\xi)}, & p\in I(\xi)\cup I^*(\xi).
 \end{cases}
\ee
 The value of the real constant $\Delta(\xi)$ will be specified later in \eqref{Deltaj}. As is shown in Lemma~\ref{omega3} below
 this problem is uniquely solvable.
 Given the solution of \eqref{ff} and taking into account that
$g$ has no jump on $\Sigma_1(\xi)$ and $\chi(p)=-\I |\chi(p)|$ for $p \in \Sigma_{1,\ell}(\xi)$ by \eqref{propchi},
we factorize the conjugation matrix $v^{(3)}(p)$ on $\Sigma_1(\xi)$ according to
\[
v^{(3)}(p) = \begin{cases}
 \begin{pmatrix}
F_-^{-1} & 0 \\
\frac{F_-^{-1}}{\chi} \E^{2 t g(p)} & F_-
\end{pmatrix}
\begin{pmatrix}
\I & 0 \\
0 & \I
\end{pmatrix}
 \begin{pmatrix}
F_+ & \frac{F_+}{\chi}\E^{-2 t g(p)} \\
0 & F_+^{-1}
\end{pmatrix}, & p \in \Sigma_{1,u}(\xi),  \\[4mm]
 \begin{pmatrix}
F_-^{-1} & 0 \\
\frac{F_-^{-1}}{\chi} \E^{2 t g(p)} & F_-
\end{pmatrix}
\begin{pmatrix}
-\I & 0 \\
0 & -\I
\end{pmatrix}
 \begin{pmatrix}
F_+ & \frac{F_+}{\chi}\E^{-2 t g(p)} \\
0 & F_+^{-1}
\end{pmatrix}, & p \in \Sigma_{1,\ell}(\xi).
\end{cases}
\]
Next introduce lens-shaped domains $\Omega_1(\xi)$ and $\Omega_1^*(\xi)$ around the contour
$\Sigma_1(\xi)$ as depicted in Fig.~\ref{fig:S1} with $\Omega_1(\xi)\subset \mathfrak D$
\begin{figure}[ht]
\begin{picture}(4,2)

\linethickness{0.2mm}
\put(0.5,1.5){\line(1, 0){3.5}}
\put(0.5,1.4){\line(1, 0){3.5}}
\put(2.2,1.5){\vector(1,0){0.1}}
\put(0.5,1.45){\circle*{0.1}}
\put(4,1.45){\circle*{0.1}}
\put(2,1.7){$\Sigma_{1}(\xi)$}
\put(1.4,0.9){$\Omega_1(\xi)$}
\put(3.3,2){$\mathcal{C}_1(\xi)$}

\linethickness{0.1mm}
\curve(4,1.4, 1.5,0.6, 0,1.45, 1.5,2.3, 4,1.5)
\put(1.5,2.3){\vector(1,0){0.1}}
\put(1.5,0.6){\vector(-1,0){0.1}}
\end{picture}\hspace{1.5cm}
\begin{picture}(4,3)

\linethickness{0.2mm}
\curvedashes{0.06,0.06}
\curve(0.5,1.5, 4,1.5)
\curve(0.5,1.4, 4,1.4)
\put(2.2,1.5){\vector(1,0){0.1}}
\put(0.5,1.45){\circle*{0.1}}
\put(4,1.45){\circle*{0.1}}
\put(2,1.7){$\Sigma_{1}(\xi)$}
\put(1.4,0.9){$\Omega_1^*(\xi)$}
\put(3.3,2){$\mathcal{C}_1^*(\xi)$}

\linethickness{0.1mm}
\curvedashes{0.06,0.06}
\curve(4,1.4, 1.5,0.6, 0,1.45, 1.5,2.3, 4,1.5)
\put(1.5,2.3){\vector(1,0){0.1}}
\put(1.5,0.6){\vector(-1,0){0.1}}
\end{picture}
  \caption{The lens contour near $\Sigma_1(\xi)$. Views from the upper and lower sheet.
  Dotted curves lie on the lower sheet.}\label{fig:S1}
\end{figure}
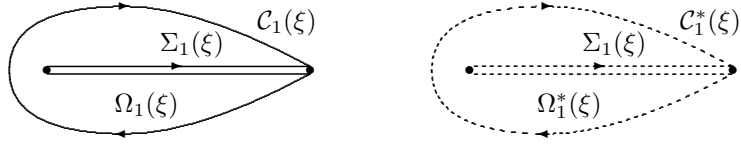
and transform the vector $m^{(3)}(p)$ as follows:
\be\label{em4}
m^{(4)}(p) =m^{(3)}(p)\begin{cases}
 \begin{pmatrix} F^{-1}(p) &  - \frac{F(p)}{V(p)}\E^{-2 t g(p)}\\
0 & F(p)
\end{pmatrix}, & p \in \Omega_1(\xi),\\[4mm]
\begin{pmatrix} F^{-1}(p) & 0 \\
\frac{F^{-1}(p)}{V(p)}\E^{2 t g(p)} & F(p)
\end{pmatrix}, & p \in \Omega_1^*(\xi),\\[4mm]
\begin{pmatrix} F^{-1}(p) & 0 \\
0 & F(p)
\end{pmatrix}, & \text{else},
\end{cases}
\ee
where $V(p)$ is an analytical continuation of $\chi(p)=-\ol{T(p)}T_1(p)$ to $\Omega_1(\xi)$ and $V(p^*)=-V(p)$ for $p\in\Omega_1^*(\xi)$. This transformation preserves the symmetry property for  $m^{(4)}$ due to $F(p^*)=F^{-1}(p)$. Since $F(\infty_+)\in\R_+$ (see Lemma \ref{omega3} below), the normalization property is also preserved for 
$m^{(4)}$. Recall that the only singularity at an edge point of the current jump contours which can happen 
for $m^{(3)}(p)$, is the singularity at $-c-2d$, where $m^{(3)}_1(p)=C(p+c+2d)^{-1/2}(1+o(1))$, $p\in\Pi_U$ in the resonant case. But in this case $\chi(p)=C(p+c+2d)^{-1/2}(1+o(1))$, $p\in\Sigma_1$, and therefore 
$F(p)=C (p+c+2d)^{-1/4}(1+o(1))$ for $p\in \Pi_U$ and $F(p)=C
(p+c+2d)^{1/4}(1+o(1))$ for $p\in\Pi_L$ (cf.\ \cite{mush}). Here $C$ denotes arbitrary non-vanishing constants. 
Thus, in a vicinity of $-c-2d$ we have by \eqref{em4} that
$m^{(4)}_1(p)=m^{(3)}_1(p)F^{-1}(p)=C((p+c+2d)^{-1/4})(1 + o(1))$  for  $p\in \Pi_U$  in the resonant case. 
In the nonresonant case $\chi(p)= C(p+c+2d)^{1/2}(1+o(1))$, and $F^{-1}(p)= C
(p+c+2d)^{-1/4}(1+o(1))$ for $p\in \Pi_U$. Consequently $m^{(4)}_1(p)=C((p+c+2d)^{-1/4})$, $p\in \Pi_U$,  in the nonresonant case too.
Denote  
\[
\Gamma(\xi):=\Sigma_1(\xi)\cup\Sigma\cup I(\xi)\cup I^*(\xi)\cup\mathcal C_1(\xi)\cup\mathcal C_1^*(\xi)\cup \mathcal C(\xi)\cup \mathcal C^*(\xi),
\]
where the orientation is chosen as before. We proved the following

\begin{theorem}\label{th:em4} For any $\xi\in (\xi_{cr}^\prime, \xi_{cr})$, the initial RH problem formulated in Theorem~\ref{thm:vecrhp} is equivalent to the following RH problem:
To find a  holomorphic vector function $m^{(4)}(p)$ in the domain $\M(\xi)\setminus \Gamma(\xi)$, with both components continuous up to the boundary 
except at the point $-c-2d$, where 
\be\label{resonance}
m^{(4)}_1(p)=m^{(4)}_2(p^*)=C 
(p+c+2d)^{-1/4}(1+o(1)),\quad  \mbox{for } p\in \Pi_U, \quad C\neq 0.
\ee
This vector function satisfies the symmetry and normalization conditions from Theorem~\ref{thm:vecrhp} 
and the jump condition $m^{(4)}_+(p)=m^{(4)}_-(p)v^{(4)}(p)$ with
\[
v^{(4)}(p) = \begin{cases}
\quad \I \id, & p \in \Sigma_{1,u}(\xi),\\
 \ -\I \id, & p \in \Sigma_{1,\ell}(\xi),\\
\begin{pmatrix}
\E^{2\I tB+\I\Delta} & 0 \\
0 & \E^{-2\I tB-\I\Delta}
\end{pmatrix}, & \hspace{-2.5cm}p \in (I(\xi)\cup I^*(\xi))\setminus (I_4(\xi)\cup I_4^*(\xi)),\\[4mm]
\begin{pmatrix}
\E^{2\I t B+\I\Delta} & 0 \\
\chi(p)F_+^{-1}(p)F_-^{-1}(p)\E^{2t \re g(p)}  & \E^{-2\I t B-\I\Delta}
\end{pmatrix}, & p \in I_4(\xi),\\[4mm]
\begin{pmatrix}
\E^{2\I t B+\I\Delta} & \chi(p)F_+(p)F_-(p)\E^{-2t \re g(p)}  \\
0 & \E^{-2 \I t B-\I\Delta}
\end{pmatrix}, & p \in I_4^*(\xi),\\[4mm]
 \begin{pmatrix}
1 & F^2(p)V^{-1}(p)\E^{-2 t g(p)} \\
0 & 1
\end{pmatrix}, & p \in \mathcal{C}_1(\xi),\\[4mm] 
\begin{pmatrix}
1 & 0 \\
- F^{-2}(p)V^{-1}(p)\E^{2 t g(p)} & 1
\end{pmatrix}, & p \in \mathcal{C}_1^*(\xi),\\[4mm]
\begin{pmatrix}
1 & 0 \\
-F^{-2}(p)R(p) \E^{2 t g(p)} & 1
\end{pmatrix}, & p \in \mathcal{C}(\xi),\\[4mm]
 \begin{pmatrix}
1 & -R(p^*)F^2(p) \E^{-2 t g(p)} \\
0 & 1
\end{pmatrix}, & p \in \mathcal{C}^*(\xi).\\[4mm]
\end{cases}
\]
Here $\id$ is the identity matrix. Moreover, the solutions of the initial and present RH problems for large $p$ 
are connected by 
\be \label{soed}m^{(4)}(p)=m(p)\begin{pmatrix} s^{-1}(p)&0\\0& s(p)\end{pmatrix}, \quad 
\mbox{where $s(p)=\E^{ t (\Phi(p) - g(p))}F(p)$}.
\ee \end{theorem}
Observe that for $p\in\mathcal C(\xi)\cup\mathcal C^*(\xi)$,
\[
0<\hat C \E^{-t \hat h(\xi)}\leq |F^{-2}(p)R(p)\E^{2tg(p)}|\leq C \E^{-t h(\xi)}, \quad 
\mbox{where $0<\hat h(\xi)<h(\xi)$},
\]
that is, as $t\to\infty$ the jump matrix $v^{(4)}(p)$ is exponentially close to the identity matrix on $\mathcal C(\xi)\cup\mathcal C^*(\xi)$.
The same holds true for $v^{(4)}(p)$ on $\mathcal C_1(\xi)\cup\mathcal C_1^*(\xi)$, 
except for a small neighborhood of the point $\gamma$. Since  
\[
|\chi(p)F_+^{-1}(p)F_-^{-1}(p)| +|\chi(p^*)F_-(p^*)F_+(p^*)|\leq C\E^{-t h(\xi, \varepsilon)},
\quad  p\in \mathcal C_1(\xi)\setminus \mathcal O_\varepsilon,
\]
where $h(\xi,\varepsilon)>0$ and  $\mathcal O_\varepsilon$ is a small neighborhood of $\gamma$, 
we conclude that outside of $\mathcal O_\varepsilon$,  $v^{(4)}(p)$ is exponentially 
close to the following matrix:
\be\label{modv}
v^{mod}(p)=
\left\{
 \begin{array}{ll}
\quad \I \id, & p \in \Sigma_{1,u}(\xi),\\
 \ -\I \id, & p \in \Sigma_{1,\ell}(\xi),\\[1mm]
\begin{pmatrix}
\E^{2\I tB+\I\Delta} & 0 \\
0 & \E^{-2\I tB-\I\Delta}
\end{pmatrix},& p \in I(\xi)\cup I^*(\xi).
\end{array}
\right.
\ee
Thus
we may expect that the solution of the RH problem for $m^{(4)}(p)$ can be approximated by the solution of the following model RH problem:
To find a holomorphic function in $\M(\xi)\setminus\big(\Sigma_1(\xi)\cup I(\xi)\cup I^*(\xi)\big)$ 
\be\label{moo}m^{mod}(p)=(m_1^{mod}(p), m_2^{mod}(p))
\ee
which is continuous up to the boundary except of possibly the points $-c-2d$ and $\gamma$, 
satisfies the jump condition
\be\label{moddd}
m_+^{mod}(p)=m_-^{mod}(p)v^{mod}(p)
\ee
with jump matrix \eqref{modv}, the standard symmetry condition
 \be\label{symmod}
 m^{mod}(p^*)=m^{mod}(p) \begin{pmatrix}0 & 1 \\ 1 & 0 \end{pmatrix},
 \ee 
and the normalization condition
\be\label{condmod}
 m_1^{mod}(\infty_+) m_2^{mod}(\infty_+)=1, \quad  m_1^{mod}(\infty_+)>0.
\ee
We will solve this problem explicitly in the next section.

\section{Solution of the model problem}
\label{sec:model}

We first choose a canonical basis of $\mathfrak a$ and $\mathfrak b$ cycles for the Riemann surface $\M(\xi)$ associated with the function \eqref{Rxi}.
The $\mathfrak b$ cycle surrounds the interval $[-c-2d, \gamma(\xi)]$ counterclockwise on the upper sheet $\Pi_U(\xi)$ and the $\mathfrak a$ cycle coincides with $I(\xi)\cup I^*(\xi)$, passing from $\gamma(\xi)$ to $-1$ on the upper sheet and back from $-1$ to $\gamma(\xi)$ on the lower sheet. Let $\omega_{p p^*}$, $p\in \M(\xi)\setminus (I(\xi)\cup I^*(\xi))$, be the normalized Abel differential
 of the third kind with poles at $p$ and $p^*$ such that
 \be\label{normomega}
 \int_{\mathfrak a} \omega_{p p^*}=0.
 \ee
 Then it can be represented as (see \cite{Teschl1})
\be\label{omm}
\omega_{pp^*} = \left(\frac{R^{1/2}(p,\gamma)}{\la - \pi(p)} + G(p) \right) \frac{d\la}{R^{1/2}(\la,\gamma)},
\ee
where $G(p)$ is determined from the normalization condition \eqref{normomega}.
Hence
\be\label{ggd}
G(p) = \frac{R^{1/2}(p,\gamma)}{Y}\int_{\mathfrak a}\frac{d\la}{(\pi(p)-\la)R^{1/2}(\la,\gamma)},
\ee
where
\be\label{normA}
Y=Y(\xi):=\int_{\mathfrak a}\frac{d\la}{R^{1/2}(\la,\gamma)}>0.
\ee

\begin{lemma}\label{omega2}
Uniformly for $\la\in\Sigma_1(\xi)$ there exists
\[
\lim_{p \rightarrow \infty_\pm} \omega_{pp^*}= \omega_{\infty_\pm \infty_\mp}.
\]
\end{lemma}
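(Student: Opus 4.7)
My plan is to work directly with the explicit formula \eqref{omm}--\eqref{ggd} and extract the limit by a Laurent expansion in $w := \pi(p)$. Writing $\omega_{p p^*} = \Omega(p,\la)\,d\la$, the two summands
\[
\frac{R^{1/2}(p,\gamma)}{\la - w}\cdot\frac{1}{R^{1/2}(\la,\gamma)}, \qquad \frac{G(p)}{R^{1/2}(\la,\gamma)}
\]
each blow up linearly in $w$ as $p\to\infty_+$, so the heart of the argument is to verify that their divergent parts cancel.

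First I would record the asymptotics $R^{1/2}(p,\gamma) = -w^2\bigl(1 + a_1 w^{-1} + O(w^{-2})\bigr)$ on $\Pi_U(\xi)$ with $a_1 = (c+2d-\gamma)/2$, obtained by expanding $(\la^2-1)(\la+c+2d)(\la-\gamma)$ at $\la=\infty$. Combined with the uniformly (for $\la\in\Sigma(\xi)$ and $w$ large) convergent geometric expansion $(w-\la)^{-1} = \sum_{k\ge 0}\la^k w^{-k-1}$, this gives
\[
\frac{R^{1/2}(p,\gamma)}{\la - w} = w + (\la + a_1) + O(w^{-1}).
\]
Substituting the same geometric series into \eqref{ggd} yields
\[
\int_{\mathfrak a}\frac{d\la'}{(w-\la')R^{1/2}(\la',\gamma)} = \frac{\Gamma}{w} + \frac{M_1}{w^2} + O(w^{-3}),
\]
where $M_k := \int_{\mathfrak a}\la'^k\,d\la'/R^{1/2}(\la',\gamma)$ and $M_0 = \Gamma$, so that $G(p) = -w - a_1 - M_1/\Gamma + O(w^{-1})$. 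Adding the two expansions, the $w$- and $a_1$-terms cancel, leaving
\[
\Omega(p,\la) \longrightarrow \frac{\la - M_1/\Gamma}{R^{1/2}(\la,\gamma)}\qquad \text{as } p\to\infty_+,
\]
with the error $O(w^{-1})$ uniform in $\la\in\Sigma(\xi)$, because $\Sigma(\xi)$ is compact and stays bounded away from $w$ once $|w|$ is large enough to make $|w-\la|^{-1} = O(w^{-1})$ uniformly.

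To finish I would identify the limiting differential with $\omega_{\infty_+\infty_-}$. The candidate $(\la - M_1/\Gamma)\,d\la/R^{1/2}(\la,\gamma)$ is meromorphic on $\M(\xi)$, holomorphic on its punctured surface, with simple poles only at $\infty_\pm$ and residues $\pm 1$ there (a direct check in the local parameter $z=1/\la$); moreover $M_1/\Gamma$ is the unique constant enforcing $\int_{\mathfrak a}(\la - M_1/\Gamma)R^{-1/2}(\la,\gamma)\,d\la = M_1 - (M_1/\Gamma)\Gamma = 0$, i.e.\ the normalization \eqref{normomega}. By uniqueness of the normalized third-kind differential the limit therefore equals $\omega_{\infty_+\infty_-}$. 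The case $p\to\infty_-$ is handled identically, except that $R^{1/2}(p,\gamma)\sim +w^2$ on $\Pi_L(\xi)$; the same cancellation produces $\omega_{\infty_-\infty_+} = -\omega_{\infty_+\infty_-}$ in accordance with $\omega_{pp^*}$ having residue $+1$ at $p$. No real obstacle appears beyond keeping the bookkeeping honest; the uniformity in $\la$ on $\Sigma(\xi)$ is essentially free, since $\Sigma(\xi)$ is a fixed compact set disjoint from the shrinking neighbourhood of $\infty_\pm$ we send $p$ into.
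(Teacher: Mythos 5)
Your argument is correct and follows essentially the same route as the paper: expand $R^{1/2}(p,\gamma)$ and $G(p)$ at $\infty_\pm$ via the geometric series, observe that the linear and constant divergent terms cancel, and identify the surviving density $(\la - M_1/\Gamma)/R^{1/2}(\la,\gamma)$ (the paper's $(\la-b)/R^{1/2}(\la,\gamma)$) with the normalized third-kind differential. The only cosmetic difference is that you justify the final identification by uniqueness of the normalized differential with prescribed poles and residues, where the paper appeals directly to the definition in \cite{tjac} together with \eqref{cond35}--\eqref{cond36}.
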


\begin{proof}
For example, let $p\in \Pi_U$. For $p\to\infty_+$,
\be\label{RR}
R^{1/2}(p,\gamma)=- p^2 +\frac{ \gamma-c-2d}{2}p + f(\gamma,p), \quad \mbox{ where $f(\gamma,p)= O\left(1\right)$.}
\ee
Then \eqref{ggd} and \eqref{RR} imply
\begin{align}
Y G(p)&=\int_{\mathfrak a}\frac{R^{1/2}(p,\gamma)}{p\big(1-\frac{\la}{p}\big) R^{1/2}(\la,\gamma)}d\la=\frac{R^{1/2}(p,\gamma)}{p} \int_{\mathfrak a}\frac{\big(1 +\frac{\la}{p} +\frac{\la^2}{p^2} \big)d\la}{R^{1/2}(\la,\gamma)}+
O(p^{-2})\nn \\
& =Y\,\frac{R^{1/2}(p,\gamma)}{p}\left(1+\frac{b}{p} +\frac{a}{p^2}\right) +O(p^{-2}), \label{gpg}
\end{align}
where the coefficients $a=a(\xi)$ and $b=b(\xi)$ are defined as 
\begin{align}\label{cond41}
  a:&=a(\xi)=\int_{I(\xi)}\frac{\la^2\,d\la}{R^{1/2}(\la,\gamma)}\left(\int_{I(\xi)}\frac{d\la}{R^{1/2}(\la,\gamma)}\right)^{-1},\\
\label{cond40}b:&=b(\xi)=\int_{I(\xi)}\frac{\la\, d\la}{R^{1/2}(\la,\gamma)}\left(\int_{I(\xi)}\frac{d\la}{R^{1/2}(\la,\gamma)}\right)^{-1}.
\end{align}Substituting \eqref{gpg} and \eqref{RR} in \eqref{omm} the following holds for any fixed $\la$ as $p\to\infty_+$
\begin{align} \nonumber
\omega_{p p^*}&=\frac{R^{1/2}(p,\gamma) + G(p)(\la - p) }{(\la - p)R^{1/2}(\la,\gamma)}=
-\frac{\frac{R^{1/2}(p,\gamma)}{p} +\big(\frac{\la}{p} - 1\big)G(p)}{\big(1 -\frac{\la}{p}\big)R^{1/2}(\la,\gamma)} \\ 
 \label{decomp3}
&=\frac{\la -b(\xi)}{R^{1/2}(\la,\gamma)} +\frac{1}{p}\frac{q(\la)}{R^{1/2}(\la,\gamma)} +
\frac{1}{p^2}\frac{f(\la,p)}{R^{1/2}(\la,\gamma)},
\end{align}
where
\be\label{PP}
q(\la):=q(\la,\xi)=\la^2 -\nu(\xi)\la +\nu(\xi) b(\xi) - a(\xi),\quad \nu(\xi)=\frac{\gamma(\xi) - c- 2d}{2}.
\ee
The function $f(\la,p,\xi)$ is uniformly bounded with respect to $(\la,p)$ on any compact
set for $\la$ as $p\to \infty_+$. 
Next, from condition \eqref{cond35} (b), we get
 \be\label{cond36}
 h:=h(\xi)=b(\xi).
 \ee We see
that the first summand in \eqref{decomp3} corresponds to the definition of $\omega_{\infty_+, \infty_-}$ (see \cite{tjac}). The case $p\to\infty_-$ is analogous.
\end{proof}

Equation \eqref{ggd} also implies the continuity of $G(p)$ on $\M(\xi)\setminus (I(\xi)\cup I^*(\xi))$, 
but along $I(\xi)\cup I^*(\xi)$ (oriented as above) the function $G(p)$ has a jump
\be\label{ggjump}
G_+(p)-G_-(p)=-\frac{2\pi \I}{Y},\quad p\in I(\xi)\cup I^*(\xi),
\ee
where the constant $Y$ is defined by \eqref{normA}.
To prove \eqref{ggjump} one applies the Sokhotski--Plemelj formula to \eqref{ggd}.
Similarly for any $\la\in\Sigma_1(\xi)\cup\Sigma$,
\[
\left(\omega_{p p^*}\right)_+ -\left(\omega_{p p^*}\right)_-
= -\frac{2\pi \I}{Y}\frac{d\la}{R^{1/2}(\la,\gamma)}=-2\pi\I\zeta,
\]
where
\[
\zeta= \frac{1}{Y}\frac{d\la}{R^{1/2}(\la,\gamma)}
\]
is the holomorphic Abel differential, normalized by the condition $\int_{\mathfrak a}\zeta=1$.
In summary, we proved the following

\begin{lemma}\label{omega3}
The function
\be\label{formF}
F(p):=\exp\left(\frac{1}{2\pi\I}\int_{\Sigma_1(\xi)}\log |\chi| \omega_{p p^*}\right)
\ee
solves the conjugation problem \eqref{ff} with
\be\label{Deltaj}
\Delta(\xi):=\I\int_{\Sigma_1(\xi)} \log |\chi|\zeta=
\frac{\I}{Y}\int_{\Sigma_1(\xi)} \frac{\log |\chi(\la)| d\la}{R^{1/2}(\la,\gamma)}\in\R.\ee
\end{lemma}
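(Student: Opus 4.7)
The plan is to verify the three required properties of $F$: the jumps on $\Sigma_1(\xi)$ and on $I_3\cup I_3^*$, the symmetry $F(p^*)=F(p)^{-1}$, and the boundedness at $\infty_\pm$, then show that $\Delta(\xi)$ is real. All four properties follow from the local behavior of the kernel $\omega_{pp^*}$ as its parameter $p$ crosses the respective contour, which is already encoded in the representation \eqref{omm}, in the calculation \eqref{ggjump}, and in Lemma~\ref{omega2}.

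First, for the jump on $\Sigma_1(\xi)$, note that as a function of the integration variable $q$, the differential $\omega_{pp^*}(q)$ has a simple pole at $q=p$ with residue $+1$. Hence for $p_0\in\Sigma_1(\xi)$ the integral $\frac{1}{2\pi \I}\int_{\Sigma_1(\xi)}\log|\chi(q)|\,\omega_{pp^*}(q)$ is a Cauchy-type integral in $p$, and the classical Sokhotski--Plemelj formula yields $\log F_+(p_0)-\log F_-(p_0)=\log|\chi(p_0)|$, i.e.\ $F_+/F_-=|\chi|$. For the jump on the $\mathfrak a$-cycle $I_3\cup I_3^*$, I use exactly the jump identity for $\omega_{pp^*}(q)$ derived before the lemma from \eqref{ggjump}, namely that for $p$ crossing $I_3\cup I_3^*$ (with $q$ fixed on $\Sigma_1(\xi)$) the differential jumps by $-2\pi\I\,\zeta(q)$. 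Substituting this into the definition of $F$ gives
\[
\log F_+(p)-\log F_-(p)=\frac{1}{2\pi\I}\int_{\Sigma_1(\xi)}\log|\chi(q)|\,(-2\pi\I)\zeta(q)=-\int_{\Sigma_1(\xi)}\log|\chi|\,\zeta=\I\Delta(\xi),
\]
using the definition \eqref{Deltaj} in the last step.

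Next, the symmetry follows from $\omega_{p^*p}=-\omega_{pp^*}$, a standard fact for Abelian differentials of the third kind (swapping the two poles reverses the sign, and the normalization \eqref{normomega} over the $\mathfrak a$-cycle is preserved). Inserting this into the formula \eqref{formF} immediately yields $\log F(p^*)=-\log F(p)$, hence $F(p^*)=F(p)^{-1}$. Boundedness at $\infty_\pm$ is a direct consequence of Lemma~\ref{omega2}, which gives that $\omega_{pp^*}\to \omega_{\infty_\pm\infty_\mp}$ uniformly in $q\in\Sigma_1(\xi)$, so the finite limit $\log F(\infty_\pm)$ exists.

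Finally, to verify that $\Delta(\xi)\in\R$ I exploit the two-sheeted structure of $\Sigma_1(\xi)$. On $\Sigma_{1,u}$ the quantity $(\la^2-1)(\la+c+2d)(\la-\gamma)$ is negative for $\la\in(-c-2d,\gamma)$, so $R^{1/2}(\la+\I 0,\gamma)$ is purely imaginary there, with opposite imaginary values on $\Sigma_{1,\ell}$. Combining the contributions from $\Sigma_{1,u}$ and $\Sigma_{1,\ell}$ (traversed with opposite orientation) and using that $|\chi|$ has no jump across $\Sigma_1(\xi)$, the integral $\int_{\Sigma_1(\xi)}\log|\chi|\,\zeta$ reduces to $\frac{2}{\Gamma}\int_{-c-2d}^{\gamma(\xi)}\frac{\log|\chi(\la)|\,d\la}{R^{1/2}(\la+\I 0,\gamma)}$, which is purely imaginary; multiplication by $\I$ then gives a real $\Delta(\xi)$. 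The main bookkeeping obstacle is keeping track of orientations of $\Sigma_{1,u}$ and $\Sigma_{1,\ell}$ and of the sign conventions for $R^{1/2}$ on the two sides of the cut in both the Plemelj step and the reality computation; everything else is a routine application of the formulas already established above.
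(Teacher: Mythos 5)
Your proof is correct and follows essentially the same route as the paper: the Sokhotski--Plemelj formula applied to the Cauchy-type kernel \eqref{omm} for the jump on $\Sigma_1(\xi)$, the previously derived jump $(\omega_{pp^*})_+-(\omega_{pp^*})_-=-2\pi\I\zeta$ across $I_3\cup I_3^*$ for the factor $\E^{\I\Delta}$, the antisymmetry $\omega_{p^*p}=-\omega_{pp^*}$ for the symmetry condition, Lemma~\ref{omega2} for boundedness at $\infty_\pm$, and the pure imaginarity of $R^{1/2}$ on the cut for $\Delta(\xi)\in\R$. The paper states the lemma as an immediate consequence of exactly these preceding computations, so your write-up is a faithful (and more explicit) version of the intended argument.
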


Note that this function has finite real limits at $\infty_\pm$ by Lemma \ref{omega2}. It satisfies
$F(p^*)=F^{-1}(p)$ because $\omega_{p p^*}=-\omega_{p^* p}$.
Moreover, \eqref{decomp3} implies that for $p\to \infty_+$
\be\label{decomp4}
F(p)=\exp\left(\frac{1}{2\pi\I}\int_{\Sigma_1(\xi)}\log |\chi| \omega_{\infty_+ \infty_-}
\right)\left( 1+ \frac{Q(\xi)}{p} +O\left(\frac{1}{p^2}\right)\right),
\ee
where
\be\label{QQ}
Q(\xi)=\int_{\Sigma_1(\xi)}\frac{\log |\chi(\la)|\,q(\la,\xi)d\la}{2\pi\I\,R^{1/2}(\la,\gamma)}\in\R,
\ee 
with $q(\la,\xi)$  defined by  \eqref{PP}.

Denote by
\[
\tau=\tau(\xi)=\int_{\mathfrak b}\zeta=- \frac{1}{Y}\int_{\Sigma_1(\xi)}\frac{d\la}{R^{1/2}(\la,\gamma)}
\]
the $\mathfrak b$-period of the normalized holomorphic Abel differential $\zeta$. Since $Y>0$
by \eqref{normA} we have $\tau\in\I \R_+$. Introduce the theta function
\[
\theta(v):=\theta(v\,|\,\tau)=\sum_{m\in\Z} \exp\big(\pi\I m^2\tau + 2\pi\I m v\big)
\]
and the Abel map $A(p):=A(p,\xi)=\int_{-c-2d}^p \zeta$, which has the following properties (cf. \cite{FK,tjac}):
\begin{align} \nonumber
& A(p^*)= - A(p), \qquad  A(\gamma)=-\frac{\tau}{2} \ (\mbox{mod $\tau$}), \\ \nonumber
& A_+(p)-A_-(p)=-\tau \quad \mbox{for} \quad p\in I(\xi)\cup I^*(\xi),\\ \label{new8}
& A(p)=A(\infty_\pm) \pm\frac{1}{Y \la} + O(\la^{-2}), \quad p=(\la,\pm)\to\infty_\pm, \\ \label{proppp}
& 2A(\infty_+)=A(\infty_+) - A(\infty_-)=\frac{1}{2\pi\I}\Lambda,
\end{align}
where
\be\label{omper} \Lambda=\Lambda(\xi)=\int_{\mathfrak b}\omega_{\infty_+ \infty_-}.
\ee
Let $\Xi=\frac{\tau}{2} + \frac{1}{2}$ be the Riemann constant. Then the functions
$\theta\big(A(p) +\frac{\tau}{2} - \Xi\big)$ and $\theta\big(A(p^*) +\frac{\tau}{2} - \Xi\big)$ both have their
only zeros at $\gamma$. Denote
\be\label{alphaa}
\alpha(p):=\alpha(p,\xi)=
\frac{\theta\big(A(p) +\frac{\tau(\xi)}{2} - \Xi(\xi)+\frac{t B(\xi)}{\pi} +\frac{\Delta(\xi)}{2\pi}\big)}{\theta\big(A(p) +\frac{\tau(\xi)}{2} - \Xi(\xi)\big)}.
\ee
Standard properties of theta functions show that this
function is holomorphic on $\M(\xi)$ except at the point $\gamma$, and has a single jump along $I(\xi)\cup I^*(\xi)$,
\begin{align*}
\alpha_+(p)& =\alpha_-(p) \E^{2 \I t B(\xi) + \I\Delta(\xi)}, \\
\alpha_+(p^*)& =\alpha_-(p^*) \E^{-2 \I t B(\xi) - \I\Delta(\xi)}.
\end{align*}
Now, on the upper sheet $\Pi_U(\xi)$, introduce the function
\be\label{delta}
\delta(p)=\sqrt[4]{\frac{\pi(p) -\gamma}{\pi(p) + c +2d}},
\ee
where the branch of $\sqrt[4]{\cdot}$ is chosen to take positive values for $\pi(p)>\gamma$. Hence this function also takes  positive values for $p<-c-2d$.
We continue $\delta(p)$ by $\delta(p^*)=\delta(p)$ as an even function to $\Pi_L(\xi)$. Observe that $\delta(p)$ has no jump on $\Sigma$,
because it is real-valued and takes equal values at symmetric points of $\Sigma_u$ and $\Sigma_\ell$ on $\Pi_U$. On $\Sigma_1(\xi)$, it solves the following conjugation problem:
\[
\delta_+(p)=\delta_-(p)
\begin{cases}
\I, & p \in \Sigma_{1,u}(\xi),\\[1mm]
 -\I,& p\in \Sigma_{1,\ell}(\xi).
\end{cases}
\]
Thus the vector $\hat m(p)=\left(\delta(p)\alpha(p), \delta(p^*)\alpha(p^*)\right)$ solves the jump problem
$\hat m_+(p)=\hat m_-(p) v^{mod}(p)$, where $v^{mod}(p)$ is defined by \eqref{modv}, and satisfies the symmetry condition \eqref{symto}.
But $\hat m(p)$ does not satisfy the normalization condition, because $\hat m_1 (\infty_+)
\hat m_2 (\infty_+)\neq 1$. To mend this, recall that $\delta(\infty_+)=\delta(\infty_-)=1$. Hence
\be\label{mmodf}
m^{mod}(p)=\frac{1}{\sqrt{\alpha(\infty_+)\alpha(\infty_-)}}
\left(\delta(p)\alpha(p), \delta(p^*)\alpha(p^*)\right)
\ee
solves the model problem including \eqref{condmod}. Note that this solution is bounded everywhere in $\M(\xi)$ except at the branch points $-c-2d$, $\gamma(\xi)$, where both components of $m^{mod}(p)$ have singularities of type $O\big((p+c+2d)^{-1/4}\big)$ and $O\big((p-\gamma)^{-1/4}\big)$. The first singularity matches well with \eqref{resonance}. 

Our next task is to derive the asymptotic formula for this vector as $p\to\infty_+$ up to terms of order $O(p^{-2})$.

\begin{lemma}\label{mainn}
Let $\Omega_0$ be the Abel differential of the second kind on $\M(\xi)$ with second order poles at $\infty_+$ and $\infty_-$ normalized by the condition $\int_{\mathfrak a}\Omega_0=0$ and let $\omega_{\infty_+\infty_-}$ be
the Abel differential of the third kind as above. Then
\[
\frac{\pa}{\pa p} g(p, \xi)d\pi=\Omega_0 + \xi\omega_{\infty_+\infty_-},
\]
where $\frac{\pa}{\pa p} g(p, \xi)$ is defined by \eqref{defgee}.
Moreover,
\[
- 2\I B(\xi)=U +\xi\Lambda,
\]
where $U= U(\xi)=\int_{\mathfrak b}\Omega_0$ is the $\mathfrak b$-period of $\Omega_0$ and $\Lambda=\Lambda(\xi)$ is the
$\mathfrak b$-period of $\omega_{\infty_+\infty_-}$ defined by \eqref{omper}.
\end{lemma}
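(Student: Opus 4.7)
The plan is to decompose $dg$ via Lemma~\ref{abelint} and recognize the two summands as canonical Abel differentials, then read off the $\mathfrak{b}$-period identity. By \eqref{cond37},
\[
\frac{\pa g}{\pa p}\,d\pi = \Omega(\la,\xi)\,d\la + \xi\,\omega(\la,\xi)\,d\la,
\]
so the differential identity of the lemma reduces to proving $\omega\,d\la = \omega_{\infty_+\infty_-}$ and $\Omega\,d\la = \Omega_0$. Both summands on the right are single-valued meromorphic differentials on $\M(\xi)$ whose $\mathfrak{a}$-periods vanish: the $\mathfrak{a}$-cycle consists of $I_3$ on the upper sheet and its mirror on the lower sheet, the integrands flip sign under the sheet exchange, and \eqref{cond35} gives $\int_{\gamma}^{-1}\Omega\,d\la = \int_\gamma^{-1}\omega\,d\la = 0$.

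Next I would identify $\omega\,d\la$. A local-coordinate computation at $\infty_\pm$ using $R^{1/2}(\la,\gamma)\sim \mp\la^2$ on $\Pi_U(\xi)$ and $\Pi_L(\xi)$ shows that $(\la-h)\,d\la/R^{1/2}$ has simple poles at $\infty_+$ and $\infty_-$ with residues $+1$ and $-1$, and is holomorphic at the four branch points, since the vanishing of $d\la$ in the local coordinate $\sqrt{\la-e}$ there cancels the vanishing of $R^{1/2}$. Uniqueness of the normalized third-kind differential with those residues (and zero $\mathfrak{a}$-period) then yields $\omega\,d\la = \omega_{\infty_+\infty_-}$. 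For the second-kind part, \eqref{cond333} gives the expansion $\Omega = -1 + O(\la^{-2})$ at $\infty_+$, so $\Omega\,d\la$ has double poles at $\infty_\pm$ without residues and is holomorphic elsewhere. Since any two second-kind differentials on the elliptic surface $\M(\xi)$ with the same principal parts and the same $\mathfrak{a}$-periods differ by a holomorphic differential of zero $\mathfrak{a}$-period, which must be zero, this identifies $\Omega\,d\la$ with $\Omega_0$.

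To extract the scalar identity $-2\I B(\xi) = U + \xi\la$, I would contract the $\mathfrak{b}$-cycle onto the cut $[-c-2d,\gamma(\xi)]$. Because $(\la-\mu)(\la-\gamma)/R^{1/2}$ flips sign across this cut, the integrals over the upper and lower sides combine to give
\[
\int_\mathfrak{b}\frac{\pa g}{\pa p}\,d\pi = -2\int_{-c-2d}^{\gamma(\xi)}\frac{(\la-\mu(\xi))(\la-\gamma(\xi))}{R^{1/2}(\la+\I 0,\gamma(\xi))}\,d\la = -2\I B(\xi),
\]
by the definition of $B(\xi)$ stated just before \eqref{jumpI3}; the right-hand side of the differential identity integrates over $\mathfrak{b}$ to $U + \xi\la$ by the definitions of $U$ and $\la$, and the second assertion follows.

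The one place requiring genuine care is the identification $\Omega\,d\la = \Omega_0$, since the lemma does not explicitly prescribe the principal parts of $\Omega_0$ at $\infty_\pm$. This identification should be read as implicitly fixing the normalization of $\Omega_0$ to be the one dictated by the leading $-\la$ behavior of $\Phi_2$ at infinity in \eqref{eqas7}; once this normalization is accepted, everything else is contour deformation and sheet-by-sheet sign tracking.
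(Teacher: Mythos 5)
Your proposal is correct and follows exactly the route the paper intends: the paper's entire proof is the one-line remark that the lemma ``is a direct corollary of Lemma~\ref{abelint} and \eqref{cond37}'', and your argument simply fills in the details of that corollary (identification of $\omega\,d\la$ and $\Omega\,d\la$ with the normalized canonical differentials via matching principal parts and vanishing $\mathfrak{a}$-periods, then contraction of the $\mathfrak{b}$-cycle onto the cut to get $-2\I B(\xi)$). Your closing caveat about the implicit normalization of the principal part of $\Omega_0$ is also apt, since the paper fixes it only implicitly through \eqref{cond333} and the $-\la$ leading term of $\Phi_2$ in \eqref{eqas7}.
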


This lemma is a direct corollary of Lemma~\ref{abelint} and \eqref{cond37}.
Since $\xi=\frac{n}{t}$ we have
\[
\frac{t B(\xi)}{\pi}=-t\frac{U}{2\pi\I} - n\frac{\Lambda}{2\pi\I}
\]
and substituting this into \eqref{alphaa}, taking into account that $\theta(v+1)=\theta(v)$, we get
\be\label{new7}
\alpha(p)=
\frac{\theta\big(A(p) +\frac{\tau(\xi)}{2} - \Xi(\xi)- t\frac{U}{2\pi\I} - n\frac{\Lambda}{2\pi\I} +\frac{\Delta(\xi)}{2\pi}\big)}{\theta\big(A(p) +\frac{\tau(\xi)}{2} - \Xi(\xi)\big)}.
\ee
Passing to the limit as $p\to\infty_+$ and using property \eqref{proppp} we obtain
\[
\alpha(\infty_+)=\frac{\theta(\underline{z}(n-1,t))}{\alpha^+},\quad \alpha(\infty_-)=\frac{\theta(\underline{z}(n,t))}{\alpha^-},
\]
where
\be\label{defzz} \underline{z}(n,t):=A(\infty_+)- n\frac{\Lambda(\xi)}{2\pi\I}-t\frac{U(\xi)}{2\pi\I} +\frac{\tau(\xi)}{2}
+\frac{\Delta(\xi)}{2\pi}-\frac{\Lambda(\xi)}{2\pi\I}- \Xi(\xi),
\ee
and
\[
\alpha^\pm:=\alpha^\pm(\xi)=\theta\Big(A(\infty_\pm) +\frac{\tau(\xi)}{2} - \Xi(\xi)\Big).
\]
Note that since $\frac{\Delta(\xi)}{2\pi}-\frac{\Lambda(\xi)}{2\pi\I}\in\R$ and $\frac{\tau(\xi)}{2}=-A(\gamma)$,
by the Jacobi inversion theorem (\cite{tjac}), there exists a point $\rho(\xi)\in I(\xi)\cup I^*(\xi)$ such that
\be\label{divizor}
-A(\rho)+A(\gamma)=\frac{\Delta(\xi)}{2\pi}-\frac{\Lambda(\xi)}{2\pi\I}.
\ee
Thus we can represent $\underline{z}(n,t)$ in a more familiar form for finite-gap operators
 \be\label{import6}
 \underline{z}(n,t):=A(\infty_+)-A(\rho)- n\frac{\Lambda}{2\pi\I}-t\frac{U}{2\pi\I} - \Xi.
 \ee
Passing to the limit $p\to\infty_+$ in the first component of vector \eqref{mmodf} we get
\be\label{sss}
m_1^{mod}(\infty_+)=\beta(\xi)\,\sqrt{\frac{\theta(\underline{z}(n-1,t))}{\theta(\underline{z}(n,t))}},
\ee
where
\be\label{import1}
\beta(\xi)=\sqrt{\frac{\alpha^-}{\alpha^+}}=\sqrt{\frac{\theta\big(A(\infty_-) +\frac{\tau(\xi)}{2} - \Xi(\xi)\big)}{\theta\big(A(\infty_+) +\frac{\tau(\xi)}{2} - \Xi(\xi)\big)}}\in \R
\ee
is a real-valued continuously differentiable function of $\xi$ with a bounded derivative on the interval under consideration.
Recall now that we assume $m^{(4)}(p)=m^{mod}(p) +o(1)$ as $p\to\infty_+$, where $o(1)$ is understood with respect to $t\to\infty$. Apply this to
 \eqref{sss}, \eqref{soed}, \eqref{Kxi}, \eqref{formF}, \eqref{eqas5}, and Lemma \ref{omega2}.
We obtain that the first component of the solution of the initial RH problem, which is equal to
$\prod_{j=n}^{+\infty}(2 a(j,t))$ by \eqref{AB} and \eqref{asm}, is
\[
\prod_{j=n}^{+\infty}(2 a(j,t))=\beta(\xi)\E^{t K(\xi)}\E^{\frac{1}{2\pi\I}\int_{\Sigma_1(\xi)}\log |\chi|\omega_{\infty_+\infty_-}}\,
\sqrt{\frac{\theta(\underline{z}(n-1,t))}{\theta(\underline{z}(n,t))}}\big(1+ o(1)\big),
\]
where $\beta(\xi)$ and $\underline{z}(n,t)$ are defined by \eqref{import1} and \eqref{import6}, and
$o(1)$ is a function which tends to $0$ as $t\to \infty$.
Recall that we treat $\xi$ as a slow variable, and that all functions of $\xi$ are continuously differentiable with respect to $\xi$ in our considerations.
This means, for example, that if $\xi=\frac{n}{t}$ and $\xi^\prime=
\frac{n+1}{t}$, then $\beta(\xi)\beta^{-1}(\xi^\prime)=1 +O(t^{-1})$ and
\[
\E^{t(K(\xi)-K(\xi^\prime))}=\E^{-\frac{d K}{d \xi}(\xi)}+ O(t^{-1}),
\]
etc.
Consequently,
\be\label{asan}
a^2(n,t)=\frac{1}{4}\E^{-2 K^\prime(\xi)}\,\frac{\theta(\underline{z}(n-1,t))\theta(\underline{z}(n+1,t))}{\theta^2(\underline{z}(n,t))} +o(1).
\ee
\begin{lemma}\label{et}
Let $\tilde a=\tilde a(\xi)$ be the constant from the expansion (cf.\ \cite[Eq.\ (9.42)]{tjac})
\[
\E^{\int_{-c-2d}^p \omega_{\infty_+ \infty_-}}=-\frac{\tilde a}{\la}\bigg(1 +\frac{\tilde b}{\la} 
+O(\la^{-2})
\bigg),\quad p=(\la,+)\in\Pi_U(\xi),
\]
and let $K^\prime(\xi):=\frac{d K}{d \xi}(\xi)$, $k^\prime(\xi)=\frac{d k}{d \xi}(\xi)$, with 
$K(\xi)$ and $k(\xi)$ defined by \eqref{Kxi} and \eqref{eqas5}, respectively. Then
\begin{align}\label{zhut}
  \frac{1}{4}\E^{-2 K^\prime(\xi)}&=\tilde a^2(\xi),\\
\label{zhut3} k^\prime(\xi)&=\tilde b(\xi).
\end{align}
\end{lemma}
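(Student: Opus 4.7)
My plan is to derive both identities by matching the asymptotic expansion \eqref{eqas5}, differentiated in $\xi$, with the expansion defining $\tilde a(\xi)$ and $\tilde b(\xi)$. Differentiating \eqref{eqas5} and using $\Phi_\xi(p,\xi) = -\log\la - \log 2 + O(\la^{-2})$ (immediate from \eqref{Phi}) gives, as $p=(\la,+)\to\infty_+$,
\[
\frac{\pa g}{\pa\xi}(p,\xi) = -\log\la - \log 2 - K^\prime(\xi) + \frac{k^\prime(\xi)}{\la} + O(\la^{-2}).
\]
On the other hand, \eqref{defgee} and \eqref{deri} (with the base point $1$ independent of $\xi$) give $\tfrac{\pa g}{\pa\xi}(p,\xi) = \int_1^p \omega(\la,\xi)\,d\la$. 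The differential $\omega(\la,\xi)\,d\la$ is meromorphic on $\M(\xi)$ with simple poles of residues $\pm 1$ at $\infty_\pm$ (from its leading behavior $-d\la/\la$ near $\infty_+$) and vanishing $\mathfrak a$-period by \eqref{cond35}(b); by uniqueness of Abel differentials of the third kind it equals $\omega_{\infty_+\infty_-}$.

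Next, split $\int_1^p = \int_{-c-2d}^p - \int_{-c-2d}^1$ and invoke the definition of $\tilde a,\tilde b$ to obtain
\[
\frac{\pa g}{\pa\xi}(p,\xi) = -\log\la + \log(-\tilde a(\xi)) - C_0(\xi) + \frac{\tilde b(\xi)}{\la} + O(\la^{-2}),
\]
with $C_0(\xi) := \int_{-c-2d}^1 \omega_{\infty_+\infty_-}$. Matching the $1/\la$ coefficient immediately yields $k^\prime(\xi) = \tilde b(\xi)$, which is \eqref{zhut3}. Matching the constant term gives $\log(-\tilde a(\xi)) - C_0(\xi) = -\log 2 - K^\prime(\xi)$; exponentiating and squaring produces $\tilde a^2(\xi)\,e^{-2C_0(\xi)} = \tfrac{1}{4}e^{-2K^\prime(\xi)}$, so the identity \eqref{zhut} is equivalent to $e^{-2C_0(\xi)} = 1$, i.e., $C_0(\xi)\in\I\pi\Z$.

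I establish $C_0(\xi) = \I\pi$ in two complementary steps. Let $P_+$ be a path from $-c-2d$ to $1$ in the upper half-plane on $\Pi_U(\xi)$ and $P_-$ its complex-conjugate mirror in the lower half-plane. Since the polynomial under the square root has real coefficients and $h(\xi)\in\R$, one has $R^{1/2}(\bar\la,\gamma)=\overline{R^{1/2}(\la,\gamma)}$, so $\int_{P_-}\omega_{\infty_+\infty_-} = \overline{\int_{P_+}\omega_{\infty_+\infty_-}}$. The closed cycle $P_+\cup P_-^{-1}$ is clockwise around $[-c-2d,1]$ in the $\la$-plane, and under the local parameter $z=1/\la$ at $\infty_+$ it maps to a counterclockwise loop around $\infty_+$; hence $\int_{P_+\cup P_-^{-1}}\omega_{\infty_+\infty_-} = 2\pi\I\cdot\res_{\infty_+}\omega_{\infty_+\infty_-} = 2\pi\I$, which forces $\im C_0(\xi) = \pi$. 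For the real part, deform $P_+$ to hug the real axis from above and $P_-$ from below. The analytic continuation of $R^{1/2}$ from $\la>1$ (where $R^{1/2}<0$) through the upper half-plane gives $R^{1/2}(\la+\I 0) = +\I|R^{1/2}(\la,\gamma)|$ on $[-c-2d,\gamma]$ and $R^{1/2}(\la+\I 0) = -\I|R^{1/2}(\la,\gamma)|$ on $[-1,1]$, with opposite signs from below. Thus on each cut the upper and lower boundary values of $\omega_{\infty_+\infty_-}$ are complex conjugates of one another, while on the gap $(\gamma,-1)$ the integral vanishes by \eqref{cond35}(b). Summing produces $\int_{P_+}+\int_{P_-}=0$, whence $\re C_0(\xi)=0$. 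Therefore $C_0=\I\pi$, $e^{-C_0}=-1$, and \eqref{zhut} follows.

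The main technical obstacle is the computation of $C_0(\xi)$: one must carefully track the analytic continuation of $R^{1/2}$ across the two cuts to pin down the boundary values $R^{1/2}(\la\pm\I 0)$, correctly identify the planar clockwise orientation of $P_+\cup P_-^{-1}$ with the counterclockwise orientation in the local chart at $\infty_+$, and combine the complex-conjugation symmetry with the normalization \eqref{cond35}(b) to separate the real and imaginary parts of $C_0$.
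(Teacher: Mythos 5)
Your argument is correct, and its core mechanism is the same as the paper's: both proofs rest on \eqref{deri}, which identifies the $\xi$-derivative of the integrand of $K(\xi)$ (equivalently of $g$) with the normalized Abel differential $\omega_{\infty_+\infty_-}$ (residues $\pm1$ at $\infty_\pm$, vanishing $\mathfrak a$-period by \eqref{cond35}(b)), and then match the constant and $1/\la$ terms of the resulting expansion against the one defining $\tilde a$ and $\tilde b$; in particular \eqref{zhut3} is obtained identically in both. Where you genuinely diverge is in handling the base point: the paper simply cites the companion formula $\log\tilde a=\lim_{\la\to+\infty}\big(\int_1^\la\omega_{\infty_+\infty_-}+\log\la\big)$ from \cite[equ.\ (9.43)]{tjac}, which has the base point $1$ built in, whereas you start from the stated expansion with base point $-c-2d$ and compute the connecting constant $C_0=\int_{-c-2d}^{1}\omega_{\infty_+\infty_-}=\I\pi$ by hand --- the residue at $\infty_+$ fixing $\im C_0=\pi$ and Schwarz symmetry together with \eqref{cond35}(b) fixing $\re C_0=0$. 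That computation is correct and amounts to a self-contained proof of the cited identity in this genus-one setting; your resulting relation $\log\tilde a=-K'(\xi)-\log 2$ coincides with the paper's. The one point worth a sentence in a polished write-up is that $\int_{-c-2d}^{1}\omega_{\infty_+\infty_-}$ is only defined modulo the $\mathfrak b$-period $\la(\xi)$ (which is not in $2\pi\I\Z$) and modulo $2\pi\I$, so the path $P_+$ must be the one consistent with the convention implicit in \cite[equ.\ (9.42)]{tjac}; since $P_+$ and $P_-$ give the same value of $\E^{C_0}$ and the standard convention integrates within the fundamental polygon on the upper sheet, this is consistent, but it should be stated rather than left tacit.
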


\begin{proof}
 According to equation (9.43) in \cite{tjac}, the constant $\tilde a$
 can be computed as
 \[
 \log \tilde a=\lim_{\la\to +\infty}\bigg(\int_1^\la \omega_{\infty_+ \infty_-} +\log \la\bigg).
 \]
 Therefore, one has to prove that $K^\prime(\xi) +\log 2=-\log \tilde a$. By \eqref{Kxi} and \eqref{deri},
 \begin{align}
 & \frac{d}{d\xi} \nonumber
 \lim_{\la\to +\infty}\int_1^\la\Bigg(\frac{(x-\mu(\xi))\sqrt{x-\gamma(\xi)}} {\sqrt{(x^2 - 1)(x+c+2d)}}-\frac{x+\xi}{\sqrt{x^2-1}}\Bigg)dx  +\log 2\\
 \nonumber
 &\quad =-\lim_{\la\to +\infty}\bigg(\int_1^\la \frac{\pa^2}{\pa\xi\pa x} \tilde g(x,\xi) dx +\log\left(\la +\sqrt{\la^2 - 1}\right)\bigg) +\log 2\\ \nonumber
& \quad = -\lim_{\la\to +\infty}\bigg(\int_1^\la \omega_{\infty_+ \infty_-} +\log \la\bigg),
\end{align}
which proves \eqref{zhut}.
 Formula \eqref{zhut3} follows immediately from \eqref{deri}. \end{proof}
 To derive the term of order $\la^{-1}$ for the first component of the solution we observe
 that by \eqref{new8}, \eqref{new7}, and \eqref{defzz},
\be\label{decomp12}
\alpha(p) = \alpha(\infty_+) \left(1+\frac{1}{Y\la}\,\frac{\pa}{\pa w}\log\left(
\frac{\theta\big(\underline z(n-1,t) + w\big)}{\theta\big(A(\infty_+) +\frac{\tau}{2} - \Xi+w\big)}\right)\Big|_{w=0} +O(\la^{-2})\right),
\ee
$p=(\la,+)$. Denote
\be\label{eta1}
\eta:=\eta(\xi)=-\frac{1}{Y}\frac{\pa}{\pa w}\log\left(\theta\Big(A(\infty_+) +\frac{\tau}{2} - \Xi+w\Big)\right)\Big|_{w=0} -\frac{\gamma + c +2d}{4}.
\ee
Then combining \eqref{mmodf}, \eqref{sss}, \eqref{delta},
\eqref{decomp12}, and \eqref{eta1} we get
\[
m_1^{mod}(p)=m_1^{mod}(\infty_+)\left(1+
\frac{1}{Y\la}\,\frac{\pa}{\pa w}\log\theta\big(\underline z(n-1,t) + w\big)\Big|_{w=0}
+ \frac{\eta}{\la}+O(\la^{-2})\right).
\]
Respectively, by \eqref{defdg}, \eqref{m1}, \eqref{decomp4}, \eqref{QQ}, \eqref{PP}, \eqref{eqas5}, and \eqref{asm}
\[
-B(n-1,t)=\Big(-t k(\xi) +\frac{t}{2}+\eta(\xi) + Q(\xi) + \frac{1}{Y}\,\frac{\pa}{\pa w}\log
\theta\big(\underline z(n-1,t) + w\big)|_{w=0}\Big)(1 +o(1)),
\]
where $o(1)$ tends to $0$ as $t\to\infty$ and $\frac{n}{t}=\xi$ is almost a constant.
Apply now the same arguments as for \eqref{asan} and use \eqref{zhut3} to get
$b(n,t)=b(n,t,\xi) +o(1)$, where
\be\label{bee}
b(n,t,\xi)=\tilde b +\frac{1}{Y}\,\frac{\pa}{\pa w}\log\left(\frac{
\theta\big(\underline z(n-1,t) + w\big)}{\theta\big(\underline z(n,t) + w\big)}\right)\Big|_{w=0}.
\ee
Lemma \ref{et} and \eqref{asan} also imply $a^2(n,t)=a^2(n,t,\xi) +o(1)$ with
\be\label{aee}
a^2(n,t,\xi)=\tilde a^2\,\frac{\theta(\underline{z}(n-1,t))\theta(\underline{z}(n+1,t))}{\theta^2(\underline{z}(n,t))}.
\ee
Formulas \eqref{aee} and \eqref{bee} for fixed $\xi$ describe the standard two-band Toda lattice solution
(cf.\ \cite[Theorem 9.48]{tjac}). 
Thus, assuming that the solution of the model RH problem \eqref{modv}--\eqref{condmod} approximates the solution
of the initial RH problem I--IV well, we arrive at the following

\begin{theorem} \label{theor1}
Let
\begin{enumerate}[1.]
 \item
$\{a(n,t),\,b(n,t)\}$ be the solution of the problem \eqref{tl}, \eqref{ini}--\eqref{cont12}
 as $n, t\to\infty$ in the domain $\xi_{cr}^\prime\, t<n<\xi_{cr}\, t$, where the parameters $\xi_{cr}^\prime$ and $\xi_{cr}$ are defined by \eqref{xicr}, \eqref{mui}, and \eqref{xicrit2};
 \item $\xi\in(\xi_{cr}^\prime,\,\xi_{cr})$ be a  parameter; $\gamma(\xi)\in (-c-2d, -c+2d)$ be defined by \eqref{eqas} and \eqref{zerocond}; $\M(\xi)$ be the Riemann surface of the function \eqref{Rxi};
 \item  $\rho(\xi)$ be a  point on  $\M(\xi)$, defined by  \eqref{divizor}, \eqref{Deltaj} with $\pi(\rho)\in [\gamma, -1]$;
 \item $\{a(n,t,\xi), b(n,t,\xi)\}$ be the finite-gap solution corresponding to the initial divisor $\rho(\xi)$ via \eqref{aee}, \eqref{bee}, \eqref{import6}.
\end{enumerate}
Then in a vicinity of any ray $n=\xi t$ the solution of the problem \eqref{tl}, \eqref{ini}--\eqref{cont12} has the following asymptotical behavior as $t\to +\infty$
 \be\label{mainab}
 a(n,t)=a(n,t,\xi) +o(1),\quad b(n,t)=b(n,t,\xi) +o(1).
 \ee
\end{theorem}

\noprint{
\section{The parametrix problem} In this section we justify formula \eqref{mainab}.
To this end, along to the vector  solution of the model problem \eqref{moddd}, \eqref{modv}, we need also a matrix solution of this problem. 

Let $\delta(p)$ be as in \eqref{delta}, and put
\[s(p)=\sqrt[4]{\frac{\pi(p) - 1}{\pi(p) + 1}},\ p\in\Pi_U(\xi),\quad s(p^*)=s(p),\]
with the same meaning of  $\sqrt[4]{\cdot}$ as in \eqref{delta}. Introduce the function
$\kappa(p)=\delta(p) s(p)$. Then the function $\kappa(p) +\kappa^{-1}(p)$ has the only zero at the point $\nu^*\in \Pi_L(\xi)$ with $\pi(\nu)\in (\gamma, -1)$, and the function $\kappa(p) -\kappa^{-1}(p)$ has the only zero at point $\nu\in\Pi_U(\xi)$. It is easy to check then that
\be\label{jumpdel}
\left[\frac{\kappa(p) + \kappa^{-1}(p)}{s(p)}\right]_+=\I \left[\frac{\kappa(p) - \kappa^{-1}(p)}{s(p)}\right]_-,\ \ p\in\Sigma_1^U,\ee
\be\label{jum2}\left[\frac{\kappa(p) + \kappa^{-1}(p)}{s(p)}\right]_+=-\I \left[\frac{\kappa(p) - \kappa^{-1}(p)}{s(p)}\right]_-,\ \ p\in\Sigma_1^L,\ee
\be\label{jum3}\left[\frac{\kappa(p) + \kappa^{-1}(p)}{s(p)}\right]_+= \left[\frac{\kappa(p) - \kappa^{-1}(p)}{s(p)}\right]_-,\ \ p\in\Sigma_2.\ee
Introduce now two functions:
\be\label{alpha1}\alpha_1(p)=\frac{\theta\big(A(p) - A(\nu) +\frac{tB}{\pi} -\frac{\Delta}{2\pi} -\Xi\big)}{\theta\big(A(p) - A(\nu) - \Xi\big)},\ee
\be\label{alpha2}\alpha_2(p)=\frac{\theta\big(A(p) - A(\nu) -\frac{tB}{\pi} +\frac{\Delta}{2\pi} -\Xi\big)}{\theta\big(A(p) - A(\nu) - \Xi\big)}.\ee
We observe that the function $\alpha_1(p^*)$ has the only pole at $\nu^*\in\Pi_L$, and $\alpha_2(p)$ has the pole at $\nu\in\Pi_U$. They solve the jump problems
\be\label{jum15}
\alpha_1(p^*)_+=\E^{2\I t B - \I\Delta}\alpha_1(p^*)_-,\ \  \alpha_2(p)_+=\E^{2\I t B - \I\Delta}\alpha_2(p)_-,\ \ p\in I_3.\ee
Put now 
\be\label{bett} \beta_i(p)=\frac{\alpha_i(p)}{\alpha_i(\infty_-)},\ \ i=1,2.\ee
Then $\beta_i(p^*)\to 1$ as $p\to\infty_+$. Introduce the matrix
\be\label{matlim} M(p)=\begin{pmatrix} \frac{\kappa(p) + \kappa^{-1}(p)}{2 s(p)}\beta_1(p^*) & 
 \frac{\kappa(p) - \kappa^{-1}(p)}{2 s(p)}\beta_1(p)\\[3mm]  \frac{\kappa(p) - \kappa^{-1}(p)}{2 s(p)}\beta_2(p)& \frac{\kappa(p) + \kappa^{-1}(p)}{2 s(p)}\beta_2(p^*)\end{pmatrix},\ \ p\in \Pi_U,\ee
\be\label{matlim1} M(p)=\begin{pmatrix} \frac{\kappa(p) - \kappa^{-1}(p)}{2 s(p)}\beta_1(p^*) & 
 \frac{\kappa(p) + \kappa^{-1}(p)}{2 s(p)}\beta_1(p)\\[2mm]  \frac{\kappa(p) + \kappa^{-1}(p)}{2 s(p)}\beta_2(p)& \frac{\kappa(p) - \kappa^{-1}(p)}{2 s(p)}\beta_2(p^*)\end{pmatrix},\ \ p\in \Pi_L.\ee
 This matrix solves the model jump, satisfies $M(p)=M(p^*)\sigma_1$, and $M(\infty_+)=\id.$
}

\section{Asymptotics of the solution in the domain \texorpdfstring{$\xi_{cr,1} < \xi < \xi_{cr,1}^\prime$}{xi,cr,1< xi <xi,prime,cr,1}}
\label{sec:el}

In this domain we study the asymptotic behavior of the solution to the problem \eqref{tl}--\eqref{cont12} with the help of the RH problem I--IV. The considerations are similar to those in Sec.\ \ref{secg}--\ref{sec:model}, and we give a short description of the necessary changes. Let $\xi \in (\xi_{cr,1}, \xi_{cr,1}^\prime)$ with
$\xi_{cr,1}$ and $\xi_{cr,1}^\prime$ defined by
\eqref{xicrit1}, \eqref{mui}, \eqref{xicr}. Here we choose the $g$-function with its moving point $\gamma_1=\gamma_1(\xi)$ on the interval $(-1,1)$,
\[
g_1(p,\xi)= - \int_{1}^p \frac{(x - \mu_1(\xi))(x-\gamma_1(\xi))}{R^{1/2}(x,\gamma_1(\xi))} dx,
\]
where the function
\be \label{Rxi1}
R_1^{1/2}(\la,\gamma_1)=-\sqrt{(\la+c+2d)(\la + c - 2d)(\la-1)(\la-\gamma_1)}
\ee
defines the corresponding Riemann surface $M_1(\xi)$.
The points $\mu_1(\xi) \in (-c+2d, \gamma_1(\xi))$ and $\gamma_1(\xi)\in (-1,1)$ are chosen to satisfy 
\be \label{zerocond1}
\int_{-c+2d}^{\gamma_1(\xi)}\frac{(\la - \mu_1(\xi))(\la - \gamma_1(\xi))}{R_1^{1/2}(\la, \gamma_1(\xi))}d\la = 0, \quad
2c - 1 + \gamma_1(\xi) + 2 \mu_1(\xi)=-2\xi.
\ee
The last condition implies
\[
\Phi_1(p, \xi)- g_1(p,\xi)=K_1(\xi) + \frac{k_1(\xi)}{\la} +  O(\la^{-2}),
\quad \mbox{ as $p=(\la, +)\to \infty_+$},
\]
where $K_1$ and $k_1$ are real-valued constants.
For the same reasons as above,
\be \label{partial K_1}
\frac{\partial K_1(\xi)}{\partial \xi}= \lim_{p \to \infty_+}\Big( -\log(z_1(p)) +
\int_1^{p} \omega_{\infty_+ \infty_-}^1 \Big) = \log\tilde a_1 - \log d,
\ee
where $z_1(p)$ is defined by \eqref{z_1} and $\omega_{\infty_+ \infty_-,1} $ is the normalized Abel integral of the third kind on $M_1(\xi)$. The $\mathfrak b_1$ circle surrounds the interval $[-c-2d, -c+2d]$ counterclockwise on the upper sheet, and $\mathfrak a_1$ follows the gap $I^1(\xi)=[-c+2d, \gamma_1]$ on the upper sheet in positive direction, and then back on the lower sheet of $\M_1(\xi)$.

The following lemma justifies the existence of this $g$-function; its proof is the same as before.
\begin{lemma} \label{gest1}
The functions $\gamma_1(\xi) \in (-1,1)$ and $\mu_1(\xi) \in (-c+2d, \gamma_1(\xi))$ satisfying \eqref{zerocond1} exist for $\xi\in (\xi_{cr,1}, \xi_{cr,1}^\prime)$. On this interval,
$\gamma_1(\xi)$ is decreasing with $\gamma_1(\xi_{cr,1})=1$ and $\gamma_1(\xi_{cr,1}^\prime)=-1$.
\end{lemma}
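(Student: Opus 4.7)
The plan is to mirror the proof of Lemma~\ref{gest} step by step, with the roles of the two bands reflected. First I fix $\gamma_1\in(-1,1)$ and solve the vanishing-integral condition in \eqref{zerocond1} for $\mu_1$:
\[
\mu_1(\gamma_1) = \left(\int_{-c+2d}^{\gamma_1}\frac{\la(\la - \gamma_1)}{R_1^{1/2}(\la,\gamma_1)}d\la\right)\Big/\left(\int_{-c+2d}^{\gamma_1}\frac{\la - \gamma_1}{R_1^{1/2}(\la,\gamma_1)}d\la\right).
\]
Since $(\la-\gamma_1)/R_1^{1/2}(\la,\gamma_1)$ has constant sign on $(-c+2d,\gamma_1)$, the mean value theorem yields the strict bound $-c+2d<\mu_1(\gamma_1)<\gamma_1$, which will play the role of \eqref{triv}.

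Next I substitute the linear constraint $\mu_1=-\xi-(\gamma_1+2c-1)/2$ from the second equation of \eqref{zerocond1} into the first to obtain the implicit relation
\[
F_1(\gamma_1,\xi) := \int_{-c+2d}^{\gamma_1}\frac{(\la-\gamma_1)\bigl(\la+\xi+(\gamma_1+2c-1)/2\bigr)}{R_1^{1/2}(\la,\gamma_1)}\,d\la = 0.
\]
The endpoint contribution at $\la=\gamma_1$ vanishes because of the integrable square-root singularity, and a calculation in complete analogy with \eqref{derivg} produces
\[
\frac{d\gamma_1}{d\xi} = \frac{4\,\mathcal{K}_1(\xi)}{2\xi+3\gamma_1+2c-1},
\]
with $\mathcal{K}_1$ a ratio of integrals of definite sign on the interval of interest.

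The main obstacle is to show that the denominator $f_1(\xi):=2\xi+3\gamma_1(\xi)+2c-1$ never vanishes so that $\gamma_1(\xi)$ is $C^1$ and strictly monotone and so that $\gamma_1$ stays in $(-1,1)$ up to the boundaries. For the first point I repeat the contradiction argument of Lemma~\ref{gest}: if $f_1(\xi_0)=0$, then combining with $\gamma_1+2\mu_1+2c-1=-2\xi_0$ yields $\mu_1=\gamma_1$, contradicting the strict bound from the MVT above. Hence $f_1$ is of constant sign on any subinterval where $F_1=0$ admits a solution in $(-1,1)$, and $\gamma_1$ is strictly monotone there.

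It remains to identify the endpoints, which I do by analyzing the two degenerate values of $\gamma_1$ directly. At $\gamma_1=1$ the constraint gives $\mu_1=-\xi-c$, and after cancelling the common factor $(\la-1)$ in numerator and denominator the vanishing-integral condition reduces to
\[
\int_{-c+2d}^{1}\frac{\la+c+\xi}{\sqrt{(\la+c)^2-4d^2}}\,d\la = 0,
\]
which is exactly $\Phi_1(1,\xi)=0$ and therefore selects $\xi=\xi_{cr,1}$ by \eqref{xicrit1}. At $\gamma_1=-1$ we have $R_1^{1/2}(\la,-1)=R^{1/2}(\la)$ and $\mu_1=-\xi-c+1$, so the vanishing integral becomes precisely the defining relation \eqref{mui} for $\nu_1$ with $\mu_1=\nu_1$; this forces $\xi=-\nu_1-c+1=\xi_{cr,1}^\prime$ by \eqref{xicr}. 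Finally, checking the sign of $\mathcal{K}_1/f_1$ at either endpoint (using $\xi_{cr,1}^\prime>2-c$ together with \eqref{cond7} to pin the sign of $f_1$) confirms $d\gamma_1/d\xi<0$, which together with the endpoint identification completes the proof.
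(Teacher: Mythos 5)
Your proposal is correct and follows exactly the route the paper intends: the paper's own "proof" of Lemma~\ref{gest1} is just the remark that it is the same as for Lemma~\ref{gest}, and you have carried out that adaptation faithfully (mean-value bound $-c+2d<\mu_1<\gamma_1$, implicit differentiation giving $d\gamma_1/d\xi=4\mathcal{K}_1/(2\xi+3\gamma_1+2c-1)$, non-vanishing of the denominator by the $\mu_1=\gamma_1$ contradiction, and sign determination at an endpoint). Your explicit identification of \emph{both} degenerate cases $\gamma_1=1\Leftrightarrow\Phi_1(1,\xi)=0\Leftrightarrow\xi=\xi_{cr,1}$ and $\gamma_1=-1\Leftrightarrow\mu_1=\nu_1\Leftrightarrow\xi=\xi_{cr,1}^\prime$ is a small but welcome completion of what the paper's Lemma~\ref{gest} argument only does for one endpoint.
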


The signature table for $\re g_1$ is depicted in Figure~\ref{fig:Reg1}.
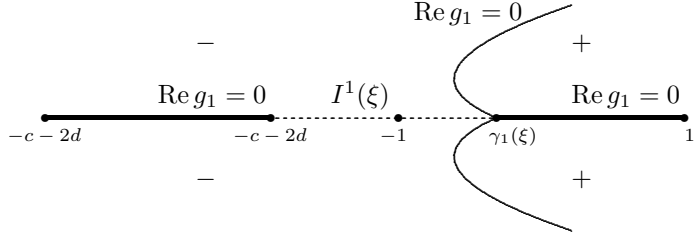
\begin{figure}[ht]
\begin{picture}(9,3.3)

\linethickness{0.6mm}
\put(0.5,1.5){\line(1, 0){3}}
\put(6.5,1.5){\line(1, 0){2.5}}
\put(0.5,1.5){\circle*{0.1}}
\put(3.5,1.5){\circle*{0.1}}
\put(0,1.2){\scriptsize $-c-2d$}
\put(3,1.2){\scriptsize $-c-2d$}
\put(2.5,2.4){$-$}
\put(2.5,0.6){$-$}
\put(7.5,0.6){$+$}
\put(7.5,2.4){$+$}
\put(2,1.7){$\re g_1 =0$}
\put(7.5,1.7){$\re g_1 =0$}
\put(5.4,2.8){$\re g_1 =0$}
\put(4.3,1.7){$I^1(\xi)$}

\linethickness{0.2mm}
\curve(6.5,1.5, 6,2.2, 7.5,3)
\curve(6.5,1.5, 6,0.8, 7.5,0)
\put(6.5,1.5){\circle*{0.1}}
\put(9,1.5){\circle*{0.1}}
\put(6.4,1.2){\scriptsize $\gamma_1(\xi)$}
\put(9,1.2){\scriptsize $1$}

\curvedashes{0.05,0.05}
\curve(3.5,1.5, 6.5,1.5)
\put(5.2,1.5){\circle*{0.1}}
\put(4.95,1.2){\scriptsize $-1$}

\end{picture}
  \caption{Sign of $\re g_1(p)$ on the upper sheet of $\M_1(\xi)$}\label{fig:Reg1}
\end{figure}
The remaining considerations are completely identical to those of the previous two sections up to a natural symmetry,
so we just formulate the result here. Denote
\[
\int_{-c-2d}^{-c+2d}\frac{(\la - \mu_1)(\la-\gamma_1)}{R_1^{1/2}(\la+\I 0, \gamma_1)} d\la = \I B_1(\xi),
\]
and 
\be\label{Deltaj1}
\Delta_1(\xi):=
\frac{\I}{Y_1}\int_{\Sigma(\xi)} \frac{\log |\chi(\la)| d\la}{R_1^{1/2}(\la,\gamma_1)},\quad Y_1=2\int_{I^1}\frac{d\la}{R_1^{1/2}(\la,\gamma_1)}.
\ee
We preserve the clockwise orientation on $\Sigma_1$ and on the truncated contour $\Sigma(\xi)$.
\noprint{After Step 4, we arrive at the following model problem: Find a holomorphic vector-function
\eqref{moo}
in the domain $\M_1(\xi)\setminus\big(\Sigma(\xi)\cup I^1(\xi)\cup I^{1,*}(\xi)\big)$, bounded at $\infty_+$, $\infty_-$,
and satisfying the jump condition
\eqref{moddd}
with the jump matrix
\be \label{modv1}
v^{mod}(p) = \begin{cases}
 -\I \id, & p \in \Sigma_{u}(\xi),\\[1mm]
\I \id, & p \in \Sigma_{\ell}(\xi),\\[1mm]
\begin{pmatrix}
\E^{-2 \I t B_1(\xi) + \I \Delta_1(\xi)} & 0 \\
0 & \E^{2\I t B_1(\xi) - \I \Delta_1(\xi)}
\end{pmatrix},& p \in I^1(\xi)\cup I^{1,*}(\xi),\\[4mm]
\end{cases}
\ee
and the symmetry  and the normalization conditions .
The solution of this model problem can be derived similarly to the one in Section \ref{sec:model}. 
and we describe it briefly.}

Let $\tilde\omega_{p p^*}$ be the normalized Abel differential
 of the third kind with poles at $p$ and $p^*$ and $\tilde\zeta$ be the holomorphic normalized Abel differential on $\M(\xi)$.
Denote by $\tau_1=\int_{\mathfrak b_1}\tilde \zeta$ and let $\theta_1(v)=\theta(v\,|\,\tau_1)$.
Set $A_1(p)=\int_{-c-2d}^p \tilde \zeta$ and denote by $\Xi_1$ the Riemann constant.
\noprint{Then
$\theta_1\big(A_1(p) +\frac{\tau_1}{2} - \frac{1}{2} - \Xi_1\big)$ has a zero at  $\gamma_1$.
Denote
\be\label{alphaa1}
\alpha_1(p):=\alpha_1(p,\xi)=
\frac{\theta_1\big(A_1(p^*) +\frac{\tau_1(\xi)}{2} - \frac{1}{2} - \Xi_1(\xi) + \frac{t B_1(\xi)}{\pi} -\frac{\Delta_1(\xi)}{2\pi}\big)}{\theta_1\big(A_1(p^*) +\frac{\tau_1(\xi)}{2} -\frac{1}{2} - \Xi_1(\xi)\big)}
\ee
and
$\delta_1(p)=\sqrt[4]{\frac{\pi(p) -\gamma_1}{\pi(p) -1}}.
$
The vector
\[
m^{mod}(p)=\frac{1}{\sqrt{\alpha_1(\infty_+)\alpha_1(\infty_-)}}
\left(\delta_1(p)\alpha_1(p), \delta_1(p^*)\alpha_1(p^*)\right)
\]
solves the model problem \eqref{moddd}--\eqref{condmod} with jump matrix \eqref{modv1}.}
Let $U_1$ be the $\mathfrak b$-period of the Abel differential of the second kind on $\M_1(\xi)$ with
second order poles at $\infty_+$ and $\infty_-$ and $\Lambda_1$ be the $\mathfrak b$-period of $\tilde \omega_{\infty_+\infty_-}$. We use
\[
\frac{t B_1(\xi)}{\pi}=-t\frac{U_1}{2\pi\I} - n\frac{\Lambda_1}{2\pi\I}
\] and define
\be\label{defzz1}
\underline{z}_1(n,t):=A(\infty_+)- n\frac{\Lambda_1}{2\pi\I}-t\frac{U_1}{2\pi\I} +
\frac{\tau_1(\xi)}{2} -\frac{1}{2} -\frac{\Delta_1(\xi)}{2\pi}-\frac{\Lambda_1}{2\pi\I}-\Xi_1(\xi).
\ee
For $\xi$ fixed and the spectrum $[-c-2d, -c+2d]\cup [\gamma_1(\xi), 1]$, let $\{a_1(n,t,\xi), b_1(n,t,\xi)\}$  be the finite-gap solution of the Toda lattice defined by 
\begin{align} \label{aee1}
(a_1(n,t,\xi))^2&=
\tilde a_1^2 \frac{\theta_1(\underline{z}_1(n+1,t))\theta_1(\underline{z}_1(n-1,t))}
{\theta_1^2(\underline{z}_1(n,t))}, \\ \label{bee1}
b_q(n,t,\xi) & =\tilde b +\frac{1}{\Gamma_1}\frac{\pa}{\pa w}\log\left(\frac{
\theta\big(\underline z_1(n-1,t) + w\big)}{\theta\big(\underline z_1(n,t) + w\big)}\right)\Big|_{w=0}.
\end{align}
This finite-gap solution is connected with the solution of the model problem in the same way as in the previous section. Thus, if we solve the parametrix problem, we can expect that the following result is valid. \begin{theorem}\label{theor2} Let
\begin{enumerate}[1.]
 \item
$\{a(n,t),\,b(n,t)\}$ be the solution of the problem \eqref{tl}, \eqref{ini}--\eqref{cont12}
 as $n, t\to\infty$ in the domain $\xi_{cr,1}\, t<n<\xi_{cr,1}^\prime\, t$, where
 $\xi_{cr,1}$ and $\xi_{cr,1}^\prime$ are defined by \eqref{xicrit1}--\eqref{xicr};
 \item $\xi\in(\xi_{cr,1},\, \xi_{cr,1}^\prime)$ be a  parameter; $\gamma_1(\xi)\in (-1, 1)$ be
 defined by \eqref{zerocond1};  \item $\{a_1(n,t,\xi), b_1(n,t,\xi)\}$ be the finite-gap solution defined by 
 \eqref{defzz1}--\eqref{bee1}.
\end{enumerate}
Then in a vicinity of any ray $n=\xi t$ the solution of the problem \eqref{tl}, \eqref{ini}--\eqref{cont12} has the following asymptotical behavior as $t\to +\infty$
\[
 a(n,t)=a_1(n,t,\xi) +o(1),\quad b(n,t)=b_1(n,t,\xi)  +o(1).
\]
\end{theorem}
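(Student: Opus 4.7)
The plan is to mirror the analysis of Sections \ref{secg}--\ref{sec:model} on the ``left'' side, using the RH problem I, \eqref{eq:jumpcond2}--III for $m_1(p)$ and the $g$-function $g_1(p,\xi)$ on $\M_1(\xi)$ already introduced above. First I would verify that the four conjugation-deformation steps of Section~\ref{sec:red} carry through \emph{mutatis mutandis}: Step 1 replaces $\Phi_1$ by $g_1$ via the conjugation $D(p)=\mathrm{diag}(d^{-1},d)$ with $d=\E^{t(\Phi_1-g_1)}$, making use of the exponential decay in \eqref{decay} and the holomorphy of $R_1$ inside $\mathfrak D$ granted by \eqref{cont12}; Step 2 transfers the problem from $\M$ to the smaller surface $\M_1(\xi)$, shifting the portion of the jump on $[\gamma_1(\xi),1]$ that now lies outside $\M_1(\xi)$ onto the contour $I_6\cup I_6^*$ by exactly the symmetry trick \eqref{transfcp}; Step 3 factorizes the jump on $\Sigma_1$ using Schur complements and opens lenses $\mathcal{C}_2,\mathcal{C}_2^*$ along $\Sigma_2(\xi)$; Step 4 performs the scalar $F_1$-conjugation \eqref{ff1} to absorb the modulus of $\chi$ and produce the $\pm\I$ jump on $\Sigma_{2,u/\ell}(\xi)$. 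Lemmas~\ref{lem:conjug} and~\ref{lemzero} apply verbatim with $R,\chi$ replaced by $R_1,\overline\chi$.

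Next, I would solve the resulting model problem with jump \eqref{modv1} explicitly. The explicit formula from Section~\ref{sec:model} carries over with the obvious replacements: the $\mathfrak a_1,\mathfrak b_1$ basis on $\M_1(\xi)$, Abel map $A_1$, Riemann constant $\Xi_1$, period $\tau_1$, theta function $\theta_1$, and Abel differentials $\tilde\omega_{pp^*},\tilde\zeta$. The only notational twist is the sign/shift $-\frac12$ in \eqref{alphaa1} and the use of $p^*$ in its argument, which reflects the swapped roles of $\Sigma_{2,u}$ versus $\Sigma_{2,\ell}$ and of $\gamma_1$ versus $-c-2d$ as base-point of the Abel map. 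A direct check as in Lemma~\ref{omega3} shows $F_1$ solves \eqref{ff1} and has real limits at $\infty_\pm$ by Lemma~\ref{omega2}; the Jacobi inversion theorem then produces the divisor point $\rho_1(\xi)$ encoded in $\underline z_1(n,t)$ of \eqref{defzz1}.

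To extract the asymptotics I would expand $m_1^{mod}(p)$ around $\infty_+$ to order $\la^{-1}$ exactly as in \eqref{decomp12}, undo the Steps 1--4 via \eqref{ddd}, and read off the $\la^0$ and $\la^{-1}$ terms through the dictionary \eqref{asm}--\eqref{AB} to obtain $\prod_{j\ge n}2a(j,t)$ and $-B_1(n{+}1,t)$. The identity \eqref{partial K_1}, which is the analog of Lemma~\ref{et}, converts the $\xi$-derivative $K_1'(\xi)$ appearing after differencing in $n$ into the constant $\tilde a/d$, and analogously $k_1'(\xi)=\tilde b+c$. This yields \eqref{aee1}--\eqref{bee1} after the same telescoping argument that led to \eqref{asan} and \eqref{bee}; the slow-variable principle ($\xi=n/t$ varies by $O(1/t)$ when $n\mapsto n\pm1$) produces the $o(1)$ remainder.

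The main obstacle is the same as in the right region and is only sketched here: one must show that the contributions from the lens contours $\mathcal{C}_j,\mathcal{C}_j^*$ and the parametrix near the endpoints $\gamma_1(\xi)$, $-c+2d$, $1$ really give the $o(1)$ control claimed. Away from these endpoints the jump matrix on the lenses differs from the identity by terms exponentially small in $t$ thanks to the sign table in Fig.~\ref{fig:Reg1} and the analyticity granted by \eqref{cont12}; near the endpoints a standard Airy/Bessel parametrix yields an error of order $t^{-1/2}$, which is swallowed into $o(1)$. Together with the solvability of the small-norm RH problem for the error vector (see Appendix~\ref{app:uniq} for uniqueness and the analog of the arguments in Section~\ref{sec:model}), this completes the proof of Theorem~\ref{theor2}.
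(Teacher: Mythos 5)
Your proposal takes essentially the same route as the paper: Section~\ref{sec:el} proves Theorem~\ref{theor2} precisely by repeating the four conjugation--deformation steps of Section~\ref{sec:red} for the left RH problem I, \eqref{eq:jumpcond2}--III on $\M_1(\xi)$ with $d(p)=\E^{t(\Phi_1(p)-g_1(p))}$, solving the model problem \eqref{modv1} with theta functions on $\M_1(\xi)$, and extracting \eqref{aee1}--\eqref{bee1} via \eqref{ddd}, \eqref{asm}, and \eqref{partial K_1}, while (like you) deferring the rigorous small-norm error estimates. The only minor slips are labels: the Step~3 lenses open around $\Sigma_1$ (where $R_1$ lives), the $F_1$-conjugation acts on $\Sigma_2(\xi)$, and the portion of the old band transferred to $I_6\cup I_6^*$ in Step~2 is $[-1,\gamma_1(\xi)]$ rather than $[\gamma_1(\xi),1]$ --- none of which affects the substance.
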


\begin{remark} In the case of spectra of equal length, that is, if $d=1/2$, one could also derive these results 
	from Section~\ref{sec:red} by the following transformation. Set $\hat a(n):=a(-n-1)$, $\hat b(n):=-b(-n)$ and rescale the spectral parameter by $\hat \la:= (\la - c)/2d$, which shifts the original problem with background spectra $[-c-2d, -c+2d]$ and $[-1,1]$ to $[-\hat c-2\hat d, -\hat c+2 \hat d]$ and $[-1,1]$, where $\hat c= c/2d$, $\hat d=1/4d$. To obtain explicit 
formulas for the solution in terms of theta functions, this approach requires to recompute all quantities, for example, $\hat \xi_{cr}=\frac{1}{2d}\xi_{cr,1}=\xi_{cr,1}$, and so on. 
\end{remark}

\section{Asymptotics of the solution for
\texorpdfstring{$\xi \in (\xi_{cr,1}^\prime, \xi_{cr,0}) \cup (\xi_{cr,0}, \xi_{cr}^\prime)$}{xi,prime,cr,1<n/t<xi,prime,cr,2}}
\label{sec:gap}

In this region we work on the initial Riemann surface $\M$ corresponding to the function \eqref{R1/2}. The
$g$-functions associated with the left and right RH problems \eqref{eq:jumpcond} and \eqref{eq:jumpcond2} coincide up to the sign. Namely, introduce two points $-c+2d<\mu_1(\xi) < \mu_2(\xi) <-1$ such that
\be \label{condgap}
\int_{-c+2d}^{-1}\frac{(\la - \mu_1(\xi))(\la - \mu_2(\xi))}{R^{1/2}(\la)}d\la = 0, \quad 
\mu_1(\xi) +  \mu_2(\xi)+c=-\xi.
\ee
The last equality implies that
\[
 g_{gap}(p, \xi)=  \int_{1}^p \frac{(\zeta - \mu_1(\xi))(\zeta-\mu_2(\xi))}{R^{1/2}(\zeta)} d\zeta
\]
has the following asymptotical behavior
\[
 g_{gap}(p)=\Phi(p) + O(1)=-\Phi_1(p) + O(1)  \quad \mbox{ as $p\to\infty_\pm$}.
\]
Moreover, $\mu_1$ and $\mu_2$ exist as $\xi\in (\xi_{cr,1}^\prime, \xi_{cr}^\prime)$ and
the line $\re g_{gap}(p)=0$ crosses the real axis between these two points.
We introduce the $\mathfrak b$ and $\mathfrak a$ cycles and all normalized Abel differentials on $\M$
analogously to those on $\M(\xi)$, $\M_1(\xi)$. It is evident that both RH problems can be solved by an analogous procedure as above, but with Step 1 (cf.\ Section~\ref{sec:red}) excluded. Moreover, to cancel the poles at $p_0$ 
and $p_0^*$ when performing Step 2 one has to use the right RH problem  \eqref{eq:jumpcond}, \eqref{polecond} for $\xi\in (\xi_{cr,0},\xi_{cr}^\prime)$, and the left RH problem \eqref{eq:jumpcond2}, \eqref{polecond1} for $\xi\in(\xi_{cr,1}^\prime, \xi_{cr,0})$. Thus, as the asymptotic of the solution of the problem \eqref{tl}--\eqref{cont12} we get two finite-gap solutions, described by \eqref{aee}--\eqref{bee} and \eqref{aee1}--\eqref{bee1} with the same theta function ($\tau=\tau_1$), but with different arguments: \eqref{defzz} and \eqref{defzz1}. Namely, since $\ul z(n,t)$ and $\ul z_1(n,t)$ are defined on the same surface, then $\Lambda=\Lambda_1$, $U=U_1$, $\Xi=\Xi_1$, $A(\infty_+)=A_1(\infty_+)$. Moreover, none of them depends on $\xi$. 
Thus
\be\label{zetasv}
\ul z_1(n,t)=\ul z(n,t)-\frac{\Delta}{2\pi} + \frac{1}{2} - \frac{\Delta_1}{2\pi},
\ee
where
\[
\Delta:=\I\int_{\Sigma_1} \log |\chi|\zeta,\quad \Delta_1:=\I\int_{\Sigma} \log |\chi|\zeta,
\]
and $\zeta$ is the holomorphic normalized Abel differential on $\M$. The contours  $\Sigma_1$ and $\Sigma$ are the same as in Section 2. Formula \eqref{zetasv} implies the following reasoning. To prove that the asymptotics of the Toda lattice solution are the same when obtained from the right and from the left RH problems in the absence of discrete spectrum in the gap $[-c+2d, -1]$, it is sufficient to prove that
$\Delta+\Delta_1= \pi\,(\mbox{mod}\, 2\pi)$ or
 \be\label{oo} \E^{\I(\Delta+\Delta_1)}=-1. \ee
Let us first make sure that this is true. Denote $I=I^1=[-c+2d, -1]\in\Pi_U$.
Recall that (cf.\ Lemma \ref{omega3})
the function
\[
F(p):=\exp\left(\frac{1}{2\pi\I}\int_{\Sigma\cup\Sigma_1}\log |\chi| \omega_{p p^*}\right)\]
is the unique solution of the following conjugation problem: to find a holomorphic function on  $\M\setminus(\Sigma\cup\Sigma_1\cup I\cup  I^*)$,  such that
\begin{align} \label{condu}
F(p^*)&=F^{-1}(p),\quad p\in\M, \\ \label{ffff}
F_+(p)&=F_-(p)  |\chi(p)|, \quad p \in \Sigma\cup\Sigma_1.
\end{align}
On the set $ I\cup I^*$ this function has a jump,
\be\label{DDD}
F_+(p)=F_-(p)\E^{\I \tilde\Delta}, \quad  p\in  I\cup  I^*, \quad \mbox{with }\,
\tilde\Delta:=\I\int_{\Sigma\cup\Sigma_1} \log |\chi|\zeta=\Delta +\Delta_1.
 \ee
The orientation on $ I\cup I^*$ is the same as for the $\mathfrak a$ cycle.
Note that the jump along $  I\cup  I^*$ can not be arbitrary.
In fact, one can solve \eqref{ffff} without the Sokhotski--Plemelj formula as follows.
Consider the function
\[
\tilde\delta(\la)=\sqrt[4]{\frac{(\la+c)^2 - 4d^2}{\la^2 - 1}}, \quad \tilde\delta(2)>0,\
\la\in\C\setminus [-c-2d, 1].
\]
Let the interval $[-c-2d,1]$ be oriented in the positive direction. Then
\[
\tilde\delta_+(\la)=\tilde\delta_-(\la)
\begin{cases} -\I, &\la\in (-1,1),\\
                                 -1, & \la\in(-c+2d, -1),\\
                                  \I, & \la\in (-c-2d,-c+2d).
\end{cases}
\]
Set $\delta(p)=\tilde\delta(\la)$ as $p=(\la, +)\in \Pi_U$ and $\delta(p)=\delta^{-1}(p^*)$.
Then $\delta(p)$ solves the following conjugation problem
\[
\delta_+(p)=\delta_-(p) \begin{cases}
\left|\sqrt{\frac{(\pi(p)+c)^2 - 4d^2}{\pi(p)^2 - 1}}\right|, & p\in\Sigma\cup\Sigma_1,\\
-1,& p\in I\cup I^*.\end{cases}
\]
On the other hand, the transmission coefficient $f(p)=T(p)$ is a single-valued function
on the upper sheet of $\M$ and takes complex conjugated values in symmetric points of $\Sigma$.
Continue $f$ on the lower sheet by $f(p)=T^{-1}(p^*)$, $p=(\la,-)$. Then
$f(p^*)=f^{-1}(p)$, $p\in\M$, and it is a solution bounded at $\infty_\pm$ of the problem
\be\label{jumpforf}
f_+(p)=f_-(p)|T(p)|^2, \quad p\in\Sigma\cup\Sigma_1.
\ee
We also observe that $f(p)$ and $\delta(p)$ are bounded at $\infty_\pm$.
Taking into account  \eqref{propchi} and \eqref{TW} we conclude that the function
$F(p)=\delta(p)f(p)$ solves the problem \eqref{condu}--\eqref{ffff} and $F_+(p)=- F_-(p)$ as
$p\in  I\cup  I^*$. Comparing this with \eqref{DDD} we get
\eqref{oo}. Thus both RH problems provide the same finite-gap solution.

We return now to our case. The function $T(p)$ has a simple pole at $p_0$ on $\Pi_U$, and we 
have to take as $f(p)$ the following function:
 \[
 f(p)=\begin{cases} B(p,p_0)T(p),\quad p\in\Pi_U,\\ f^{-1}(p^*), \quad p\in\Pi_L,\end{cases}
 \] 
where 
 \[
B(p,p_0)=\exp\left(\int_{-c-2d}^p \omega_{p_0 p_0^*}\right)
\] is the Blaschke factor (cf.\ \cite{Teschl1}). Since $|B(p,p_0)|=1$ as $p\in \Sigma\cup\Sigma_1$,
then $f(p)$ solves the jump problem \eqref{jumpforf}. But unlike in the previous case, the Blaschke factor 
has a jump along $I\cup I^*$, 
 \[
 B_+(p,p_0)=B_-(p,p_0)\exp\left(\int_{\mathfrak b}\omega_{p_0,p_0^*}\right).
 \]
 Thus
 \[
 \Delta +\Delta_1=\pi + \frac{1}{\I}\int_{\mathfrak b}\omega_{p_0,p_0^*},
 \]
 or, taking into account \eqref{zetasv},
\be\label{zetasvf} \ul z_1(n,t)=\ul z(n,t) -\frac{1}{2\pi\I}\int_{\mathfrak b} \omega_{p_0p_0^*}\quad (\mbox{mod}\  1).\ee
To formulate the result, recall that all objects introduced in Sections~\ref{sec:model} and \ref{sec:el}, namely, $A(p)$, $\tau$, $\Lambda$, $U$, $\Xi$, $Y$, $\Delta$, $\tilde a$, $\tilde b$ do
not depend on $\xi$ for $\xi \in (\xi_{cr,1}^\prime, \xi_{cr}^\prime)$. Respectively, the finite-gap solutions constructed in \eqref{aee}, \eqref{bee} or \eqref{aee1}, \eqref{bee1} do not depend on $\xi$. Formula \eqref{zetasvf} implies that we can represent them via 
\begin{align}\label{aee4}
\hat a^2(n,t)&=\tilde a^2\,\frac{\theta(\underline{z}(n-1,t))\theta(\underline{z}(n+1,t))}{\theta^2(\underline{z}(n,t))}, \\
\label{bee4}
\hat b(n,t)&=\tilde b +\frac{1}{Y}\,\frac{\pa}{\pa w}\log\left(\frac{
\theta\big(\underline z(n-1,t) + w\big)}{\theta\big(\underline z(n,t) + w\big)}\right)\Big|_{w=0},
\end{align}
where the argument of the theta-function undergoes a phase shift due to the presence of the eigenvalue,
\be\label{zee4} \underline{z}(n,t)=A(\infty_+)- n\frac{\Lambda}{2\pi\I}-t\frac{U}{2\pi\I} +\frac{\tau}{2}
+\frac{\Delta}{2\pi}-\frac{\Lambda}{2\pi\I}- \Xi, \quad  \xi\in (\xi_{cr,0},\xi_{cr}^\prime),
\ee
and \be\label{ze5}\underline{z}(n,t)=A(\infty_+)- n\frac{\Lambda}{2\pi\I}-t\frac{U}{2\pi\I} +\frac{\tau}{2}
+\frac{\Delta}{2\pi}-\frac{\Lambda}{2\pi\I}- \Xi -\frac{1}{2\pi\I}\int_{\mathfrak b} \omega_{p_0p_0^*}, \ \xi\in (\xi_{cr,1}^\prime, \xi_{cr,0}).
\ee

\begin{theorem}  \label{theor3} 
In the domain $\xi_{cr,0} t < n < \xi_{cr}^\prime t$, the solution $\{a(n,t), b(n,t)\}$ of the problem \eqref{tl}, \eqref{ini}--\eqref{cond15} is asymptotically close as $t\to\infty$ to
the two band solution  $\{\hat a_q(n,t), \hat b(n,t)\}$  constructed by \eqref{aee4}--\eqref{zee4}.
In the domain $\xi_{cr,1}^\prime t < n < \xi_{cr,0} t$, the solution $\{a(n,t), b(n,t)\}$ of the problem \eqref{tl}, \eqref{ini}--\eqref{cond15} is asymptotically close as $t\to\infty$ to
the two band solution  $\{\hat a_q(n,t), \hat b(n,t)\}$   constructed by \eqref{aee4}, \eqref{bee4}, \eqref{ze5}.
\end{theorem}

If $d=\frac{1}{2}$ such that the spectra of the background operator are of equal length, then the solution of \eqref{tl}, \eqref{ini}--\eqref{cond15} is close to the periodic Toda lattice solution, which undergoes 
a phase shift if discrete spectrum is present, as shown in \cite{vdo}.

\section{Asymptotics of the solution for \texorpdfstring{$\xi > \xi_{cr}$}{xi > x,cr} and \texorpdfstring{$\xi < \xi_{cr,1}$}{xi < xi,cr,1}}
\label{sec:lr}

Let us consider the domain $\xi > \xi_{cr}$ first, where we study the right RH problem I,  \eqref{eq:jumpcond}, \eqref{polecond}, IV. The signature table of
$\re \Phi(p)$  in this case is depicted in Fig.~\ref{fig:signRePhi5}.
\begin{figure}[ht]
\begin{picture}(6,2.3)

\put(-0.5,1){$+$}
\put(3.15,1.5){$-$}
\put(3.15,0.5){$-$}

\linethickness{0.2mm}
\curve(0,0, 0.3,1, 0,2)
\put(0.2,1.8){$\re \Phi =0$}

\linethickness{0.2mm}
\put(1,1){\line(1, 0){1.5}}
\put(1,1){\circle*{0.1}}
\put(2.5,1){\circle*{0.1}}
\put(1.7,0.6){ $I_1$}

\linethickness{0.6mm}
\put(4,1){\line(1, 0){2}}
\put(4,1){\circle*{0.1}}
\put(6,1){\circle*{0.1}}
\put(4.9,0.6){ $\Sigma$}
\put(4.5,1.15){$\re \Phi =0$}

\end{picture}
\caption{Signature table of $\re \Phi(p, \xi)$ for $\xi > \xi_{cr}$ on $\Pi_U$}\label{fig:signRePhi5}
\end{figure}
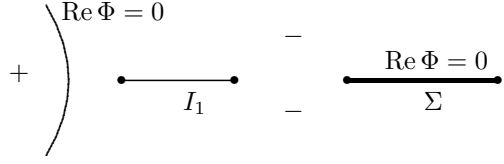
Here $\Phi$ serves as the $g$-function itself. Step 1 is the same with $I_3=I_1=[-c-2d,-c+2d]$, which means
that we switch from the initial Riemann surface $\M$ to the Riemann surface $\hat\M$ of the function $\sqrt{\la^2-1}$.
Step 2 is done on the domain $\Omega\subset\mathfrak D\cap \{p\in\Pi_U \mid \re\Phi(p)<0\}$ as depicted in Figure~\ref{fig:S2}.
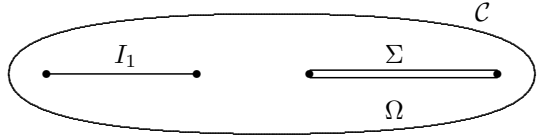
\begin{figure}[ht]
  \begin{picture}(7,2.2)

  \linethickness{0.2mm}
  \put(0.5,1){\line(1,0){2}}
  \put(0.5,1){\circle*{0.1}}
  \put(2.5,1){\circle*{0.1}}
  \put(1.4,1.15){$I_1$}

  \put(4,1.05){\line(1,0){2.5}}
  \put(4,0.95){\line(1,0){2.5}}
  \put(4,1){\circle*{0.1}}
  \put(6.5,1){\circle*{0.1}}
  \put(5,1.15){$\Sigma$}

  \linethickness{0.1mm}
  \closecurve(0,1, 3.5,1.8, 7,1, 3.5,0.2)
  \put(5, 0.4){$\Omega$}
  \put(6.2,1.7){$\mathcal{C}$}

  \end{picture}
  \caption{The lens contour for Step 2 on $\Pi_U(\xi)$.}\label{fig:S2}
\end{figure}
Taking into account \eqref{pruff}, we obtain that the jump matrix on $I_1\cup I_1^*$ is simply the identity matrix. Thus on $\hat\M$ we have the following conjugation problem: $m_+^2(p)=m_-^2(p)v^2(p)$  with
\[
v^2(p) = \begin{cases}
\id, & p \in \Sigma, \\
 \begin{pmatrix}
1 & 0 \\
-R(p) \E^{2 t \Phi(p)} & 1
\end{pmatrix}, & p \in \mathcal{C},\\[4mm]
 \begin{pmatrix}
1 & -R(p^*) \E^{-2 t \Phi(p)} \\
0 & 1
\end{pmatrix}, & p \in \mathcal{C}^*.
\end{cases}
\]
These transformations did not affect the asymptotical behavior of the initial solution.
The jumps on $\mathcal{C} \cup \mathcal{C}^*$ are close to the identity matrix
with a difference that is exponentially small with respect to $t$. Clearly, the unique solution of the model problem $m^{mod}_+(p)=m^{mod}_-(p)v^{mod}(p)$ on $\Sigma$ with
standard normalization and symmetry conditions is the unit vector.
Taking into account Lemma~\ref{asypm}, we get $a(n,t)=\frac{1}{2}+o(1)$
and $b(n,t)=o(1)$ when $t \to \infty$ in the region $\frac{n}{t}> \xi_{cr}$.

Studying the left RH problem I, \eqref{eq:jumpcond2}--III in the domain $\frac{n}{t} < \xi_{cr,1}$,
we obtain in the same manner that $a(n,t)=d+o(1)$ and $b(n,t)=-c + o(1)$ when $t \to \infty$.

\appendix
\section{Proof of Lemma \ref{abelint}}

We start by proving items (i) and (iv), i.e., we first prove that the points $\gamma(\xi)\in (-c-2d, -c +2d)$ and $\mu(\xi)\in (\gamma(\xi),-1)$ satisfying \eqref{eqas}--\eqref{zerocond} can be chosen uniquely, and that  $\gamma(\xi)$ moves continuously to the right from $\gamma(\xi_{cr})=-c-2d$ to 
$\gamma(\xi_{cr}^\prime)=-c+2d$ with respect to decreasing $\xi$.
First of all note that if
 $\gamma\in I_1$, then  $\mu=\mu(\gamma)\in (\gamma, -1)$ is defined as
\be\label{defmu}
\mu(\gamma)=\int_{\gamma}^{-1}\frac{\la (\la - \gamma)}{R^{1/2}(\la,\gamma)}d\la\,\left(\int_{\gamma}^{-1}\frac{ \la - \gamma}{R^{1/2}(\la,\gamma)}d\la\right)^{-1}.
\ee
Evidently, $\mu(\gamma)$ is a continuous function of $\gamma$ and by the mean value theorem
\be\label{triv}
\gamma<\mu(\gamma),\quad \forall \gamma \in (-c-2d, -c+2d).
\ee
Now consider $\mu$ as a function of $\xi$ defined via \eqref{eqas} and insert it into \eqref{zerocond}. Then
\[
F(\gamma,\xi):=-\int_\gamma^{-1}\frac{\sqrt{\la - \gamma}\big(\la + \xi +\frac{\gamma + c+2d}{2}\big)}{\sqrt{(\la^2 - 1)(\la + c+2d)}}d\la
\]
satisfies $F(\gamma(\xi),\xi)\equiv 0$. Thus $\gamma(\xi)$ is an implicitly given function and
\[
\frac{\partial F}{\partial \gamma}\frac{ d\gamma}{d\xi} + \frac{\partial F}{\partial \xi}=0.
\]
Since
\[
\frac{\partial F}{\partial \gamma}=-\frac{1}{2}\left(\xi +\frac{3\gamma +c+2d}{2}\right)\int_\gamma^{-1}\frac{d\la}{R^{1/2}(\la,\gamma)}, \quad
\frac{\partial F}{\partial \xi}=\int_\gamma^{-1}\frac{\la - \gamma}{R^{1/2}(\la,\gamma)}d\la,
\]
and $R^{1/2}(\la,\gamma)$ does not change its sign on the interval $(\gamma, -1)$, then
\be\label{derivg}
\frac{ d\gamma}{d\xi}=\frac{4\mathcal{K}(\xi)}{2\xi + 3\gamma + c + 2d},
\ee
where
\[
\mathcal{K}(\xi)= \int_{\gamma(\xi)}^{-1}\frac{\la - \gamma(\xi)}{R^{1/2}(\la,\gamma(\xi))}d\la\,
\left(\int_{\gamma(\xi)}^{-1}\frac{d\la}{R^{1/2}(\la,\gamma(\xi))}\right)^{-1}>0
\]
in the region under consideration. We want to show that
\be\label{defef}
f(\xi):=2\xi +3\gamma(\xi) + c + 2d<0,  \text{ for }  \xi\in (\xi_{cr}^\prime,\ \xi_{cr}).
\ee
First observe that
\be\label{leftcr}
\gamma(\xi_{cr})=-c-2d.
\ee
Namely, for any $\gamma$, the function $\mu(\gamma)$ is defined by \eqref{defmu}, which for $\gamma=-c-2d$ is equal
to
\[
\mu(-c-2d)=\int_{-c-2d}^{-1}\frac{\la}{\sqrt{\la^2 - 1}}d\la\,\left( \int_{-c-2d}^{-1}\frac{d\la}{\sqrt{\la^2 - 1}}\right)^{-1}=- \xi_{cr,2}
\]
by \eqref{xicrit2}.
On the other hand, for $\mu=- \xi_{cr}$, $\gamma=-c-2d$, and $\xi= \xi_{cr}$, \eqref{eqas} is also satisfied. Thus \eqref{leftcr} is true and $f(\xi_{cr})=2(\xi_{cr} -c-2d)$. Moreover, \eqref{xicrit2} and $c+2d>1$ imply
\be\label{ineq7}2 \xi_{cr}\leq c+2d+1.
\ee

Since $c+2d+1<2c+4d$  by \eqref{cond7}, then $f(\xi_{cr})<0$ where $f$ is defined by \eqref{defef}. Moreover, the function $\gamma$ is continuously differentiable with respect to $\xi$ at least in the
right vicinity of $\xi_{cr}$. In fact, it will be continuously differentiable up to the first point $\xi$ where $f(\xi)=0$. But if $f(\xi)=0$ then $3\gamma +c+2d=-2\xi$. On the other hand, by \eqref{eqas}
$-2\xi=\gamma + 2\mu + c +2d$. Thus at $\xi_0$ where $f(\xi_0)=0$, one obtains $\mu=\gamma$ which contradicts \eqref{triv}. Therefore $\frac{d\gamma}{d\xi}<0$, starting at the point $\gamma=-c-2d$ where $\xi=\xi_{cr}$ and at least up to the point $\gamma=-c+2d$ where $\xi=\xi_{cr}^\prime$.

Now we prove (ii), i.e., we prove representations \eqref{cond37}--\eqref{cond333}. Let $h=h(\xi)$  be defined by \eqref{cond36}. 
Given $\gamma$, $\mu$ and $h$, we observe that $\nu_1$ and $\nu_2$ are zeros of the polynomial
 \[
 p(\la)= (\la-\nu_1)(\la-\nu_2) = (\la - \gamma)(\la - \mu) - \xi(\la - h)
 \]
 with real-valued coefficients. These zeros cannot be complex conjugated, because  \eqref{cond37}, \eqref{cond35} (b), and \eqref{zerocond} imply
 \[
 \int_{\gamma}^{-1}\frac{p(\la)d\la}{R^{1/2}(\la,\gamma)}=0,
 \]
moreover, this formula implies that at least one zero belongs to the interval $(\gamma, -1)$.
Condition \eqref{cond333} is true due to the asymptotical behavior of the l.h.s. of \eqref{cond37} and  function $\omega(\la,\xi)$.

To prove \eqref{deri} we observe that
\be\label{idd}
-\frac{\pa}{\pa\xi}\frac{(\la - \mu(\xi))(\la-\gamma(\xi))}{R^{1/2}(\la,\gamma(\xi))} +\omega(\la,\xi)=
  \frac{2\mu^\prime(\la-\gamma) + \gamma^\prime(\la-\mu)}
 {2R^{1/2}(\la,\gamma)} +\frac{\la-h}{R^{1/2}(\la,\gamma)}.\ee
 By \eqref{eqas}, $2\mu^\prime + \gamma^\prime=-2$, therefore \eqref{idd} is in turn equivalent to
 \be\label{zhut2}
2\mu^\prime(\xi)\gamma(\xi)+ \gamma^\prime(\xi) \mu(\xi) =- 2h(\xi).
\ee Hence to prove \eqref{deri} amounts to proving \eqref{zhut2}.
From \eqref{derivg} we have
\be\label{primeg}
\gamma^\prime(\xi)=\frac{4 b(\xi) - 4\gamma(\xi)}{2\xi + 3\gamma(\xi) + c + 2d},
\ee
where $b(\xi)$ is defined by \eqref{cond40}.  
On the other hand, \eqref{eqas} implies
$2\mu^\prime= -2 -\gamma^\prime$ and $2\mu=-2\xi-\gamma+c+2d$.
Substituting this into the l.h.s.\ of \eqref{zhut2} and using  \eqref{primeg} yields
$$
2\mu^\prime \gamma +\gamma^\prime \mu 
=-\frac{\gamma^\prime}{2}\left( 3\gamma + 2\xi + c + 2d\right) - 2\gamma=-(2b - 2\gamma) - 2\gamma=-2b.
$$
By \eqref{cond36} $b(\xi)=h(\xi)$, which proves \eqref{zhut2}.

\section{Uniqueness for the model problem}

In this appendix we prove uniqueness for the solution of the model problem
(4.9)--(4.13), which admits weak singularities at two points on the jump contour.
\begin{lemma} Let $m(p)=(m_1(p), m_2(p))$, $p\in\M(\xi)$, be a solution of the problem (4.9)-(4.13), which is holomorphic in $\M(\xi)\setminus \mathcal L$,
$\mathcal L:=\left(\Sigma_1(\xi)\cup\Sigma_2\cup I_3\cup I_3^*\right)$, and has continuous limits as $p$ approaches any 
point of the contour $\mathcal L$ with the exception of the two branch points $E_2:=(\gamma(\xi),\pm)$ and $E_1:=(-c-2d,\pm)$.
Let $m(p)=O((p - E_j)^{-1/4})$ as $p\to E_j$, $j=1,2$. The solution with such properties is unique.
\end{lemma}
\begin{proof}
Let $\tilde m=(\tilde m_1, \tilde m_2)$ and 
 $\breve m=(\breve m_1, \breve m_2)$ be two solutions satisfying all conditions of the lemma.
Consider them as functions on the Riemann surface $\hat\M$ of the function $\sqrt{\lambda^2 - 1}$.  Then the contour $\Sigma_1(\xi)$ transforms onto two contours: $I_1=[E_1, E_2]$ on the upper sheet of $\hat M$, oriented in positive direction, and
$I_1^*$ on the lower sheet, oriented in negative direction. The jump matrix for $\tilde m$ and $\breve m$ on $I_1$ will be $v(p)=\I\sigma_1$, and on $I_1^*$  $v(p)=-\I\sigma_1$.

Consider the matrix
$$S(p)=\begin{pmatrix} \tilde m_1& \tilde m_2\\
\breve m_1 & \breve m_2\end{pmatrix}$$ and let $s(p)=\det S(p)$. Since $\det v(p)=1$ on $\Sigma\cup I\cup I^*\cup I_1\cup I_1^*$  then $s(p)$ has no jumps on $\hat \M$ and is a holomorphic
function on this surface except of four points $(E_j,\pm)$, $j=1,2$. In these points  $s(p)$ has isolated singularities of order $O((\pi(p) - \pi(E_j))^{-1/2})$. Therefore these points on the upper and lower sheets of $\hat \M$ are removable singularities. We conclude that $s(p)$ is holomorphic on $\hat \M$ and bounded at $\infty_\pm$ due to normalization conditions for $\tilde m$ and $\breve m$. By the Liouville theorem $s(p)\equiv const$ on $\hat\M$. The
 symmetry condition implies
$s(p)+s(p^*)=0$, respectively at the branch points of $\hat M$ we have $s(-1)= s(1)=0$, that is $s(p)=0$ as $p\in\hat \M$.

Therefore $(\hat m_1(p),\hat m_2(p))=c(p) (\tilde m_1(p), \tilde m_2(p))$, where a scalar function $c(p)$ has no jumps on $\hat\M$, 
moreover, the normalization condition implies $\lim_{p\to\infty_\pm}c(p)=1$. Respectively for any two solutions $\breve m$ and $\tilde m$ we have in fact $\hat m(\infty_\pm)=\tilde m(\infty_\pm)$. Therefore, to prove the uniqueness of the solution of the RH problem under consideration, it is sufficient to prove that the only  solution $m(p)$, satisfying the  jump condition (4.9), symmetry condition, "vanishing  condition" $m(\infty_\pm)=0$, which has   "weak singularities" of order $O((p - E_j)^{-1/4})$ at $E_j$, is the  trivial solution.

Let $m(p)=(m_1(p), m_2(p))$ be a solution of this "vanishing problem".
Consider the weight 
$$
d\Omega=\frac{\I d\lambda}{\sqrt{\lambda^2 - 1}},\quad p=(\lambda,+)\in \M(\xi),
$$ 
on  the initial  Riemann surface $\M(\xi)$, which corresponds to the model problem. Let $\mathcal C$ be a closed contour on the upper sheet oriented counterclockwise. Since the function $f(p):=m(p)m^\dagger(\overline{p^*})$ 
has the behavior $f(p)=O\left((p - E_j)^{-1/2}\right)$ at the edges of $\Sigma_1(\xi)$, it is integrable and by the residue theorem
$$0=\int_{\mathcal C} f(p)d\Omega=\int_{\Sigma_1(\xi)}f(p)d\Omega + \int_{\Sigma_2}f(p)d\Omega = J_1 + J_2.
$$ 
The integrals along the upper and lower sides of
$I$ cancel each other due to the jump condition (4.9).
Since $\sqrt{\lambda^2 - 1}<0$ as $p\in\Sigma_1^u(\xi)\cup \Sigma_1^l(\xi)$ and $\I d\Omega>0$ as $p\in\Sigma_1^u(\xi)$ then by  (4.9)
$$J_1=\I\int_{\Sigma_1^u(\xi)} (|m_{1,-}(p)|^2 + |m_{2,-}(p)|^2)d\Omega -\I \int_{\Sigma_1^l(\xi)} (|m_{1,-}(p)|^2 + |m_{2,-}(p)|^2)d\Omega\geq 0.$$
On $\Sigma$ we have $d\Omega>0$ and $m_+(p)=m_-(p)=m(p)$, therefore
$$J_2=\int_{\Sigma}\|m(p)\|^2d\Omega\geq 0.$$ Thus
$m(p)=0$ on $\Sigma_1(\xi)$ and $\Sigma$, except of $E_1$ and $E_2$. Since both components of $m(p)$ have no jump in a neighborhood of $E_1$, this point is an isolated singularity, and since $m(p)= O((p-E_j)^{-1/4})$, then in fact $m(p)$ is bounded near $E_1$. 

Introduce  now a new weight $$d\Omega_1(p) =\frac{ \I d\lambda}{\sqrt{(\lambda +1)(\lambda-E_1)}},\quad p=(\lambda, +)\in\M(\xi).$$ We observe that $d\Omega_1>0$ as $p\in I$.
Consider (note, that here we integrate $m^\dagger(\overline{p})$, not $m^\dagger(\overline{p^*})$)
$$
0=\int_{\mathcal C} m(p)m^\dagger(\overline{p})d\Omega_1(p)=\int_{\Sigma_1(\xi)\cup \Sigma}m(p)m^\dagger(\overline{p})d\Omega_1 + J=J,
$$
where
$$
J=2\cos y\int_{I}(|m_{1,-}|^2 + |m_{2,-}|^2)d\Omega_1,
$$ 
with $y=2tB + \Delta.$
Since $J=0$ then for those $t$ for which $\cos(2tB+\Delta)\neq 0$, we have $m_-(p)=m_+(p)=0$ on $I$.  The same consideration as above then shows that $m$ has an isolated removable singularity at $E_2$ too.
Since $m(p)=0$ on $I$ and, respectively, on $I^*$, $m(p)\equiv 0$.

\end{proof}

\bigskip
\noindent{\bf Acknowledgments.} We thank Spyros Kamvissis, Irina Nenciu, and Dmitry Shepelsky for discussions on this topic. I.E. is indebted to the Department of Mathematics at the
University of Vienna for its hospitality and support during the winter of 2014, where
some of this work was done.


\begin{thebibliography}{XXX}

\bibitem{bdmek} A. Boutet de Monvel, I. Egorova, and E. Khruslov,
{\em Soliton asymptotics of the Cauchy problem solution for the Toda lattice},
Inverse Problems {\bf 13-2}, 223--237 (1997).
\bibitem{bk} A. M. Bloch  and Y. Kodama, {\em The Whitham equation and shocks in the Toda lattice}, Proceedings
of the NATO Advanced Study Workshop on Singular Limits of Dispersive Waves held in Lyons,
July 1991, Plenum Press, New York, 1994.
\bibitem{bk2} A. M. Bloch  and Y. Kodama, {\em Dispersive regularization of the Whitham equation for the Toda lattice}, SIAM J. Appl. Math. {\bf 52}, 909--928 (1992).
\bibitem{BrFr} P. F. Byrd and M. D. Friedman, {\em Handbook of Elliptic Integrals for Engineers and Physicists}, Springer, Berlin, 1954.
\bibitem{dkkz} P. Deift, S. Kamvissis, T. Kriecherbauer, and X. Zhou, {\em The Toda rarefaction problem}, Comm. Pure Appl. Math. {\bf 49}, 35--83 (1996).
\bibitem{dvz} P. Deift, S. Venakides, and X. Zhou, {\em The collisionless shock region for  the long time behavior of solutions of the KdV equation},
Comm. Pure and Appl. Math. {\bf 47}, 199--206 (1994).
\bibitem{dz} P. Deift and X. Zhou, {\em A steepest descent method for oscillatory Riemann--Hilbert problems}, Ann. of Math. (2) {\bf 137}, 295--368 (1993).
\bibitem{deift} P. Deift, T. Kriecherbauer, K. T.-R. McLaughlin, S. Venakides, and X. Zhou, {\em Uniform asymptotics for polynomials orthogonal with respect to varying exponential weights and applications to universality questions in random matrix theory}, Comm. Pure Appl. Math. {\bf 52}, No. 11, 1335--1425 (1999).  
\bibitem{eg} I. Egorova, {\em The scattering problem for step-like Jacobi operator},
Mat. Fiz. Anal. Geom. {\bf 9-2} 188--205 (2002).
\bibitem{egkt} I. Egorova, Z. Gladka, V. Kotlyarov, and G. Teschl, {\em Long-time asymptotics for the Korteweg-de Vries equation with steplike initial data}, Nonlinearity {\bf 26}, 1839--1864 (2013).
\bibitem{emtstp2} I. Egorova, J. Michor, and G. Teschl, {\em Scattering theory for Jacobi operators
with general steplike quasi-periodic background}, Zh. Mat. Fiz. Anal. Geom. {\bf 4-1}, 33--62 (2008).
\bibitem{emt3} I. Egorova, J. Michor, and G. Teschl, {\em Inverse scattering transform for the Toda hierarchy with steplike finite-gap backgrounds}, J. Math. Physics {\bf 50}, 103522 (2009).
\bibitem{emt5} I. Egorova, J. Michor, and G. Teschl, {\em Scattering theory with finite-gap backgrounds: transformation operators and characteristic properties of scattering data}, Math. Phys. Anal. Geom. {\bf 16}, 111--136 (2013).
\bibitem{ep} I. Egorova and L. Pastur, {\em On the asymptotic properties of polynomials orthogonal with respect 
to varying weights and related problems of spectral theory}, Algebra i Analiz {\bf 25}, No. 2, 101--124 (2013) 
(Russian). Translation in St. Petersburg Math. J. {\bf 25} No. 2, 223--240 (2014).
\bibitem{FK} H. Farkas and I. Kra, {\em Riemann Surfaces}, GTM 71, Springer, New York, 1980.
\bibitem{hfm} B. L. Holian, H. Flaschka, and  D. W. McLaughlin, {\em Shock waves in the Toda lattice:  Analysis}, Phys. Rev. A {\bf 24}, 2595--2623 (1981).
\bibitem{hs} B. L. Holian and G. K. Straub, {\em Molecular dynamics of shock waves in one-dimensional chains}, Phys. Rev. B {\bf 18}, 1593--1608 (1978).
\bibitem{its} A. R. Its, {\em Asymptotics of solutions of the nonlinear Schr\"odinger equation and
isomonodromic deformations of systems of linear differential equations}, Soviet Math. Dokl. {\bf 24},
452--456 (1981).
\bibitem{km0} S. Kamvissis, {\em On the Toda shock problem}, Phys. D, {\bf 65}, 242--256 (1993).
\bibitem{km} S. Kamvissis, {\em On the long time behavior of the doubly infinite Toda
lattice under initial data decaying at infinity}, Comm. Math. Phys. {\bf 153-3}, 479--519 (1993).
\bibitem{kt} S. Kamvissis and G. Teschl, {\em Stability of periodic soliton equations under short range perturbations}, Phys. Lett. A {\bf 364}, 480--483 (2007).
\bibitem{kt2} S. Kamvissis and G. Teschl, {\em Long-time asymptotics of the periodic Toda lattice under short-range perturbations}, J. Math. Phys. {\bf 53}, 073706 (2012).
\bibitem{KM} V. P. Kotlyarov and A. M. Minakov, {\em Riemann--Hilbert problem to the modified Korteweg--de Vries equation: Long-time dynamics of the step-like initial data}, J. Math. Phys. {\bf 51}, 093506 (2010).
\bibitem{KM2} V. P. Kotlyarov and A. M. Minakov, {\em Step-initial function to the mKdV equation: Hyper-elliptic long-time asymptotics of the solution}, J. Math. Phys. Anal. Geom. {\bf 8}, 38--62 (2012).
\bibitem{KTa} H. Kr\"uger and G. Teschl, {\em Long-time asymptotics for the Toda lattice in the soliton region}, Math. Z. {\bf 262}, 585--602 (2009).
\bibitem{KTb} H. Kr\"uger and G. Teschl, {\em Long-time asymptotics of the Toda lattice for decaying initial data revisited}, Rev. Math. Phys. {\bf 21}, 61--109 (2009).
\bibitem{KTc} H. Kr\"uger and G. Teschl, {\em Stability of the periodic Toda lattice in the soliton region}, Int. Math. Res. Not. {\bf 2009-21}, 3996--4031 (2009).
\bibitem{ma} S. V. Manakov, {\em Nonlinear Frauenhofer diffraction}, Sov. Phys. JETP {\bf 38-4}, 693--696 (1974).
\bibitem{m16} J. Michor, {\em Wave phenomena of the Toda lattice with steplike initial data}, Phys. Lett. A {\bf 380}, 1110--1116 (2016).
\bibitem{MT} A. Mikikits-Leitner and G. Teschl, {\em Long-time asymptotics of perturbed finite-gap Korteweg--de Vries solutions}, J. d'Analyse Math. {\bf 116}, 163--218 (2012).
\bibitem{mi} A. M. Minakov, {\em Asymptotics of rarefaction wave solution to the mKdV equation}, J. Math. Phys. Anal. Geom. {\bf 7}, 59--86 (2011).
\bibitem{mush} N.I. Muskhelishvili, Singular Integral Equations, P. Noordhoff Ltd., Groningen, 1953.
\bibitem{ro} Yu. Rodin, {\em The Riemann Boundary Problem on Riemann Surfaces},
Mathematics and its Applications (Soviet Series) {\bf 16}, D. Reidel Publishing Co., Dordrecht, 1988.
\bibitem{tjac} G. Teschl, {\em Jacobi Operators and Completely Integrable Nonlinear Lattices}, Math. Surv. and Mon. {\bf 72}, Amer. Math. Soc., Rhode Island, 2000.
\bibitem{Teschl1} G. Teschl, {\em Algebro-geometric constraints on solitons with respect to quasi-periodic backgrounds}, Bull. London Math. Soc. {\bf 39-4}, 677--684 (2007).
\bibitem{Teschl2} G. Teschl, {\em On the spatial asymptotics of solutions of the Toda lattice}, Discrete Contin. Dyn. Syst. {\bf 27}, 1233--1239 (2010).
\bibitem{vdo} S. Venakides, P. Deift, and R. Oba, {\em The Toda shock problem}, Comm. Pure Appl. Math. {\bf 44}, No.8-9, 1171--1242 (1991).

\end{thebibliography}
\end{document}